\documentclass[
%%% one of
submission
%final
%proceedings
%%% if you compile a final version for the old OJS platform
% , ojs
%%% if all authors have the same affiliation
% , nomarks
]{dmtcsepisciences}

% DON'T LOAD ANY STYLES THAT CHANGE THE PAGE LAYOUT
% AND DON'T CHANGE THE PAGE LAYOUT BY HAND, EITHER.

\usepackage[utf8]{inputenc}
\usepackage{subfigure}

%%%%%%%%%%%%%%%%%%%%%%%%%%%%%%%%%%%  copied from our tex file

%% The amssymb package provides various useful mathematical symbols
\usepackage{amssymb}
%% The amsthm package provides extended theorem environments
%% \usepackage{amsthm}

\usepackage{amsthm}
\usepackage{graphicx}
\usepackage{amsmath}
\usepackage{amsfonts}
\usepackage{hyperref}
\usepackage{xcolor}
\usepackage{dsfont}
\usepackage{scalerel}

\usepackage{pgf,tikz,calc}
\usetikzlibrary{math}
%\usetikzlibrary{snakes}
%\tikzstyle{printersafe}=[snake=snake,segment amplitude=0 pt]

\usetikzlibrary{arrows}
\usetikzlibrary{shapes}
\usepackage{multirow}
\usepackage{latexsym}
\usepackage{setspace}
\PassOptionsToPackage{boxed,section}{algorithm}
\usepackage{algorithm}
\usepackage{algorithmic}
\usepackage{enumerate}
\usepackage{amsmath}			% special AMS math definitions
\usepackage{amssymb}			% special AMS math symbols
\usepackage{url}
\usepackage{setspace}
\usetikzlibrary{arrows}
\usetikzlibrary{shapes}
\usetikzlibrary{decorations.markings}
\usepackage{geometry}
\usepackage{bbm}

\newtheorem{theorem}{\em Theorem}
\newtheorem{proposition}[theorem]{\em Proposition}

\newtheorem{lemma}[theorem]{\em Lemma}
\newtheorem{remark}[theorem]{\em Remark}
\newtheorem{corollary}[theorem]{\em Corollary}

\newtheorem{observation}[theorem]{\em Observation}

%%%%%%%%%%%%%%%%%%%%%%%%%%%%%%%%%%%%%%%%%%%%%%%%%%%%%%%%%%%%%%%%%%%%%%%%

% graphicx is now loaded automatically no need to put this in here anymore.
%
%\usepackage{graphicx}

% We strongly recommend to use natbib. Your colleagues deserve to be
% named in your text. PLEASE, ADAPT YOUR TEXT ACCORDINGLY, such that
% citations are grammatically correct.
\usepackage[round]{natbib}

\author{Hanna Furma\'nczyk\affiliationmark{1}\thanks{hanna.furmanczyk@ug.edu.pl}
  \and Vahan Mkrtchyan\affiliationmark{2}\thanks{vahan.mkrtchyan@gssi.it}
  }
\title{Graph theoretic and algorithmic aspect of the equitable coloring problem in block graphs\footnote{The work of the second author has been partially supported by the Italian MIUR PRIN 2017 Project ALGADIMAR ``Algorithms, Games, and Digital Markets.''}}
% put your affiliation here, not your full address.
% If you like to give away your email or other parts of your address,
% THIS IS NOT THE RIGHT PLACE, your address will change, this paper
% will not.
% Just watch that your personal data that you want to communicate on
% the episcience server is always up to date.
\affiliation{
  % one line per affiliation, no postal codes, grant numbers or similar
  Institute of Informatics, Faculty of Mathematics, Physics and Informatics, University of Gda\'nsk,  Gda\'nsk, Poland\\
  Gran Sasso Science Institute, L'Aquila, Italy}
\keywords{block-graph, equitable coloring, fixed-parameter tractability, W[1]-hardness}
%\MSC 05C15 \sep 05C85 \sep 68R10
% don't try to cheat here, we will check the dates!
\received{2020-10-27}

\revised{2021-09-30, 2022-03-29, 2022-09-28}

\accepted{2022-10-15}

\begin{document}
\publicationdetails{23}{2022}{2}{7}{6860}
\maketitle
\begin{abstract}
  An equitable coloring of a graph $G$ is its (proper) vertex-coloring such that the sizes of any two color classes differ by at most one. In this paper, we consider the equitable coloring problem in block graphs. Recall that the latter are graphs in which each 2-connected component is a complete graph. The problem of equitable coloring remains hard in the class of block graphs. In this paper, we present some graph theoretic results relating various parameters. We use them in order to trace some algorithmic implications, mainly dealing with the fixed-parameter tractability of the problem.
\end{abstract}

%% main text
\section{Introduction}
\label{sec:intro}

\subsection{Some graph theory concepts}
\label{subsec:GraphTheoryConcepts}
In this paper, we consider finite undirected graphs. They do not contain loops or parallel edges.  We refer to \cite{harary} for non-defined concepts on graphs.

Two vertices of a graph $G$ are \emph{independent} 
if and only if there is no edge joining them. A set of vertices is independent, if its vertices are pairwise independent. Let $\alpha(G)$ be the cardinality of a largest independent set in a graph $G$. Similarly, two edges of a graph are independent, if they do not share a vertex. A \emph{matching} is a subset of edges of a graph such that any two edges in it are independent. Let $\nu(G)$ denote the size of a largest matching of $G$. A matching is \emph{perfect} if it covers all the vertices of the graph. A \emph{vertex cover} is a subset of vertices whose removal results in a graph with no edge. The size of a smallest vertex cover is denoted by $\tau(G)$. In any graph $G$, we have
\[\nu(G)\leq \tau(G)\leq 2\nu(G).\]
A \emph{clique} of a graph $G$ is a complete subgraph of $G$. For a graph $G$, let $\omega(G)$ denote the size of a largest clique of $G$. A clique $Q$ is \emph{maximal} in $G$ if and only if there is no other clique in the graph containing $Q$. 
A vertex $v$ is a \emph{cut-vertex}, if $G-v$ contains more connected components than $G$. The number of cut-vertices in a graph $G$ is denoted by $c(G)$. The \emph{line graph} of an undirected simple graph $G$ is a graph $L(G)$ obtained by associating a vertex of $L(G)$ with each edge of $G$ and connecting two vertices in $L(G)$ with an edge if and only if the corresponding edges of $G$ have a vertex in common.
%$V(L(G))=E(G)$ and two vertices are adjacent in $L(G)$ if and only if the corresponding edges of $G$ are adjacent.
If $G$ is a connected graph then let $d(u,v)$ denote the length of a shortest path connecting the vertices $u$ and $v$. For a vertex $u$, its \emph{eccentricity}, denoted by $\epsilon_G(u)$, is defined as $\max_{v\in V}\{d(u,v)\}$. The \emph{radius} of $G$, denoted by $rad(G)$, is $\min_{v\in V}\{\epsilon_G(v)\}$ and its \emph{diameter}, $diam(G)$, is defined as $\max_{v\in V}\{\epsilon_G(v)\}$. The \emph{center} of a graph is the subset of vertices whose eccentricity is equal to the radius of the graph. For any graph $G$, we have
\[rad(G)\leq diam(G)\leq 2\cdot rad(G).\]
Following \cite{gomes:structural}, we define a \emph{cluster graph} as a graph formed from the disjoint union of complete graphs. For a given graph $G$, its \emph{distance to the cluster}, denoted by $dc(G)$, is the smallest number of vertices of $G$, whose removal results in a cluster graph. A set $D$ is called a $dc$-set, if $|D|=dc(G)$ and $G-D$ is a cluster.

A \emph{block} of a graph $G$ is a maximal 2-connected subgraph of $G$.
A graph $G$ is a \emph{block graph}, if each of its blocks is a clique. If $G$ is a block graph, then a vertex is \emph{simplicial} if and only if it is not a cut-vertex. Clearly, the neighbors of a simplicial vertex are in the same clique. A maximal clique in a block graph is \emph{pendant} if and only if it contains exactly one cut-vertex of $G$. Let $p(G)$ be the number of pendant cliques of $G$ while $s(G)$ be the number of simplicial vertices of $G$. Clearly, for any block graph $G$, we have $p(G)\leq s(G)$. A block graph $G$, with at least two blocks, is called a \emph{star of cliques} or a \emph{clique-star}, if $G$ contains a vertex that lies in all cliques of $G$. Observe that this vertex should be the unique cut-vertex of $G$. 
For a vertex $v$ of a block graph $G$, the \emph{clique-degree} of $v$ is the number of cliques of $G$ containing $v$. If $T$ is a clique of $G$, then the \emph{clique-degree} of $T$ is the number of cliques of $G$ that differ from $T$ and intersect $T$. A connected block graph is called a \emph{path of cliques}, if each vertex in it has clique-degree at most two. Alternatively, one can say that a connected block graph is a \emph{path of cliques}, if each clique in it has clique-degree at most two.

We will assign natural numbers to the maximal cliques of a block graph $G$. This number will be called the \emph{level} of a clique. We do it by the following algorithm: all pendant cliques of $G$ are assigned level 1. Then we remove all simplicial vertices of all pendant cliques of $G$ in order to obtain the block graph $G_1$. All pendant cliques of $G_1$ get level 2 in $G$. Then, we remove all simplicial vertices of all pendant cliques of $G_1$ in order to obtain the block graph $G_2$. Then we repeat this process until all blocks of $G$ get their levels. Finally, if we are left with a singleton, we do not assign a level to it. Observe that the star of cliques are exactly those connected block graphs which do not contain blocks of level at least 2.

\subsection{Introduction to some graph problems}
This paper deals with a variant of classical \textsc{Vertex Coloring} problem, namely \textsc{Equitable Coloring}.  
If the set of vertices of a graph $G$ can be partitioned into $k$ classes 
$V_1, V_2, \ldots,V_k$ such that each $V_i$ is an independent set and the condition $||V_i|-|V_j||\leq 1$ holds for every pair ($i, j$) then $G$ is said to be {\it equitably k-colorable}. The smallest integer $k$ for which $G$ is equitably $k$-colorable is known as the {\it equitable chromatic number} of $G$ and it is denoted by $\chi_{=}(G)$. 

This model was introduced by %Meyer
\cite{meyer} and it has attracted attention of many graph theory specialists for almost 50 years. The conducted studies are mainly focused on the proving of known conjectures for particular graph classes (cf. \cite{chenD, kierstead, kostochkadeg, kostochkadeg2}), analysis of the problem’s complexity (cf. \cite{tight}), designing exact algorithms for polynomial cases (cf. \cite{fast}), and approximation algorithms or heuristics for hard cases (cf. \cite{fjk, dsatur}). We know that the \textsc{Equitable Coloring} problem is NP-complete in general case, as a particular case of \textsc{Vertex Coloring}.

{Note that \textsc{Bin Packing Problem with Conflicts} (BPC) is closely related to \textsc{Equitable Coloring}. BPC is defined as follows. We are given a set $V$ of $n$ items of weights $w_1, w_2, \ldots, w_n$, and $k$ identical bins of capacity $c$. Two items $i$ and
$j$ are said to be \emph{conflicting} if and only if they cannot be assigned to the same bin. The problem is
to assign all items in the least possible number of bins while ensuring that the total weight of all
items assigned to a bin does not exceed $c$ and that no bin contains conflicting items. Note that the problem with $c=n/k$ and weights equal to 1 is equivalent to an equitable coloring of the corresponding conflict graph. Some exemplary heuristics for solving \textsc{BPC} can be found in \cite{bpc:2004, bpc:alg}.}

An interesting overview of the results of studies over equitable coloring can be found in \cite{furm:en_book} and \cite{lih}. This issue is very important due to its many applications (creating timetables, task scheduling, transport problems, networks, etc.) (see for example \cite{furm:4sch, furm:3sch}). 

\subsection{Introduction to parameterized complexity}
Very recently a few papers investigating the parameterized complexity of \textsc{Equitable Coloring} have been published (cf. \cite{iterated, fellows, fiala, gomes:structural, EqParamFPT}). Recall that if $\Pi$ is an algorithmic problem and $t$ is a parameter, then the pair $(\Pi, t)$ is called a \emph{parameterized} problem. The parameterized problem $(\Pi, t)$ is \emph{fixed-parameter tractable} (or $\Pi$ is \emph{fixed-parameter tractable with respect to the parameter} $t$) if there is an algorithm $A$ that solves $\Pi$ exactly, whose running-time is $g(t)\cdot poly(size)$. Here $g$ is some (computable) function of $t$, $size$ is the length of the input and $poly$ is a polynomial function. Usually, such an algorithm $A$ is called an FPT algorithm for $(\Pi, t)$. A (parameterized) problem is called \emph{para}NP-\emph{hard}, if it remains NP-hard even when the parameter under consideration is constant. In the classical complexity theory, there is the notion of NP-hardness that indicates that a certain problem is unlikely to be polynomial time solvable. It relies on the assumption P$\neq$NP. The classical \textsc{Satisfiability} problem is an NP-hard problem and any problem such that \textsc{Satisfiability} can be reduced to it is NP-hard, too. Similarly, in parameterized complexity theory there is the notion of W[1]-\emph{hardness}, which indicates that a certain parameterized problem is unlikely to be fixed-parameter tractable. It relies on the assumption FPT$\neq$W[1] which says that not all problems from W[1] are fixed-parameter tractable. The \textsc{Clique} problem where the parameter under consideration is $k$ - the size of the clique - is an example of a W[1]-hard problem, and any problem such that \textsc{Clique} with respect to $k$ can be FPT-reduced to it, is W[1]-hard, too. Recall that an \emph{FPT reduction} between two parameterized problems $(\Pi_1, t_1)$ and $(\Pi_2, t_2)$ is an algorithm $R$ that maps instances of $\Pi_1$ to those of $\Pi_2$, such that (a) for any instance $I_1\in \Pi_1$, we have $I_1$ is a yes-instance of $\Pi_1$ if and only if $R(I_1)$ is a yes-instance of $\Pi_2$, (b) there is a computable function $h$ such that for any instance $I_1\in \Pi_1$ $t_2(R(I_1))\leq h(t_1(I_1))$, (c) there is a computable function $g$ such that $R$ runs in time $g(t_1)\cdot poly(size)$. The reader can learn more about this topic from \cite{FPTbook}, that can be a good guide for algorithmic concepts that are not defined in this paper.

%Fellows et al. in 
\cite{fellows} showed that the \textsc{Equitable Coloring} problem is W[1]-hard, parameterized by the treewidth plus the number of colors. %Fiala et al.
\cite{fiala} considered another structural parameter - vertex cover. They showed that the problem is FPT with respect to it. %Gomes et al. 
\cite{gomes:structural} established new results for some other parameters: W[1]-hardness for pathwidth and feedback vertex set, and fixed parameter tractability for distance to cluster and co-cluster, as well as distance to disjoint paths of bounded length. In the same paper %\cite{gomes:structural}, 
the authors consider also kernelization for the problem of \textsc{Equitable Coloring}. They presented a linear kernel for the distance to clique parameter and a cubic
kernel when parameterized by the maximum leaf number. In the second paper, %Gomes et al. 
\cite{EqParamFPT} considered parameterized complexity of \textsc{Equitable Coloring} problem for subclasses of perfect graphs. They showed W[1]-hardness for block graphs when parameterized by the number of colors, and for $K_{1,4}$-free interval graphs when parameterized by treewidth, number of colors and maximum degree. 

\subsection{Outline of the paper}
In this paper, we further the study of \textsc{Equitable Coloring} on block graphs. As shown in \cite{EqParamFPT}, this class is a non-trivial subclass of chordal and perfect graphs. For block graphs, it is shown in \cite{EqParamFPT} that the problem is W[1]-hard with respect to the treewidth, diameter and the number of colors. In particular, it means that under the standard assumption FPT$\neq$W[1] in parameterized complexity theory, the problem is unlikely to be polynomial time solvable in block graphs. In this paper, we investigate parameterized complexity of \textsc{Equitable Coloring} of block graphs with respect to many other parameters thus completing the state of art in this area.
The paper is organized as follows. We start with simple observations followed from the literature that affect our research. In Section \ref{sec:spanning}, we investigate the problem with respect to some parameters that are related to minimum and maximum number of leaves in a spanning tree of a graph. In the following section we prove that the \textsc{Equitable Coloring} is FPT with respect to the domination number for block graphs showing at the same time that the problem with this parameter is much easier for block graphs than for general graphs. In Section \ref{sec:independence}, we investigate the problem with respect to some parameters that are related to the independence of vertices and edges of graphs. In Section \ref{sec:otherparameters}, we consider other parameters and relate them in block graphs. Finally, we conclude the paper by presenting some open problems that we feel deserve to be investigated. 

\section{Some observations}
%\section{Literature corollaries and minor observations}
Before we start presenting our results, we list some observations and corollaries.

 \begin{lemma}[\cite{Sasak:2010}]
	\label{lem:Redparam} Let $\Pi$ be an algorithmic problem, and let $k_1$ and $k_2$ be some parameters. Assume that there is a (computable) function $g:\mathbb{N}\rightarrow \mathbb{N}$ such that for any instance $I$ of $\Pi$, we have $k_1(I)\leq g(k_2(I))$. Then, if $\Pi$ is FPT with respect to $k_1$, it is also FPT with respect to $k_2$.
\end{lemma}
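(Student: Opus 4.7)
The plan is to compose the FPT algorithm guaranteed by the hypothesis with the bound $k_1(I)\leq g(k_2(I))$. Concretely, let $A$ be an algorithm that decides $\Pi$ in time $f(k_1(I))\cdot p(|I|)$, where $f:\mathbb{N}\to\mathbb{N}$ is computable and $p$ is a polynomial. Given an arbitrary instance $I$, I simply run $A$ on $I$ and re-express its running time in terms of $k_2(I)$. The correctness is inherited from $A$ verbatim, since the algorithm itself is unchanged; only the accounting of the running time changes.

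For the running-time analysis I would first replace $f$ by its ``monotone closure'' $\tilde{f}(n):=\max\{f(m):m\leq n\}$, which is non-decreasing and still computable whenever $f$ is, and which can only increase the time bound. Using the hypothesis $k_1(I)\leq g(k_2(I))$ together with the monotonicity of $\tilde{f}$, the running time of $A$ on $I$ is bounded by
\[
\tilde{f}(k_1(I))\cdot p(|I|)\;\leq\;\tilde{f}(g(k_2(I)))\cdot p(|I|).
\]
Setting $h:=\tilde{f}\circ g$, which is computable as the composition of two computable functions, this shows that $A$ solves $\Pi$ in time $h(k_2(I))\cdot p(|I|)$, which is exactly the FPT requirement with respect to the parameter $k_2$.

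The main (and only) subtlety is ensuring that the inequality $k_1(I)\leq g(k_2(I))$ can actually be lifted to an inequality between running times, which is handled by the monotonization step above; without that step, a non-monotone $f$ would not directly give the desired bound. Computability of $h$ and preservation of the polynomial factor $p(|I|)$ are then immediate, and no further work is required.
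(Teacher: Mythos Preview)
Your proof is correct and is the standard argument for this folklore result; the monotonization of $f$ is exactly the point that needs care, and you handle it properly. Note that the paper does not supply its own proof of this lemma but simply cites it from \cite{Sasak:2010}, so there is no in-paper argument to compare against.
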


\begin{theorem}[\cite{EqParamFPT}]
\label{thm:gomes:diam4}
EQUITABLE COLORING of block graphs of diameter
at least four parameterized by the number of colors and treewidth is $W[1]$-hard.\label{diam4}
\end{theorem}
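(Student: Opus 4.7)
The plan is to design an FPT reduction from a parameterized problem already known to be W[1]-hard, with the parameter matching what we need. A natural candidate is \textsc{Unary Bin Packing} parameterized by the number of bins (Jansen, Kratsch, Marx, Schlotter), which is W[1]-hard: given $n$ items with positive integer weights $w_1,\dots,w_n$ (encoded in unary), a capacity $c$, and a number of bins $k$, decide whether the items can be packed into $k$ bins without exceeding capacity. An equivalent formulation in the language of the present paper is \textsc{BPC} on an edgeless conflict graph with unit-weight items, which in turn is exactly \textsc{Equitable Coloring} on suitable graphs; the challenge is therefore to ``re-wrap'' the weights $w_i$ using the combinatorial structure of a block graph.

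First, I would build the main body of the block graph $G$ as follows. For each item $i$ I introduce a pendant clique $Q_i$ of size $w_i$; the vertices of $Q_i$ will be forced to sit in a common color class, representing the bin assigned to item $i$. To glue everything together and to guarantee the block-graph structure, I would introduce a single cut-vertex $u$ contained in every $Q_i$; then $G$ would be a clique-star in the sense of the preliminaries, whose diameter is only $2$. To push the diameter up to $4$, I would replace the direct incidence of $u$ in each $Q_i$ with a short ``path of cliques'' of length two that bridges $u$ and $Q_i$: concretely, between $u$ and each $Q_i$ insert two edges (viewed as two $K_2$-blocks) via fresh cut-vertices $x_i,y_i$. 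This lifts the eccentricity of the simplicial vertices of different $Q_i$'s to exactly $4$, while keeping every block complete. Finally, I would add ``padding'' cliques or isolated-like gadgets attached to the structure so that the total number of vertices equals $kc$, forcing every color class in a $k$-coloring to have size exactly $c$.

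Next, I would verify the reduction. In the yes direction, a packing of the items into $k$ bins yields an equitable $k$-coloring by assigning all vertices of $Q_i$ the color of the bin of item $i$; the bridging vertices $x_i,y_i$ and the padding vertices can be colored greedily thanks to their small neighborhoods, and the capacity condition translates to equality of class sizes. In the no direction, any equitable $k$-coloring must place the $w_i$ vertices of $Q_i$ in a single color class (they form a clique), and since every class has size $c$, the function $i \mapsto \text{color of }Q_i$ is a feasible packing. Running times and parameter bounds are immediate because all clique sizes are at most $\max_i w_i$, and $G$ is a block graph of diameter exactly $4$; the treewidth of $G$ equals $\omega(G)-1 = \max_i w_i - 1$, which is bounded by the unary encoding size and, crucially, by a function of the source parameter after the standard ``scaling'' trick that replaces large weights by a number of unit items fused through a dedicated gadget, keeping $\omega(G)$ a function of $k$ alone.

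\textbf{Main obstacle.} The delicate point is the simultaneous control of the three quantities: (i) the number of colors must remain $k$ (the source parameter), (ii) the treewidth must be bounded by a function of $k$ alone, yet (iii) the diameter must be at least $4$. Increasing diameter naively by lengthening paths of cliques tends to add small-clique blocks that drop treewidth but also create new ``leaves'' that disturb the equitability bookkeeping; keeping the clique sizes bounded conflicts with the need to encode arbitrary weights. The crux of the proof is thus the weight-encoding gadget that replaces each large $w_i$ by a cluster of unit-weight blocks sharing a single auxiliary cut-vertex, so that the only way to obtain an equitable coloring is to assign each such cluster to a single color class --- thereby recovering the bin assignment while keeping $\omega(G)$ independent of the weights. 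Invoking Lemma~\ref{lem:Redparam} then transfers the W[1]-hardness to the stated combined parameter.
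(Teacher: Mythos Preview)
Note first that the paper does not prove this theorem at all; it is quoted from Gomes, Lima and dos Santos \cite{EqParamFPT} in the ``Some observations'' section without argument. As the present paper later recalls (in the proof of Corollary~\ref{cor:Blockgraphoddk}), that reduction sends an instance $(A,k,B)$ of unary \textsc{Bin Packing} to the equitable $(k{+}1)$-colorability of a block graph $G_I$; in the construction of \cite{EqParamFPT} the maximal cliques all have size $k{+}1$, the weight $a_i$ is encoded by the \emph{number} of pendant cliques attached at a cut-vertex rather than by their size, treewidth is exactly $k$, and the diameter comes out as~$4$.

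Your proposal, by contrast, contains a fatal error. You claim that the $w_i$ vertices of the clique $Q_i$ ``will be forced to sit in a common color class'' and, in the converse direction, that ``any equitable $k$-coloring must place the $w_i$ vertices of $Q_i$ in a single color class (they form a clique)''. This is exactly backwards: in any proper vertex coloring the vertices of a clique receive pairwise \emph{distinct} colors, so a clique of size $w_i$ forces $w_i$ different colors to appear, not one color repeated $w_i$ times. There is therefore no map ``$i\mapsto$ color of $Q_i$'' and no correspondence with a bin assignment; both directions of your equivalence fail. The secondary issue --- bounding treewidth by a function of $k$ while using cliques of size $\max_i w_i$ --- is already moot once the coloring argument is seen to be inverted, but it is also not resolved by what you wrote: encoding $w_i$ as a clique size makes treewidth depend on the weights, and the unspecified ``scaling trick'' gives no concrete gadget that simultaneously keeps $\omega(G)$ bounded by a function of $k$ and still transmits the weight information through equitability.
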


\begin{theorem}[\cite{gomes:structural}]
EQUITABLE COLORING is FPT when parameterized
by the distance to cluster. \label{thm:gomes:dc}
\end{theorem}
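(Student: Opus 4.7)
The plan is to use the Lenstra-ILP framework for parameters of ``distance to triviality'' type. Let $D$ be a $dc$-set of $G$ with $|D|=d$, so that $G-D$ is a disjoint union of cliques $C_1,\dots,C_t$.

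First, I would compute such a set $D$ in FPT time in $d$ by the standard $P_3$-branching rule: while the current graph contains an induced $P_3$ on vertices $u,v,w$, at least one of them must lie in $D$, so branch on the three choices; this gives a search tree of depth at most $d$ and $3^d$ leaves.

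Second, I would represent an equitable $k$-coloring of $G$ in a compressed form on $D$. In such a coloring, every color class $V_i$ intersects $D$ in an independent set $S_i\subseteq D$. Because two color classes that have the same intersection with $D$ and the same size $|V_i|\in\{\lfloor n/k\rfloor,\lceil n/k\rceil\}$ are freely interchangeable, the coloring is determined (up to renaming colors) by integer counts $y_{S,s}$ giving the number of color classes of $D$-type $S$ and size $s$. This yields at most $2\cdot 2^d$ variables, with linear constraints enforcing that the $y_{S,s}$ cover $D$ exactly, sum to $k$, and split correctly between the two admissible class sizes.

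Third, I would handle the cliques outside $D$. Inside a clique $C_j$, every vertex must receive a distinct color, and a vertex $v$ is compatible only with class-types $S$ such that $S\cap(N(v)\cap D)=\emptyset$. Vertices of $C_j$ sharing the same $D$-neighborhood are interchangeable, so $C_j$ is described by its type-profile in $\mathbb{Z}_{\geq 0}^{2^d}$. I would group cliques by their type-profile and, for each profile $\pi$ and each legal ``pattern'' $\sigma$ that routes the type-$T$ vertices of $\pi$ to compatible class-types $S$, introduce an integer variable $z_{\pi,\sigma}$ counting how many cliques of profile $\pi$ use pattern $\sigma$. Additional linear constraints tie the $z_{\pi,\sigma}$ to the $y_{S,s}$ by demanding that, summed over all cliques, the number of vertices placed in classes of each $D$-type agrees with what the $y$-counts require. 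Finally, iterate over $k\in\{\chi(G),\dots,n\}$ and invoke Lenstra's algorithm on each resulting ILP; a feasible solution is translated back into an explicit equitable coloring.

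The main obstacle I expect is step three, where one must show that the total number of profile-pattern pairs remains bounded by a function of $d$ alone, not by the number or sizes of the cliques. This requires a ``bundling'' argument: within each clique, only the counts of each of the $2^d$ vertex-types matter, and the per-clique feasibility of a pattern is a transportation/matching condition in a space of size $f(d)$, so patterns can be encoded by $f(d)$ counts. Once this bookkeeping is done and the ILP is shown to have $f(d)$ variables with coefficients polynomial in $n$, Lenstra's theorem on integer programming in fixed dimension delivers the claimed FPT algorithm.
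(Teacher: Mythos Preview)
The paper under review does not prove this theorem; it is quoted from \cite{gomes:structural} and used as a black box. So there is no ``paper's own proof'' to compare against here --- your sketch can only be judged on its own merits and against what the cited source actually does.

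Your overall strategy (compute a $dc$-set by $P_3$-branching, guess the $D$-type of each color class, and feed the remaining counting problem to Lenstra) is indeed the flavour of argument used by Gomes et~al. The gap is in your step three. You introduce a variable $z_{\pi,\sigma}$ for every pair (clique profile $\pi$, routing pattern $\sigma$), and you acknowledge that the crux is to bound the number of such pairs by a function of $d$. But the ``bundling'' you propose does not achieve this: a profile is a vector in $\mathbb{Z}_{\ge 0}^{2^d}$ whose entries can be as large as the clique size, and a pattern is likewise an integer matrix with entries up to the clique size. Saying that ``patterns can be encoded by $f(d)$ counts'' only bounds the \emph{dimension} of the encoding, not the \emph{number of distinct values} it can take; the latter is polynomial in $n$, not in $f(d)$. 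Hence the ILP you write down does not have $f(d)$ variables, and Lenstra's theorem does not apply as stated.

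The cited proof circumvents this by a different ILP formulation in which the variables live purely at the level of class-types and vertex-types (both bounded by $2^{O(d)}$), and the per-clique injectivity constraint is handled not by enumerating patterns but by a separate combinatorial feasibility argument once the global type-counts are fixed. If you want to repair your sketch, that is the direction to push: avoid per-clique variables entirely, and show that once the $y_{S,s}$ and the aggregate type-to-type counts are fixed, the existence of a consistent assignment inside every clique can be certified without blowing up the dimension of the ILP.
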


\begin{theorem}[\cite{fiala}]
EQUITABLE COLORING is FPT when parameterized by vertex cover.\label{thm:fiala:cover}
\end{theorem}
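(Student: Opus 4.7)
My plan is to reduce the decision problem to an integer linear program (ILP) whose number of variables is bounded by a function of $\tau=\tau(G)$, and then invoke Lenstra's theorem (ILP with $d$ variables is solvable in time $f(d)\cdot\mathrm{poly}(L)$). First, I would compute a minimum vertex cover $C$ of size $\tau$ in FPT time using a standard branching algorithm. Since $I:=V\setminus C$ is independent, I would classify each $v\in I$ by its \emph{type} $N(v)\subseteq C$; there are at most $2^\tau$ types, and for each type $T$ let $n_T$ be the number of vertices of that type. The key observation is that any proper $k$-coloring of $G$ restricts to a partition $\pi=\{C_1,\dots,C_m\}$ of $C$ into independent sets of $G[C]$ (so $m\leq\tau$), while the remaining $k-m$ color classes contain only $I$-vertices and are mutually interchangeable.

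For each $k\in\{\chi(G),\dots,n\}$ and each partition $\pi$ of $C$ into independent subsets (at most $B_\tau\leq\tau^\tau$ choices), I would set up the following ILP. For every type $T$ and every $i\leq m$ with $C_i\cap T=\emptyset$, introduce $x_{T,i}$, the number of $T$-vertices placed in $C_i$; and for every $T$ introduce $y_T$, the number of $T$-vertices placed into the $k-m$ free classes. The constraints are
\[
\sum_{i:\,C_i\cap T=\emptyset} x_{T,i}+y_T=n_T,\qquad |C_i|+\sum_{T:\,C_i\cap T=\emptyset} x_{T,i}\in\{\lfloor n/k\rfloor,\lceil n/k\rceil\},
\]
together with $(k-m)\lfloor n/k\rfloor\leq\sum_T y_T\leq(k-m)\lceil n/k\rceil$ and non-negativity on all variables. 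The number of variables is at most $2^\tau(\tau+1)$, so Lenstra's algorithm decides feasibility in time $f(\tau)\cdot\mathrm{poly}(n)$. Iterating over $k$ and $\pi$ contributes a factor of at most $n\cdot\tau^\tau$, which keeps the whole procedure FPT in $\tau$.

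The main obstacle I foresee is that the $k-m$ free classes are pairwise indistinguishable, yet they can individually be of two different sizes; introducing a variable per free class would inflate the variable count beyond an FPT bound. The fix, built into the ILP above, is to aggregate into a single variable $y_T$ per type and enforce equitability only through the range $[(k-m)\lfloor n/k\rfloor,\,(k-m)\lceil n/k\rceil]$ on the total $\sum_T y_T$; since $I$-vertices are pairwise non-adjacent, any feasible value of this total can actually be realized by a concrete distribution of these vertices into $k-m$ classes each of size $\lfloor n/k\rfloor$ or $\lceil n/k\rceil$. A secondary subtlety is that $k$ itself is not bounded by any function of $\tau$ (indeed $\chi_=(K_{1,n})=\Theta(n)$ while $\tau(K_{1,n})=1$), so we cannot avoid iterating over $k$, but this costs only a polynomial factor and therefore does not break the FPT bound.
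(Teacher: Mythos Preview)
The paper does not prove this theorem at all: it is quoted verbatim from \cite{fiala} as background, with no accompanying argument. So there is no ``paper's own proof'' to compare against.

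That said, your proposal is essentially the standard argument (and, to my knowledge, the one used in \cite{fiala}): classify the independent-set vertices by their neighborhood in a size-$\tau$ vertex cover, branch over the at most $B_\tau$ ordered partitions of the cover into independent parts, and decide the remaining packing by an ILP with $O(2^\tau\tau)$ variables via Lenstra's algorithm. Two small remarks. First, the membership constraint $|C_i|+\sum_T x_{T,i}\in\{\lfloor n/k\rfloor,\lceil n/k\rceil\}$ is not literally linear; you should say explicitly that since these two values differ by at most one, the pair of inequalities $\lfloor n/k\rfloor\le |C_i|+\sum_T x_{T,i}\le \lceil n/k\rceil$ suffices over the integers. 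Second, your aggregation of the $k-m$ free classes into a single total $\sum_T y_T$ is sound precisely because the free vertices are pairwise non-adjacent, so any integer total in the interval $[(k-m)\lfloor n/k\rfloor,(k-m)\lceil n/k\rceil]$ can be split into $k-m$ parts of the two allowed sizes; combined with the fact that the grand total of all class sizes is forced to be $n$, this automatically yields the correct global count of ``large'' versus ``small'' classes. With these clarifications the argument is correct.
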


\begin{theorem}[\cite{fellows}]
EQUITABLE COLORING is $W[1]$-hard, parameterized by treewidth.
\end{theorem}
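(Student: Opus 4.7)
The plan is to exhibit a parameterized reduction from a problem already known to be W[1]-hard to \textsc{Equitable Coloring} with treewidth as the parameter. A natural source is \textsc{Unary Bin Packing} parameterized by the number of bins $k$, which is known to be W[1]-hard: given items of unary weights $w_1, \ldots, w_n$ summing to $kC$, decide whether they can be packed into $k$ bins of capacity $C$. The structural parallel with equitable coloring is striking, since distributing $n$ objects into $k$ groups of equal total weight is essentially what an equitable $k$-coloring with classes of size $C$ does.

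Given such an instance, I would construct a graph $G$ on $kC$ vertices, organized into $n$ \emph{item gadgets} plus a small amount of padding. The gadget for item $i$ must force its $w_i$ representative vertices all to share a single color (so that the item is ``placed in one bin''), while contributing only a bounded amount to the treewidth. A naive realisation as a $K_{w_i}$ clique fails, because then the treewidth would depend on $\max_i w_i$, which in a unary instance can be arbitrarily large. Instead, I would build each gadget as a path-like chain of small cliques linked by shared vertices, together with auxiliary ``anchor'' vertices that propagate a single colour choice along the chain. The gadgets are then attached to a common palette realised as a $K_k$, forcing the coloring to use exactly $k$ colours; equitability then reduces to every colour class having size $C$, which is precisely the bin capacity constraint.

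The main obstacle, and the heart of any such reduction, is the gadget design: one must simultaneously enforce a rigid ``same colour'' behaviour on many vertices while maintaining a tree decomposition whose bag size depends only on $k$. I would address it by constructing an explicit tree decomposition of $G$: decompose each item gadget along its chain of small cliques using bags of bounded size, and glue these decompositions through bags containing the $k$ palette vertices, giving overall treewidth $O(k)$. Finally, I would verify equivalence: a valid packing yields an equitable $k$-colouring by colouring each item gadget with the colour of its bin and distributing padding vertices to balance the classes; conversely, any equitable $k$-colouring of $G$ must, by the rigidity of the gadgets and the palette, assign a single colour to the vertices of each item, inducing a packing in which every bin receives exactly $C$ units. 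Since the parameter $k$ of the source problem bounds the treewidth of $G$ by a computable function, this is an FPT-reduction, establishing the W[1]-hardness claimed.
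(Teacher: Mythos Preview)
The paper does not supply its own proof of this theorem; it is quoted from Fellows et al.\ \cite{fellows} purely as background, so there is no argument in the paper to compare your sketch against.

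On the merits of your sketch, the decisive gap is precisely the step you flag but do not carry out: forcing the $w_i$ representative vertices of an item to take the \emph{same} colour while keeping treewidth bounded in $k$. Proper-colouring constraints can only push vertices \emph{apart}; any equality gadget built from a chain of small cliques must introduce auxiliary vertices that absorb the remaining $k-1$ colours. If you propagate equality along a chain of $w_i$ item vertices via $w_i-1$ such links, item $i$ contributes $w_i$ vertices to its chosen colour and $w_i-1$ vertices to every other colour. Summing over all items, the size of colour class $c$ becomes $kC - n + n_c$, where $n_c$ is the \emph{number} of items assigned to bin $c$. Equitability therefore forces every bin to receive the same number of items, a constraint absent from \textsc{Bin Packing}. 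Fixed ``padding vertices'' cannot repair this, because the imbalance depends on the solution rather than the instance. To make the reduction go through you would need either gadgets whose auxiliary vertices contribute identically to all $k$ colours irrespective of the item's chosen colour, or a preprocessing step that equalises the number of items per bin in the source instance; neither is present in your outline, and without one the equivalence in the reverse direction fails.
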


\begin{theorem}[\cite{EqParamFPT}]
EQUITABLE COLORING is FPT when parameterized
by the treewidth of the complement graph.
\end{theorem}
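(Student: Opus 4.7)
The plan is to exploit the fact that a small treewidth of $\overline{G}$ severely constrains the independent sets of $G$. Specifically, every independent set of $G$ is a clique of $\overline{G}$, and any clique in a graph of treewidth $t$ has size at most $t+1$; hence $\alpha(G)\leq tw(\overline{G})+1$. Consequently, in any equitable $k$-coloring of $G$ every color class has size at most $t+1$, and the common part-size $q=\lfloor n/k\rfloor$ lies in $\{1,\ldots,t+1\}$. So it suffices to consider only $t+1$ values of $q$; for each, the question becomes whether $V(G)$ admits a partition into cliques of $\overline{G}$ of sizes $q$ and $q+1$ with multiplicities $k-r$ and $r$ respectively, where $r=n-qk$.

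For each fixed $q$ I would run a dynamic program over a nice tree decomposition $(T,\{X_v\}_{v\in V(T)})$ of $\overline{G}$ of width $t$. The crucial structural fact is that every clique of $\overline{G}$ --- and hence every prospective color class --- is entirely contained in some bag $X_v$, which I designate as its ``home'' bag. The DP state at a bag $X_v$ records a partition of $X_v$ into \emph{pending} groups (each a clique of $\overline{G}$ that will be finalised as a color class at the present bag or at an ancestor) together with a marking of the vertices that already belong to color classes finalised lower in the subtree. Since $|X_v|\leq t+1$, the number of admissible state patterns is bounded by a function of $t$ only (roughly $B_{t+1}\cdot 2^{t+1}$, where $B_{t+1}$ is a Bell number). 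To keep the global counts of finalised classes out of the state, I attach to each state a generating polynomial whose coefficient of $x^j$ encodes feasibility with exactly $j$ color classes finalised so far in the subtree; then introduce and forget transitions become linear updates on these polynomials, and join transitions reduce to polynomial multiplication of compatible states.

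Reading off the appropriate coefficient at the root decides equitable $k$-colorability for each $k$ with $\lfloor n/k\rfloor=q$, so iterating over $q\in\{1,\ldots,t+1\}$ and over all compatible $k$ returns $\chi_{=}(G)$ in total time $f(t)\cdot\mathrm{poly}(n)$. I expect the main obstacle to be the careful formulation of the forget-transition --- the group of the forgotten vertex must be finalised at exactly that step and must have target size in $\{q,q+1\}$ --- and of the join-transition, where one must enforce that the pending partitions inherited from the two children coincide on $X_v$ before multiplying the attached polynomials. Once this bookkeeping is in place, fixed-parameter tractability with respect to $tw(\overline{G})$ follows.
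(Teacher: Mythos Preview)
The present paper does not prove this statement: it is quoted from \cite{EqParamFPT} in the list of background observations (Section~2) and is used only to derive the corollary about complements of block graphs that immediately follows it. There is therefore no proof in this paper against which to compare your proposal.

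For what it is worth, your plan is a reasonable route to the result and rests on the correct structural facts: $\alpha(G)=\omega(\overline{G})\le tw(\overline{G})+1$ bounds the size of every colour class, and every clique of a graph lies inside some bag of any tree decomposition, so each prospective colour class can be treated locally in a decomposition of $\overline{G}$. The dynamic programme you sketch is the natural one. Your own caution about the join step is warranted: with your ``marked'' formulation a bag vertex may arrive marked from one child while still pending in the other, and one must decide how to merge these consistently. A clean alternative is to defer finalisation to the moment the \emph{last} member of a class is forgotten and to store, for each pending group, the number of its already-forgotten members (a number bounded by $t+1$); then at a join the two partitions of $X_v$ are required to coincide and the hidden counts simply add. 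Either convention keeps the number of states per bag bounded by a function of $t$, and the $f(t)\cdot\mathrm{poly}(n)$ running time follows.
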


Directly, we have
\begin{corollary}
EQUITABLE COLORING of complements of block graphs with fixed clique number is polynomialy solvable.
\end{corollary}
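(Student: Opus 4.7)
The plan is to reduce the corollary directly to the previous theorem by observing that the treewidth of a block graph is controlled by its clique number. Let $G$ be a block graph with $\omega(G)\leq k$ fixed, and let $H=\overline{G}$ be its complement. We wish to equitably color $H$; equivalently, apply the theorem on EQUITABLE COLORING parameterized by the treewidth of the complement, taking the input graph to be $H$ and its complement to be $\overline{H}=G$.

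First, I would bound $\mathrm{tw}(G)$ in terms of $\omega(G)$. Block graphs are chordal (every cycle of length at least four has a chord, since each block is a clique), and for any chordal graph the treewidth equals the clique number minus one. Alternatively, one can give an explicit tree decomposition: take the block-cut tree of $G$, associate to each block a bag equal to that block's vertex set, and to each cut-vertex a bag consisting of that single vertex; since each block is a clique of size at most $\omega(G)$, every bag has size at most $\omega(G)$, so $\mathrm{tw}(G)\leq \omega(G)-1\leq k-1$.

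Next I would plug this into the cited theorem: EQUITABLE COLORING is FPT when parameterized by the treewidth of the complement graph, so there is an algorithm solving EQUITABLE COLORING on $H$ in time $f(\mathrm{tw}(\overline{H}))\cdot \mathrm{poly}(|V(H)|)$ for some computable $f$. Substituting $\overline{H}=G$ and using $\mathrm{tw}(G)\leq k-1$, the running time becomes $f(k-1)\cdot \mathrm{poly}(n)$. Because $k$ is fixed, $f(k-1)$ is a constant, and hence the total running time is polynomial in $n$.

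There is no real obstacle here; the only point that needs a sentence of justification is the treewidth-versus-clique-number bound for block graphs, which follows either from chordality or from the explicit block-cut tree decomposition mentioned above. Once this is in place, the corollary is an immediate application of Lemma~\ref{lem:Redparam} (with $k_1=\mathrm{tw}(\overline{H})$ and $k_2=\omega(\overline{H})$) together with the preceding theorem.
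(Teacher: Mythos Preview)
Your proposal is correct and matches the paper's intended argument. The paper gives no explicit proof (it simply writes ``Directly, we have'' before the corollary), but the implied reasoning is exactly what you wrote: for a block graph $G$ with fixed $\omega(G)$, the treewidth of $G$ is bounded (since block graphs are chordal, $\mathrm{tw}(G)=\omega(G)-1$), and hence the FPT algorithm parameterized by the treewidth of the complement runs in polynomial time on $\overline{G}$.
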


\section{Equitable coloring and number of leaves in a spanning tree}
\label{sec:spanning}

In this section, we consider block graphs and the \textsc{Equitable Coloring} problem from the perspective of the number of leaves in a spanning tree, their minimum and maximum number.

Let $MinLeaf(G)$ ($MaxLeaf(G)$) be the smallest (largest) number of leaves in any spanning tree of $G$. $MaxLeaf$ was considered 
independently in \cite{eq_partition} and \cite{gomes:structural}. In particular, \cite{eq_partition} shows that \textsc{Equitable Coloring} is FPT with respect to $MaxLeaf$ in general (not necessarily block) graphs. Note that these two parameters, $MinLeaf$ and $MaxLeaf$, are NP-hard to compute in arbitrary graphs. Below we present two observations that imply that these two parameters can be easily computed in the class of block graphs.

\begin{proposition}Let $G$ be a connected block graph. Then $MinLeaf(G)$ coincides with the number of pendant cliques in $G$.
\end{proposition}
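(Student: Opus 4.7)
The plan is to establish $MinLeaf(G)=p(G)$ by two matching inequalities.

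For the lower bound $MinLeaf(G)\geq p(G)$, I fix an arbitrary spanning tree $T$ of $G$ and show that each pendant clique must harbor at least one leaf of $T$; since distinct pendant cliques share no simplicial vertex, summing then yields the bound. Consider a pendant clique $Q$ with unique cut-vertex $v$ and simplicial vertices $u_1,\ldots,u_k$ (so $k\geq 1$). Because every neighbor of each $u_i$ in $G$ lies inside $Q$, the components of $T-v$ that touch $\{u_1,\ldots,u_k\}$ are subtrees $C_1,\ldots,C_r$ of $T$ wholly contained in $Q\setminus\{v\}$, each joined to $v$ in $T$ by exactly one edge at some $w_i\in C_i$. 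If $|C_i|=1$, its lone vertex is a leaf of $T$. If $|C_i|\geq 2$, then $C_i$, being a tree, has at least two leaves, and any leaf $x$ of $C_i$ different from $w_i$ has $T$-degree $1$ (its only possible neighbor outside $C_i$ would be $v$, but $v$ connects into $C_i$ only at $w_i$), so $x$ is a leaf of $T$.

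For the upper bound $MinLeaf(G)\leq p(G)$, I construct a spanning tree $T=\bigcup_B T_B$, where $T_B$ is a judiciously chosen spanning tree of each block $B$ of $G$. Since every cycle of a block graph is contained in a single block, this union is acyclic, and connectivity of $G$ together with the block-cut structure guarantees $T$ is connected, hence a spanning tree. Cut-vertices cannot be leaves of $T$, since each lies in at least two blocks and has degree $\geq 1$ in each $T_B$, giving total $T$-degree $\geq 2$; simplicial vertices are leaves of $T$ precisely when they are leaves of their unique $T_B$. I therefore take each $T_B$ to be a Hamiltonian path of the clique $B$ with cut-vertices placed at the endpoints whenever possible: for a pendant block with cut-vertex $v$ and simplicial vertices $u_1,\ldots,u_k$ the path $v,u_1,\ldots,u_k$ contributes exactly one leaf ($u_k$), while a non-pendant block with cut-vertices $v_1,\ldots,v_s$ ($s\geq 2$) and simplicial vertices $u_1,\ldots,u_t$ uses the path $v_1,u_1,\ldots,u_t,v_2,\ldots,v_s$, both of whose endpoints are cut-vertices, contributing zero leaves. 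Summing over blocks gives exactly $p(G)$ leaves.

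The one substantive step is the lower bound, whose core is the case split on $|C_i|$; it hinges on the fact that simpliciality locks the $u_i$ inside $Q$ in $T$ just as in $G$. One minor caveat is worth noting: if $G$ itself is a single clique with at least two vertices then $p(G)=0$ while $MinLeaf(G)=2$, so the proposition should be read under the (natural) assumption that $G$ contains at least one cut-vertex, equivalently at least two blocks.
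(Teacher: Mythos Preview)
Your proof is correct and follows essentially the same two-step approach as the paper: a per-pendant-clique lower bound and an explicit spanning tree built from Hamiltonian paths in each block with cut-vertices at the ends. Your lower bound is argued in more detail than the paper's one-line observation, and you correctly flag the edge case $G\cong K_n$ (no cut-vertex, hence $p(G)=0$ yet $MinLeaf(G)=2$), which the paper silently excludes.
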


\begin{proof}
%Let $P(G)$ be the number of pendant cliques of $G$. 
Observe that any spanning tree of $G$ has at least one degree-one vertex in a pendant clique of $G$.  Thus, $MinLeaf(G)\geq p(G)$.  Moreover, any simplicial vertex of a pendant clique can be made as a leaf in the spanning tree with the smallest number of leaves.

In order to show the converse inequality, let us note that we can pick a Hamiltonian path in each non-pendant clique that begins and ends in a cut-vertex and a Hamiltonian path on each pendant clique that begins with its cut-vertex. When we join all these Hamiltonian paths, we obtain a spanning tree of $G$ that has one leaf per pendant clique. The proof is complete.
\end{proof}
\begin{proposition}
Let $G$ be a connected block graph. Then $MaxLeaf(G)$ coincides with the number of simplicial vertices in $G$.\label{prop:maxleaf}
\end{proposition}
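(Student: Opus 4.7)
I would prove the equality by establishing the two inequalities $MaxLeaf(G)\le s(G)$ and $MaxLeaf(G)\ge s(G)$ separately.

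For the upper bound, the key observation is that if $v$ is a leaf of a spanning tree $T$ of $G$, then $T-v$ is still connected and spans $G-v$, so $G-v$ is connected as well; thus $v$ is not a cut-vertex of $G$. Every leaf of every spanning tree of $G$ is therefore simplicial, which immediately gives $MaxLeaf(G)\le s(G)$.

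For the lower bound, I would exhibit a single spanning tree $T^{*}$ in which every simplicial vertex of $G$ appears as a leaf. Block by block, for each block $B$ of $G$ I pick a center $c_B\in B$ that is a cut-vertex of $G$ whenever $B$ contains one (every block of a block graph with more than one block does), and add to $T^{*}$ the star on $V(B)$ centered at $c_B$, i.e.\ the $|V(B)|-1$ edges from $c_B$ to the other vertices of $B$. Two things then need to be checked: that $T^{*}$ is indeed a spanning tree of $G$, and that every simplicial vertex is a leaf of $T^{*}$. For the first point, the edge count $\sum_B(|V(B)|-1)$ equals $|V(G)|-1$ by a short induction on the number of blocks (peel off a pendant clique at each step), and connectivity follows because any two vertices are linked by a chain of blocks sharing cut-vertices, with each cut-vertex lying one edge away from its block's center in $T^{*}$. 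For the second point, a simplicial vertex $v$ lies in a unique block $B$ and satisfies $v\ne c_B$ (since $c_B$ has been chosen to be a cut-vertex, and $v$ is not one), so $v$ has exactly one incident edge in $T^{*}$, namely $vc_B$, and is therefore a leaf.

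The main obstacle lies in being sure that the union of block-stars is genuinely acyclic rather than merely connected with the right number of edges. This rests on the structural fact that two distinct blocks of $G$ share at most one vertex, so any hypothetical cycle in $T^{*}$ would project to a closed walk in the block-cut tree of $G$, which is impossible since the block-cut tree is a tree. Once acyclicity is secured, both verifications go through and $T^{*}$ witnesses $MaxLeaf(G)\ge s(G)$, completing the proof.
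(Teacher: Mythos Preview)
Your proof is correct. Both you and the paper handle the upper bound identically (a leaf of a spanning tree cannot be a cut-vertex), and both establish the lower bound by exhibiting a spanning tree in which every simplicial vertex is a leaf, but the actual constructions differ. The paper takes, in each non-pendant clique, a simple path through its cut-vertices (omitting the simplicial vertices), joins these paths into a tree on the cut-vertex set, and then attaches each simplicial vertex to one of its already-visited neighbors; the verification that this union is a spanning tree is left implicit. You instead place a star in every block centered at a chosen cut-vertex, which treats pendant and non-pendant cliques uniformly and comes with an explicit edge count plus a justification via the block-cut tree. One small simplification: once you have connectivity together with exactly $|V(G)|-1$ edges (no edge is counted twice because distinct blocks share at most one vertex and hence no edge), acyclicity is automatic, so the concern raised in your final paragraph is already resolved by the preceding steps.
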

\begin{proof} %Let $S(G)$ be the number of simplicial vertices of $G$. 
First of all, observe that no cut-vertex of $G$ can be a leaf in a spanning tree of $G$. Hence we get $MaxLeaf(G)\leq s(G)$. 
Thus, in order to complete the proof of our proposition, it suffices to show that any connected block graph has a spanning tree whose all leaves are the simplicial vertices of $G$.
To do this, take a simple path in each non-pendant clique that begins and ends in a cut 
vertex and omits all its simplicial vertices. Now, join the paths in order to obtain a spanning tree for all non-simplicial vertices. Next, add 
each simplicial vertex to one of its neighbors visited in the first step of the algorithm. Note that we obtain a spanning tree of $G$. The proof is complete.
\end{proof}

Since $MaxLeaf(G)$ coincides with the number of simplicial vertices for block graph $G$ then $|V|-MaxLeaf(G)$ is equal to the number of cut-vertices. Note that removing all cut-vertices from a block graph $G$ leads to a union of cliques. Thus,
$$dc(G)\leq |V|-MaxLeaf(G) \leq |V|-MinLeaf(G).$$

Due to Theorem \ref{thm:gomes:dc} and Lemma \ref{lem:Redparam} we have the following results.
\begin{proposition}
\textsc{Equitable Coloring} in block graphs is FPT with respect to $|V|-MaxLeaf(G)$. \label{prop:n-max}
\end{proposition}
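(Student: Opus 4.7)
The plan is to derive this as an immediate corollary of the two previously cited results: Theorem~\ref{thm:gomes:dc} (\textsc{Equitable Coloring} is FPT with respect to the distance to cluster $dc(G)$), and Lemma~\ref{lem:Redparam} (a reduction principle between parameters). The bridge between them will be the chain of inequalities that has just been established, namely $dc(G) \leq |V|-MaxLeaf(G)$, which is exactly what Lemma~\ref{lem:Redparam} needs.

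First, I would recall Proposition~\ref{prop:maxleaf}, which identifies $MaxLeaf(G)$ with the number of simplicial vertices $s(G)$ in a connected block graph. Consequently, $|V|-MaxLeaf(G)$ equals the number of cut-vertices $c(G)$. Since removing all cut-vertices from a block graph leaves a disjoint union of cliques (each remaining block, stripped of its cut-vertices, is a clique, and no two such remainders share a vertex), the set of cut-vertices is a witness that $dc(G) \leq c(G) = |V|-MaxLeaf(G)$. This is exactly the inequality already displayed just before the statement, so I would simply invoke it rather than reprove it.

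Next, I would apply Lemma~\ref{lem:Redparam} with $\Pi = $ \textsc{Equitable Coloring}, $k_1(G) = dc(G)$, $k_2(G) = |V(G)|-MaxLeaf(G)$, and the computable function $g$ being the identity $g(x)=x$. The inequality above yields $k_1(G) \leq g(k_2(G))$ for every block graph $G$. Since Theorem~\ref{thm:gomes:dc} asserts that \textsc{Equitable Coloring} is FPT with respect to $k_1 = dc$, Lemma~\ref{lem:Redparam} immediately delivers that it is FPT with respect to $k_2 = |V|-MaxLeaf$, completing the proof.

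I do not foresee any real obstacle: the entire argument is a two-line chaining of known facts, and the only substantive step, the inequality $dc(G) \leq |V|-MaxLeaf(G)$, has already been verified in the paragraph preceding the proposition. If anything, the one point worth stating explicitly is the disconnected case — if $G$ is not connected, one applies the argument to each connected component (or observes that $MaxLeaf$ and $dc$ are both additive/subadditive over components in a way that preserves the inequality), so the reduction still goes through.
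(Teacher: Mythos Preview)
Your proposal is correct and matches the paper's own argument essentially line for line: the paper also uses the inequality $dc(G)\leq |V|-MaxLeaf(G)$ (established just before the proposition via Proposition~\ref{prop:maxleaf} and the observation that deleting all cut-vertices leaves a cluster graph) and then invokes Theorem~\ref{thm:gomes:dc} together with Lemma~\ref{lem:Redparam}. There is nothing to add.
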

\begin{proposition}
\textsc{Equitable Coloring} in block graphs is FPT with respect to $|V|-MinLeaf(G)$. 
\end{proposition}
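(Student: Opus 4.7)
The proof will be a direct application of Lemma~\ref{lem:Redparam} to the chain of inequalities established in the paragraph immediately preceding the statement, together with Theorem~\ref{thm:gomes:dc}. The plan mirrors exactly the one used for Proposition~\ref{prop:n-max}; the only additional input needed is that $|V|-MinLeaf(G)$ dominates $|V|-MaxLeaf(G)$, which follows because $MinLeaf(G)\leq MaxLeaf(G)$ for every connected graph (the minimum-leaf spanning tree cannot have more leaves than the maximum-leaf one).

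First I would set $k_1(G)=dc(G)$ and $k_2(G)=|V(G)|-MinLeaf(G)$. Then I would invoke the displayed chain
\[
dc(G)\leq |V|-MaxLeaf(G)\leq |V|-MinLeaf(G),
\]
which was already justified in the excerpt via Proposition~\ref{prop:maxleaf} (so that $|V|-MaxLeaf(G)$ equals the number of cut-vertices, and removing every cut-vertex of a block graph leaves a disjoint union of cliques). This yields the bound $k_1(G)\leq g(k_2(G))$ with $g$ the identity function, which is certainly computable.

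Next, by Theorem~\ref{thm:gomes:dc}, \textsc{Equitable Coloring} is FPT with respect to $k_1=dc(G)$ on general graphs, and hence in particular on block graphs. Lemma~\ref{lem:Redparam} then transfers the fixed-parameter tractability to $k_2(G)=|V|-MinLeaf(G)$, which is exactly the claimed statement.

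There is no substantive obstacle here: the proposition is a one-line corollary of the inequalities collected just above, and the argument is essentially identical to that of Proposition~\ref{prop:n-max}. The only thing worth double-checking is that the inequality $|V|-MaxLeaf(G)\leq |V|-MinLeaf(G)$ is stated with the correct direction (which it is, since a graph can always exhibit a spanning tree with fewer leaves than any maximum-leaf spanning tree), and that Theorem~\ref{thm:gomes:dc} is being applied to the class of block graphs, which it covers as a special case of general graphs.
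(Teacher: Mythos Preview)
Your proposal is correct and follows essentially the same route as the paper: the proposition is obtained directly from the displayed chain $dc(G)\leq |V|-MaxLeaf(G)\leq |V|-MinLeaf(G)$ together with Theorem~\ref{thm:gomes:dc} and Lemma~\ref{lem:Redparam}, exactly as for Proposition~\ref{prop:n-max}. There is nothing to add or correct.
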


\section{Equitable coloring and domination number of block graphs}

Let us recall a result by %Nieminen
\cite{nieminen}, providing the relation between the number of leaves in a maximum spanning forest of graph $G$ and its  domination number. Recall that a \emph{spanning forest} is a subgraph of $G$ which is a forest
and has the same vertex-set as that of $G$. A spanning forest $F$ of $G$ is called \emph{maximum}
if it has the largest possible number of  pendant edges among all spanning forests of $G$.
This number is denoted by $Leaf_F(G)$. 
A \emph{dominating set} for a graph $G = (V, E)$ is a subset $D$ of $V$ such that every vertex not in $D$ is adjacent to at least one member of $D$. The \emph{domination number} $\gamma(G)$ is the number of vertices in a smallest dominating set for $G$.

\begin{theorem}[\cite{nieminen}]
Let $G$ be a simple graph. Then $\gamma(G)+Leaf_F(G)=|V(G)|$. 
\end{theorem}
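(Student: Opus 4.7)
The plan is to establish the identity by proving the two inequalities $Leaf_F(G)\geq |V(G)|-\gamma(G)$ and $\gamma(G)\leq |V(G)|-Leaf_F(G)$ separately, each by a direct construction that converts a witness for one parameter into a witness for the other.

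For the first direction, I start from a minimum dominating set $D$ with $|D|=\gamma(G)$. For every vertex $v\in V(G)\setminus D$, fix a neighbor $f(v)\in D$, which exists because $D$ dominates $V(G)$. Consider the spanning subgraph $F$ of $G$ whose edge set is $\{\{v,f(v)\}: v\in V(G)\setminus D\}$. Every vertex of $V(G)\setminus D$ has degree exactly one in $F$ and every edge of $F$ runs between $V(G)\setminus D$ and $D$, so $F$ is acyclic; in fact, it is a star forest in which the elements of $D$ are the centres (some of them possibly isolated). Each edge is pendant since its $V(G)\setminus D$ endpoint is a leaf, so $F$ has exactly $|V(G)|-|D|=|V(G)|-\gamma(G)$ pendant edges, giving $Leaf_F(G)\geq |V(G)|-\gamma(G)$.

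For the reverse direction, fix a spanning forest $F^{\ast}$ of $G$ realizing $Leaf_F(G)$, and build a dominating set component by component. For a tree component $T$ with $n_T$ vertices and $p_T$ pendant edges I define a set $D_T\subseteq V(T)$ of size $n_T-p_T$ as follows: if $n_T=1$ put the lone vertex in $D_T$; if $n_T=2$ put either endpoint of the single edge in $D_T$; if $n_T\geq 3$ let $D_T$ be the set of internal (non-leaf) vertices of $T$. In each case $D_T$ dominates $V(T)$ in $T$, and hence in $G$. For $n_T\geq 3$, every leaf of $T$ is incident to a unique pendant edge and every pendant edge has exactly one leaf endpoint, so $|D_T|=n_T-\ell(T)=n_T-p_T$; the small cases are checked directly. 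Summing over components gives $D=\bigcup_T D_T$, a dominating set of $G$ with $|D|=\sum_T(n_T-p_T)=|V(G)|-Leaf_F(G)$, so $\gamma(G)\leq |V(G)|-Leaf_F(G)$.

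The only non-routine point is the $K_2$ component, where the number of leaves ($2$) exceeds the number of pendant edges ($1$), so the naive recipe \emph{take all non-leaf vertices} would produce a set of size $0\neq n_T-p_T=1$ that also fails to dominate. Handling $n_T\in\{1,2\}$ by hand as above cleanly absorbs this subtlety; once this is done, both directions are immediate.
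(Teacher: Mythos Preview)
The paper does not supply its own proof of this statement; it is quoted as a classical result of Nieminen \cite{nieminen} and used as a black box. Your two-direction argument is correct and is essentially the standard proof: from a minimum dominating set one builds a star forest with $|V(G)|-\gamma(G)$ pendant edges, and from a maximum spanning forest one extracts a dominating set of size $|V(G)|-Leaf_F(G)$ by taking, in each tree component, the non-leaf vertices (with the isolated-vertex and $K_2$ components handled by hand, exactly as you do). There is nothing to compare against in the present paper, and your write-up stands on its own.
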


Note that $MaxLeaf(G)\leq Leaf_F(G)$. Thus, we have $\gamma(G)=|V(G)|-Leaf_F(G) \leq |V(G)|- 
MaxLeaf(G)$. 
%On the other side, we have the following.
%Let $C(G)$ be the number of cut-vertices of a graph $G$. 
When $G$ is a block graph, we have that the number of cut-vertices, denoted by $c(G)$, is equal to $|V(G)| - MaxLeaf(G)$. So, we have $\gamma(G)\leq c(G)$ for block graphs. Moreover, we prove the following

\begin{proposition}
Let $G=(V,E)$ be a (not necessarily block) graph. Then $c(G)\leq 2 \gamma(G)$.
\end{proposition}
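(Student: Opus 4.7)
My plan is to fix a minimum dominating set $D$ of $G$ with $|D|=\gamma(G)$ and define a map $\phi$ from the set $C$ of cut-vertices of $G$ into $D$ whose fibers all have size at most $2$. Concretely, for each $v\in C$ I set $\phi(v)=v$ if $v\in D$, and otherwise pick some $\phi(v)\in D\cap N(v)$, which exists because $D$ dominates $v$. Once this is done, the inequality $c(G)=|C|\le 2|D|=2\gamma(G)$ follows immediately from $|C|=\sum_{d\in D}|\phi^{-1}(d)|$, so the entire task reduces to bounding each fiber $|\phi^{-1}(d)|$ by $2$.

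To control the fibers, I would exploit the minimality of $D$ together with the defining property of cut-vertices: every $u\in C$ separates $G$ into at least two components, each of which must be dominated either by $u$ itself (if $u\in D$) or by some vertex of $D$ lying in that component. The plan is to split on whether $d\in D$ is itself a cut-vertex. If $d\in C$, then $d$ already uses one slot of $\phi^{-1}(d)$ and I would show that at most one further cut-vertex neighbor of $d$ is charged to $d$; if $d\notin C$, I would show at most two cut-vertices in $N(d)$ are charged to $d$. In each case the argument would be an exchange: if three cut-vertices $u_1,u_2,u_3\in N[d]$ were all charged to $d$, then the components separated by the $u_i$'s in $G-u_i$ would force $D$ to have so much redundancy around $d$ that one could remove $d$ (or swap $d$ for one of the $u_i$) and still dominate $G$, contradicting the minimality of $|D|$.

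The main obstacle is obtaining the sharp constant $2$ rather than something weaker. A naive ``charge each cut-vertex to a nearest dominator'' argument gives only $c(G)\le(\Delta(G)+1)\gamma(G)$, and improving this to a factor of $2$ will require choosing $D$ with a secondary optimality condition --- for instance, among all minimum dominating sets, choosing one that maximizes $|D\cap C|$. With such a refined $D$ in hand, a hypothetical fiber of size at least $3$ should produce an explicit local modification of $D$ that either shrinks $|D|$ or keeps $|D|$ the same while increasing $|D\cap C|$, contradicting the optimality choice. Verifying that some such modification is \emph{always} available, uniformly across arbitrary (not necessarily block) graphs and without a hidden pathological configuration, is the delicate technical step I expect to dominate the proof.
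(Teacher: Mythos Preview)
Your exchange argument cannot be completed, because the proposition as stated is false. The path $P_9$ is a counterexample: its seven internal vertices are all cut-vertices, so $c(P_9)=7$, while $\gamma(P_9)=3$ (for instance $D=\{2,5,8\}$ is a minimum dominating set), and $7>2\cdot 3$. More generally $c(P_n)=n-2$ and $\gamma(P_n)=\lceil n/3\rceil$, so $c(P_n)/\gamma(P_n)\to 3$ and no constant below $3$ can work. This shows exactly where your plan fails: with $G=P_9$ and $D=\{2,5,8\}$ the cut-vertices $4,5,6$ have $5$ as their only dominator in $D$, forcing $|\phi^{-1}(5)|=3$, yet $D$ already lies entirely inside $C$, so your secondary optimisation of maximising $|D\cap C|$ is saturated and no swap is available. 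The ``hidden pathological configuration'' you were worried about is simply a long path.

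For comparison, the paper's own inductive proof slips at the corresponding step. In the branch where the level-$2$ clique $Q$ has exactly two cut-vertices $x,z$ and $z$ lies in exactly one further clique $Q^*$, the paper deletes $Q$ (including $z$) together with the pendant cliques through $x$ and asserts that $c$ drops by exactly $2$. When $Q^*$ is an edge, however, the surviving endpoint of $Q^*$ becomes a leaf of $H$ and ceases to be a cut-vertex, so $c$ actually drops by $3$ while $\gamma$ drops by only $1$; running the induction on $P_9$ reproduces the counterexample. The FPT consequence drawn afterwards only needs a bound of the form $c(G)\le f(\gamma(G))$ for some computable $f$ and therefore survives, but the factor $2$ does not.
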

\begin{proof} First of all, observe that proving this inequality in arbitrary graphs is equivalent to its restriction in block graphs. In order to see this, let $G$ be any graph. Consider a graph $H$ obtained from $G$ by making each block $B$ of $G$ a clique by adding edges to $B$. No edge of $H$ joins two vertices that lie in different blocks of $G$ unless one of them is a cut vertex that lies in both blocks.

Observe that $H$ is a block graph. Moreover, $c(H)=c(G)$ and $\gamma(H)\leq \gamma(G)$. Hence,
\[c(G)=c(H)\leq 2\gamma(H)\leq 2 \gamma(G).\]
Thus, w.l.o.g. we can assume that the graph $G$ is a block graph. Moreover, we can assume that $G$ is connected. 
We will prove the statement by induction on $c(G)$. Note that, if $0\leq c(G)  \leq 1$, the domination number is also equal to 1, so the statement holds in both cases. Now, let us assume that our statement holds for every block graph $G$ with $c(G)<k$, $k\geq 2$. Now, let $G$ be a block graph with $k$ cut-vertices. We choose a clique $Q$ being a clique of level 2 in $G$. Such a clique exists because $G$ has at least 2 cut-vertices. Note that vertices of $Q$ are simplicial or they are included in cliques of level 1, excluding at most one vertex, let us say $z$. It can be contained in cliques of higher level. Let $x$ be a cut-vertex of $Q$ contained in at least one clique of level 1. If $Q$ contains at least three cut-vertices: $x$, $z$, and let us say $y$, then note that $y$ is contained in at least one clique of level 1, and certainly vertex $y$ can be needed to dominate vertices of $G$, similarly to vertex $x$. 
Let $H$ be a block graph obtained from $G$ by deleting all pendant cliques containing $x$, together with $x$.
Note that if more than one clique containing $x$ were removed to obtain graph $H$, vertex $x$ is included in any dominating set of $G$ of minimum size. Also in the case where there was only one clique of level 1 in $G$ containing $x$, let us name it $J$, exactly one vertex of $J$ must belong to a dominating set of $G$ of size $\gamma(G)$ and we can assume that it is vertex $x$. 
So, we have $\gamma(H)=\gamma(G)-1$, while the number of cut-vertices in $H$ was decreased by 1. Thus, using the induction assumption for $H$, we have
$$c(G)-1=c(H)\leq 2\gamma(H)=2(\gamma(G)-1).$$
Hence,
$$c(G)\leq 2\gamma(G)-1\leq 2\gamma(G).$$ 

So, we can assume that all cliques of level 2 contain exactly two cut-vertices: $x$ and $z$. Note that $x$ dominates all vertices of pendant cliques containing $x$ as well as the vertices of $Q$, i.e. $z$ and all simplicial vertices of $Q$. 
Let us assume that the vertex $z$ is included in exactly one clique, excluding $Q$. Then, let $Q^*$ be the clique containing $z$ different from $Q$. Note that any vertex of $Q^*$ dominates the other vertices of $Q^*$. Thus, we can assume that we can choose any vertex of $Q^*$, excluding $z$, to a dominating set of $G$ of size $\gamma(G)$. So, let $H$ be the graph obtained from $G$ by deleting $Q$, including $z$, and all pendant cliques containing $x$. 
%Note that $H$ is still connected. 
We have $\gamma(H)=\gamma(G)-1$, while the number of cut-vertices in $H$ was decreased by 2. Thus,
$$c(G)-2=c(H)\leq 2\gamma(H)=2(\gamma(G)-1).$$
Hence,
$$c(G)\leq 2\gamma(G).$$ 
So, we can assume that $z$ is included in at least two cliques, excluding $Q$. In this case we need to ensure whether the vertex $z$ belongs to all dominating sets of $G$ of size $\gamma(G)$. 
%Note that if $\gamma(G)<\gamma(G-z)$ then vertex $z$ does not have to belong to the dominating set in $G$ of size $\gamma(G)$. 
First, let $z$ belong to all dominating sets of $G$ of size $\gamma(G)$. Then, let $H$ be the graph obtained from $G$ by deleting $Q$, excluding $z$, with all pendant cliques containing $x$. Similarly to the previous case, we can assume that $x$ belongs to any dominating set of $G$ of size $\gamma(G)$. So we have $\gamma(H)=\gamma(G)-1$, while the number of cut-vertices in $H$ was decreased by 1. Thus,
$$ c(G)-1=c(H)\leq 2\gamma(H)=2(\gamma(G)-1).$$
Hence,
$$ c(G)\leq 2\gamma(G)-1\leq 2\gamma(G).$$
Thus, we are left with the case when $z$ does not belong to all dominating sets of $G$ of size $\gamma(G)$. First note, that $z$ belongs to at most one clique of level 1, otherwise $z$ would belong to all dominating sets of $G$ of size $\gamma(G)$.
Moreover, if there is a clique of level 1 that includes $z$, let us name it $Q^1_z$, one of its vertices must belong to a dominating set of $G$ of size $\gamma(G)$, so w.l.o.g. we can assume that $z$ is a vertex of $Q^1_z$ that belongs to a dominating set of $G$ of size $\gamma(G)$ and repeat the argument for the case where $z$ belongs to all dominating sets of $G$ of size $\gamma(G)$. Thus, we can assume that $z$ is not included in cliques of level 1. Let $z$ be included in a clique of level 2, let us name it $Q^*$, different from $Q$. Due to our assumption, $Q^*$ contains exactly two cut-vertices: the vertex $z$, and let us say the vertex $x'$. Note that we can assume that $x$ and $x'$ both belong to all dominating sets of $G$ of size $\gamma(G)$ and they both dominate $z$. Let $H$ be the graph obtained from $G$ by deleting $Q$, excluding $z$, and all pendant cliques that contain $x$. Since $x'$ dominates $z$ in $H$, we have $\gamma(H)=\gamma(G)-1$, while the number of cut-vertices was decreased by 1. Thus, $$c(G)-1= c(H)\leq 2\gamma(H)=2(\gamma(G)-1).$$
Hence,
$$ c(G)\leq 2\gamma(G)-1\leq 2\gamma(G).$$

So, we can assume that our vertex $z$ of the clique $Q$ is contained in at least two cliques of level at least 3. Let us show that such a situation is impossible for all cliques of level 2 in $G$. 

Assume the opposite. Take a clique $Q_0$ of level 2 and let $z_0$ be its unique vertex that may lie in cliques of higher level. By our assumption, $z_0$ lies in two cliques of level at least 3. Let $R_1$ be one of them, and let $r_1$ be the level of $R_1$. We have $r_1\geq 3$. By the definition of the level of a clique (see Subsection \ref{subsec:GraphTheoryConcepts}), there is a sequence of cliques of $G$, thanks to which $R_1$ got its label $r_1\geq 3$. Let $Q_1$ and $J_1$ be the cliques from this sequence with levels 2 and 3, respectively. Note that it may be the case that $R_1=J_1$. Let $z_1$ be the unique cut-vertex of $Q_1$ that may lie in cliques of higher level. Observe that $z_0\neq z_1$. By our assumption, $z_1$ lies in two cliques of level at least 3. Choose $R_2$ from them so that it is different from $J_1$. Let $r_2$ be the level of $R_2$. We have $r_2\geq 3$. By the definition of the level of a clique (see Subsection \ref{subsec:GraphTheoryConcepts}), there is a sequence of cliques of $G$, thanks to which $R_2$ got its label $r_2\geq 3$. Let $Q_2$ and $J_2$ be the cliques from this sequence with levels 2 and 3, respectively. Note that it may be the case that $R_2=J_2$. Let $z_2$ be the unique cut-vertex of $Q_2$ that may lie in cliques of higher level. Observe that $z_2\neq z_0$ and $z_2\neq z_1$. By our assumption, $z_2$ lies in two cliques of level at least 3. Choose $R_3$ from them so that it is different from $J_2$. Let $r_3$ be the level of $R_3$. We have $r_3\geq 3$. By the definition of the level of a clique (see Subsection \ref{subsec:GraphTheoryConcepts}), there is a sequence of cliques of $G$, thanks to which $R_3$ got its label $r_3\geq 3$. Let $Q_3$ and $J_3$ be the cliques from this sequence with levels 2 and 3, respectively. Note that it may be the case that $R_3=J_3$. Let $z_3$ be the unique cut-vertex of $Q_3$ that may lie in cliques of higher level. Observe that $z_3\neq z_0$, $z_3\neq z_1$ and $z_3\neq z_2$. We can continue this reasoning infinitely. Thus, our block graph must contain infinite number of vertices, which is not the case.

Thus, the situation described above is impossible for all cliques of level 2 in a finite graph $G$. This means that there exists a clique of level 2 in $G$ fulfilling one of conditions depicted earlier. The proof is complete.

\end{proof}

%\textcolor{green}{
%{\bf [Hanna's comment: do we need the remark given below?]}
%\begin{remark}
%\end{remark}}

%{\bf Reply from Vahan: We can integrate this remark into the proof of the previous statement. With it, I simply wanted to say that we are actually proving the statement for all graphs, which are not necessarily block graphs. However, the statement is equivalent to its restriction to block graphs, so this would be a nice observation.}

Note that we have proved the following inequalities.

$$(|V(G)|-MaxLeaf(G))/2\leq \gamma(G) \leq |V(G)|-MaxLeaf(G)$$

Thus, we have that $|V(G)|-MaxLeaf(G)$ and $\gamma(G)$ are equivalent from the perspective of FPT. Due to Proposition \ref{prop:n-max} we have the following.
\begin{proposition}
\textsc{Equitable Coloring} in block graphs is FPT with respect to the domination number. \label{prop:dom}
\end{proposition}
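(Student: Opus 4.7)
The plan is to invoke Lemma~\ref{lem:Redparam} with the parameters $k_1 := |V(G)| - MaxLeaf(G)$ and $k_2 := \gamma(G)$; all the necessary ingredients have already been assembled in the preceding text, so I expect the present proposition to be a routine consequence of parameter reduction rather than a fresh algorithmic construction.

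First, I would read off from the chain of inequalities displayed just above the proposition, namely
\[(|V(G)|-MaxLeaf(G))/2 \le \gamma(G) \le |V(G)|-MaxLeaf(G),\]
the bound $|V(G)|-MaxLeaf(G) \le 2\gamma(G)$. This tells me that the computable function $g(x) := 2x$ satisfies $k_1(I) \le g(k_2(I))$ for every block-graph instance $I$, which is exactly the hypothesis needed by Lemma~\ref{lem:Redparam}. Next, I would invoke Proposition~\ref{prop:n-max}, which asserts that \textsc{Equitable Coloring} on block graphs is FPT with respect to $k_1$. Feeding this FPT statement together with the bound above into Lemma~\ref{lem:Redparam} yields fixed-parameter tractability with respect to $k_2 = \gamma(G)$, which is the claim.

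The principal difficulty in this reasoning is \emph{not} located in the present proof at all: it is the inequality $c(G) \le 2\gamma(G)$ established in the preceding proposition, which, combined with the identity $|V(G)|-MaxLeaf(G) = c(G)$ coming from Proposition~\ref{prop:maxleaf}, produces the decisive bound. Once that structural fact is in hand, no further obstacle remains, and the argument reduces to plugging the already established pieces into the generic parameter-reduction lemma.
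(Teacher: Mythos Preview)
Your proposal is correct and matches the paper's approach exactly: the paper derives the proposition directly from the displayed inequalities $(|V(G)|-MaxLeaf(G))/2\le \gamma(G)\le |V(G)|-MaxLeaf(G)$ together with Proposition~\ref{prop:n-max}, which is precisely the parameter-reduction argument you spell out via Lemma~\ref{lem:Redparam} with $g(x)=2x$. The only difference is that the paper leaves the invocation of Lemma~\ref{lem:Redparam} implicit in the phrase ``equivalent from the perspective of FPT,'' whereas you make it explicit.
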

 Note that for general graphs, \textsc{Equitable Coloring} is paraNP-hard, when parameterized by minimum dominating set (cf. \cite{gomes:structural}). Our result shows that \textsc{Equitable Coloring} is much easier for block graphs, with respect to this parameter.

\section{\textsc{Equitable Coloring} and independent sets of block graphs}\label{sec:independence}

In this section, we consider block graphs and the \textsc{Equitable Coloring} problem from the perspective of independent sets of vertices and edges. In Subsection \ref{sec:alpha}, we work mainly with the parameter $\alpha_{\min}$ which has tight connections with the size of the largest independent set of vertices of a block graph. In Subsection \ref{sec:MatchingSection}, our focus is on matchings of block graphs. In particular, we view the problem from the angle of the number of vertices that a maximum matching of a block graph does not cover.

\subsection{Independent sets and the parameter $\alpha_{\min}$}\label{sec:alpha}

Recall that for a graph $G$, $\alpha(G)$ denotes the size of a largest independent vertex set in $G$. Let $\alpha(G,v)$ be the size of a largest independent set of $G$ that contains the vertex $v$. Define:
\[\alpha_{\min}(G)=\min_{v\in V(G)}\alpha(G,v).\]

Note that the parameter $\alpha_{\min}(G)$ is closely related to the topic of dominating sets in $G$. An \emph{independent} dominating set is such a dominating set $D$ that is independent. The \emph{independent} domination number of a graph $G$, denoted by $i(G)$, is the size of a smallest dominating set that is an independent set. Equivalently, it is the size of the smallest maximal independent set. Since every maximal independent set of size $\alpha_{\min}(G)$ is an independent dominating set, we have $\gamma(G) \leq i(G) \leq  \alpha_{\min}(G)$ for all graphs $G$. Thus, Proposition \ref{prop:dom}, combined with Lemma \ref{lem:Redparam}, implies the following results.

\begin{proposition}
\textsc{Equitable Coloring} of block graphs is FPT when parameterized by the independent domination number. \label{prop:i(G)}
\end{proposition}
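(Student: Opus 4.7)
The plan is to invoke the parameter-reduction framework already set up in the paper. The key observation is that Proposition \ref{prop:dom} establishes fixed-parameter tractability of \textsc{Equitable Coloring} on block graphs with respect to the domination number $\gamma(G)$, and we now wish to upgrade this to the (potentially larger) parameter $i(G)$. To do so, it suffices to exhibit a computable function bounding $\gamma$ in terms of $i$, and then feed this into Lemma \ref{lem:Redparam}.

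First, I would recall the chain of inequalities $\gamma(G) \leq i(G) \leq \alpha_{\min}(G)$, which is valid for every graph $G$: any independent dominating set is in particular a dominating set, so its minimum size is at least $\gamma(G)$, and every maximal independent set is automatically an independent dominating set. This has already been noted in the paragraph preceding the proposition, so I would simply cite it rather than re-prove it.

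Next, I would apply Lemma \ref{lem:Redparam} with $\Pi = $ \textsc{Equitable Coloring} restricted to block graphs, $k_1 = \gamma$, $k_2 = i$, and $g$ equal to the identity function on $\mathbb{N}$. The hypothesis $k_1(G) \leq g(k_2(G))$ reduces to $\gamma(G) \leq i(G)$, which holds by the observation above. Since Proposition \ref{prop:dom} tells us that the problem is FPT with respect to $k_1 = \gamma$, Lemma \ref{lem:Redparam} yields fixed-parameter tractability with respect to $k_2 = i$, which is exactly the statement to be proved.

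There is essentially no obstacle: the argument is a one-line application of the reduction lemma, and the only content is the trivial inequality $\gamma(G) \leq i(G)$. The slight subtlety worth flagging in the writeup is that the inequality $\gamma \leq i$ is a general graph-theoretic fact (not specific to block graphs), so the block-graph hypothesis enters only through the invocation of Proposition \ref{prop:dom}, and we are not claiming any block-graph-specific bound between $\gamma$ and $i$.
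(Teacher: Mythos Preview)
Your proposal is correct and follows exactly the paper's own argument: the paper derives the proposition directly from the inequality $\gamma(G)\leq i(G)$ together with Proposition~\ref{prop:dom} and Lemma~\ref{lem:Redparam}, precisely as you outline.
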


\begin{proposition}
\textsc{Equitable Coloring} of block graphs is FPT when parameterized by $\alpha_{\min}$.\label{prop:alpha}
\end{proposition}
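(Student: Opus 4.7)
The plan is to obtain the statement as an immediate corollary of Proposition \ref{prop:dom} via the parameter-reduction mechanism of Lemma \ref{lem:Redparam}. Concretely, I would instantiate Lemma \ref{lem:Redparam} with $\Pi = \textsc{Equitable Coloring}$ on block graphs, $k_1 = \gamma$, $k_2 = \alpha_{\min}$, and $g$ equal to the identity. The base FPT result for $k_1 = \gamma$ is exactly Proposition \ref{prop:dom}, so once the inequality $\gamma(G)\leq \alpha_{\min}(G)$ is in hand, Lemma \ref{lem:Redparam} transports the FPT membership from $\gamma$ to $\alpha_{\min}$, giving the statement.

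The only step that needs a sentence of justification is the bound $\gamma(G) \leq \alpha_{\min}(G)$, which is already flagged in the paragraph preceding the proposition as part of the chain $\gamma(G)\leq i(G)\leq \alpha_{\min}(G)$. To spell it out: for any vertex $v$, let $S_v$ be an independent set containing $v$ with $|S_v|=\alpha(G,v)$. Being of maximum size among independent sets through $v$, $S_v$ is in particular a maximal independent set of $G$, and every maximal independent set is a dominating set (otherwise one could adjoin a non-dominated vertex, contradicting maximality). Hence $\gamma(G)\leq |S_v|=\alpha(G,v)$ for every $v$, and minimizing over $v$ yields $\gamma(G)\leq \alpha_{\min}(G)$.

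There is no real obstacle: the proposition is a corollary, and I expect the write-up to be a short paragraph that (i) recalls $\gamma(G)\leq \alpha_{\min}(G)$ via the dominating-set argument above and (ii) applies Lemma \ref{lem:Redparam} to Proposition \ref{prop:dom}. The only thing to double-check is that the definition of $\alpha_{\min}$ actually permits the construction of the required maximal independent set from each $v$, which it does since $\alpha(G,v)$ is well-defined for every vertex of $G$.
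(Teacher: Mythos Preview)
Your proposal is correct and matches the paper's approach exactly: the paper derives both Proposition~\ref{prop:i(G)} and Proposition~\ref{prop:alpha} in one line from the chain $\gamma(G)\leq i(G)\leq \alpha_{\min}(G)$ together with Proposition~\ref{prop:dom} and Lemma~\ref{lem:Redparam}. Your only addition is spelling out why a largest independent set through $v$ is necessarily maximal (and hence dominating), which is a sound elaboration of what the paper leaves implicit.
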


%The results over the relation between $dc(G)$ and $\alpha_{\min}(G)$ was moved to writeup2.tex

\subsection{Matchings in block graphs}\label{sec:MatchingSection}

In this section, we prove that EQUITABLE COLORING is hard in block graphs with a perfect matching.
%respect to $|V|-2 \nu(G)$. %Let $\nu(G)$ denote the matching number of a graph $G$, i.e. the size of a maximum independent edge set. Moreover, let $\tau(G)$ denote the vertex cover number of $G$, i.e. the size of a minimum vertex cover in a graph $G$.
Recall that in any graph $G$, we have
\[\nu(G)\leq \tau(G)\leq 2\cdot \nu(G).\]
Thus, the parameterization with respect to the vertex cover of $G$, $\tau(G)$, is equivalent to that of with respect to its matching number, $\nu(G)$. Theorem \ref{thm:fiala:cover} and Lemma \ref{lem:Redparam} imply

\begin{corollary}\label{cor:matchingFPT}
EQUITABLE COLORING is FPT when parameterized by the matching number.
\end{corollary}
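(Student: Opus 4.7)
The plan is to derive Corollary \ref{cor:matchingFPT} as an immediate consequence of Theorem \ref{thm:fiala:cover} via the reduction Lemma \ref{lem:Redparam}, using the classical inequality $\tau(G) \leq 2\nu(G)$ displayed just before the statement.

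First, I would apply Lemma \ref{lem:Redparam} with the parameters $k_1 := \tau$ (vertex cover number) and $k_2 := \nu$ (matching number), and with the computable function $g : \mathbb{N} \to \mathbb{N}$ defined by $g(x) = 2x$. The hypothesis of the lemma, namely $k_1(I) \leq g(k_2(I))$ for every instance $I$, holds because for any graph $G$ we have the well-known chain $\nu(G) \leq \tau(G) \leq 2\nu(G)$ recalled in Subsection \ref{subsec:GraphTheoryConcepts} and restated at the beginning of this subsection. In particular, $\tau(G) \leq 2\nu(G) = g(\nu(G))$, so the hypothesis is verified.

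Next, Theorem \ref{thm:fiala:cover} of Fiala et al.\ supplies the required FPT algorithm with respect to $k_1 = \tau$: \textsc{Equitable Coloring} is fixed-parameter tractable when parameterized by the vertex cover number. Feeding both ingredients into Lemma \ref{lem:Redparam} yields the conclusion that \textsc{Equitable Coloring} is FPT with respect to $k_2 = \nu$, which is exactly the statement of the corollary.

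There is no real obstacle here; the entire content of the argument is already present in the two cited results plus the relation $\tau \leq 2\nu$. The only thing worth emphasizing in the write-up is that no assumption on the graph class (e.g.\ block graphs) is used, so the corollary, like Theorem \ref{thm:fiala:cover}, is a statement about \textsc{Equitable Coloring} on arbitrary graphs.
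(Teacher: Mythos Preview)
Your proposal is correct and matches the paper's own argument exactly: the paper derives the corollary from Theorem~\ref{thm:fiala:cover} and Lemma~\ref{lem:Redparam} via the inequality $\tau(G)\leq 2\nu(G)$, just as you do. Nothing further is needed.
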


\noindent One can try to strengthen this result. Since in any graph $$\tau(G)-\nu(G)\leq \nu(G),$$ we can ask about the parameterization with respect to $\tau(G)-\nu(G)$. \textsc{Equitable coloring} is NP-hard for bipartite graphs (cf. \cite{bod:part}). In these graphs, the difference $\tau(G)-\nu(G)$ is zero, thus the problem is paraNP-hard with respect to $\tau(G)-\nu(G)$, and hence unlikely to be FPT with respect to it. Observe that in any graph $G$, 
\[\nu(G)\leq |V|-\nu(G).\]
From Corollary \ref{cor:matchingFPT} and Lemma \ref{lem:Redparam}, we have that \textsc{Equitable Coloring} is FPT with respect to $|V|-\nu(G)$.
Thus one can try to do the next step trying to show that it is FPT with respect to $|V|-2\nu(G)$. 
We consider the restriction of the problem to block graphs with a perfect matching, i.e. with $|V|-2\nu(G)=0$.

Below we observe that \textsc{Equitable Coloring} is NP-hard for graphs containing a perfect matching. In order to demonstrate this, we will need a result by %Sumner
\cite{Sumner:1974}.
\begin{theorem}[\cite{Sumner:1974}]
\label{thm:Sumner} Let $G$ be a connected, claw-free graph on even number of vertices. Then $G$ has a perfect matching.
\end{theorem}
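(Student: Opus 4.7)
The plan is to prove Sumner's theorem via Tutte's $1$-factor theorem, using a minimum-size counterexample argument that exploits the restriction forbidding the claw $K_{1,3}$. I would assume for contradiction that $G$ is a connected, claw-free graph of even order that admits no perfect matching. By Tutte's theorem there exists $S \subseteq V(G)$ with $o(G-S) > |S|$, where $o(H)$ denotes the number of odd components of $H$. Since $|V(G)|$ is even, a parity check on $|V(G)| = |S| + \sum_C |C|$ (the sum ranging over components of $G-S$) gives $o(G-S) \equiv |S| \pmod 2$, so in fact $o(G-S) \geq |S|+2$.

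Among all such sets $S$ I would fix one of minimum cardinality; connectedness of $G$ together with $|V(G)|$ being even gives $o(G)=0$, so $S \neq \emptyset$. The central step is to show that each $s \in S$ has neighbors in at least three distinct odd components of $G-S$. Suppose $s$ has neighbors in exactly $k$ odd components of $G-S$. Then in the graph $G-(S \setminus \{s\})$ those $k$ components, together with any even components of $G-S$ that also contain a neighbor of $s$, merge with $s$ into one new component of parity $1+k \pmod 2$, while every other component of $G-S$ persists unchanged. This gives
\[
 o\bigl(G - (S \setminus \{s\})\bigr) \;=\; o(G-S) - k + [\,k \text{ even}\,].
\]
The minimality of $|S|$ forces the left-hand side to be at most $|S|-1$, and combining this with $o(G-S) \geq |S|+2$ yields the inequality $3 + [\,k \text{ even}\,] \leq k$, which in either parity case gives $k \geq 3$.

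To finish, I would pick any $s \in S$ together with three neighbors $a, b, c$ of $s$ lying in three pairwise distinct odd components of $G-S$. By construction $a, b, c$ are pairwise non-adjacent in $G$, so the set $\{s, a, b, c\}$ induces a $K_{1,3}$ centred at $s$, contradicting the claw-freeness of $G$.

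The step I expect to be the main obstacle is the parity accounting in the middle paragraph: one must carefully track how many odd components disappear and how many new ones appear when $s$ is reinserted, and in particular account for the fact that any even components of $G-S$ in which $s$ has neighbors get absorbed into the merged component and can flip its parity. Once this bookkeeping is pinned down, the claw extraction in the final step is essentially automatic.
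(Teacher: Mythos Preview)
The paper does not supply a proof of Theorem~\ref{thm:Sumner}; it is quoted from \cite{Sumner:1974} as a black-box tool and then applied in the proof of Observation~\ref{obs:EqColoringNPhardPerfectMatchings}. There is therefore no in-paper argument to compare yours against.

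For what it is worth, your Tutte-based argument is correct and is one of the standard modern proofs of Sumner's result. The bookkeeping you flag as the main obstacle is in fact simpler than you suggest: the merged component containing $s$ has order congruent to $1$ plus the number of absorbed \emph{odd} components, and the absorbed even components contribute $0 \pmod 2$, so they never flip its parity. Hence your displayed identity $o\bigl(G-(S\setminus\{s\})\bigr) = o(G-S) - k + [\,k\text{ even}\,]$ holds irrespective of how many even components touch $s$, and the remaining chain of inequalities ($3 + [\,k\text{ even}\,] \le k$, hence $k \ge 3$, hence a claw centred at $s$ with leaves in three distinct components of $G-S$) goes through cleanly.
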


\begin{observation}
\label{obs:LineGraphsClawFree} Every line graph is claw-free.
\end{observation}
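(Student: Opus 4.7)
The plan is to argue by contradiction, exploiting the fact that an edge in a graph has only two endpoints. Suppose, for some graph $G$, the line graph $L(G)$ contains an induced claw. Its center corresponds to an edge $e$ of $G$, and its three leaves correspond to three distinct edges $e_1, e_2, e_3$ of $G$. By the definition of $L(G)$, the edge $e_i$ is adjacent to $e$ in the line graph for each $i\in\{1,2,3\}$, which means that $e_i$ shares an endpoint with $e$ in $G$. Write $e=uv$, so each $e_i$ is incident to $u$ or to $v$.

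The key step is then a one-line pigeonhole argument. Since there are three edges $e_1, e_2, e_3$ but only two possible shared endpoints $u$ and $v$, two of these edges — say $e_i$ and $e_j$ — must be incident to the same endpoint of $e$. But then $e_i$ and $e_j$ share a vertex in $G$, which makes the corresponding vertices adjacent in $L(G)$. This contradicts the requirement that the three leaves of an induced $K_{1,3}$ form an independent set in $L(G)$.

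Hence $L(G)$ cannot contain an induced claw, and $L(G)$ is claw-free. There is no real obstacle here; the argument is an immediate consequence of the definition of the line graph and of the pigeonhole principle. The only care needed is to make sure the edges $e_1, e_2, e_3$ are taken as distinct (which is automatic since they correspond to distinct vertices of the claw) and to distinguish ``adjacent in $L(G)$'' from ``shares an endpoint in $G$'' when writing up the contradiction.
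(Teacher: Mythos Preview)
Your argument is correct and is the standard proof of this fact. The paper itself states the observation without proof, so there is nothing to compare against; your pigeonhole argument is exactly what one would supply to justify it.
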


A classical result by %Holyer
\cite{Holyer:1981} states that the problem of testing a given bridgeless cubic graph for {3-edge-colorability} is NP-complete. One can always assume that the bridgeless cubic graph in this problem contains even number of edges. For otherwise, just replace one vertex with a triangle. The resulting graph is a bridgeless cubic graph on even number of edges and it is 3-edge-colorable if and only if the original graph is 3-edge-colorable. 

\begin{observation}
\label{obs:EqColoringNPhardPerfectMatchings} \textsc{Equitable Coloring} is NP-hard for 4-regular graphs containing a perfect matching.
\end{observation}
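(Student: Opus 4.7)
The plan is to reduce from Holyer's NP-complete problem of deciding 3-edge-colorability of a bridgeless cubic graph, which the excerpt has already reminded us may be assumed to contain an even number of edges. Given such a cubic graph $G$, I would construct its line graph $H := L(G)$ and show that $H$ is an instance of the required form that is equitably 3-colorable exactly when $G$ is 3-edge-colorable.

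The structural properties of $H$ are verified first. Each edge of $G$ has $2+2$ neighboring edges, one pair at each endpoint, so $H$ is 4-regular. By Observation \ref{obs:LineGraphsClawFree}, $H$ is claw-free, and $H$ is connected whenever $G$ is (which we may assume). Since $|V(H)| = |E(G)|$ is even by assumption, Theorem \ref{thm:Sumner} then applies to $H$ and guarantees the existence of a perfect matching in it. This gives the target class: 4-regular graphs containing a perfect matching.

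Correctness of the reduction rests on the observation that a 3-edge-coloring of a cubic graph partitions $E(G)$ into three perfect matchings, each of size $|V(G)|/2 = |E(G)|/3$. Reading this partition on the vertex side of $L(G)$ produces three independent sets of equal cardinality, which is automatically an equitable 3-coloring of $H$. Conversely, an equitable 3-coloring of $H$ is in particular a proper 3-coloring of $H$, and under the standard bijection between $V(L(G))$ and $E(G)$ this translates back to a proper 3-edge-coloring of $G$. Since $L(G)$ can be constructed in polynomial time from $G$, this chain of equivalences transfers Holyer's NP-hardness to \textsc{Equitable Coloring} on 4-regular graphs with a perfect matching.

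The point requiring a small amount of care, rather than a genuine obstacle, is the matching of the two balancedness notions. One needs the simple arithmetic fact that $3 \mid |E(G)|$ in a cubic graph (which holds because $|E(G)| = 3|V(G)|/2$) so that an equitable 3-coloring forces all three color classes to have exactly the same cardinality and therefore coincides with a proper 3-coloring; otherwise a more delicate equitability argument would be needed. One should also confirm that the vertex-to-triangle replacement used to make $|E(G)|$ even preserves both bridgelessness and 3-edge-colorability, but this is standard and already invoked in the excerpt.
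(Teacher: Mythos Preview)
Your proposal is correct and follows essentially the same route as the paper: reduce from Holyer's 3-edge-colorability of (connected, bridgeless) cubic graphs via the line graph construction, invoke Observation~\ref{obs:LineGraphsClawFree} and Theorem~\ref{thm:Sumner} to obtain the perfect matching, and use the fact that the color classes of a 3-edge-coloring of a cubic graph are perfect matchings of equal size. The only quibble is your final paragraph: an equitable coloring is proper by definition, so the divisibility $3\mid |E(G)|$ is not actually needed for the equivalence as you have set it up (forward direction builds an equitable coloring explicitly; backward direction only uses that equitable implies proper), and the paper likewise does not invoke it.
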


\begin{proof} We start with the \textsc{3-Edge-Coloring} problem for connected bridgeless cubic graphs with even number of edges. For such a graph $G$, consider its line graph $L(G)$. Observe that $G$ is 3-edge-colorable if and only if $L(G)$ is 3-vertex-colorable. Moreover, since in any 3-edge-coloring of $G$, the color classes must form a perfect matching, we have that the color classes in $G$ have equal size. Thus, the color classes in any 3-vertex-coloring of $L(G)$ must have equal size, too. Hence, $G$ is 3-edge-colorable if and only if $\chi_{=}(L(G))=3$.

Now, observe that $L(G)$ is connected, since $G$ is connected. Moreover, it is 4-regular, since $G$ is cubic. Finally, $|V(L(G))|$ is even since $G$ has even number of edges. Thus, $L(G)$ has a perfect matching via Theorem \ref{thm:Sumner}. The proof is complete.
\end{proof}

\begin{corollary}
\textsc{Equitable coloring} is paraNP-hard with respect to $|V|-2\nu(G)$ in the class of 4-regular graphs.
\end{corollary}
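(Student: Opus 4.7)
The plan is to read this as an immediate corollary of Observation \ref{obs:EqColoringNPhardPerfectMatchings}. Recall that a parameterized problem is paraNP-hard as soon as it remains NP-hard when the parameter is fixed to some constant value, so the task reduces to exhibiting an NP-hard subclass on which the parameter $|V|-2\nu(G)$ is bounded.

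First I would observe that whenever a graph $G$ admits a perfect matching, one has $\nu(G)=|V|/2$, and therefore $|V|-2\nu(G)=0$. In particular, restricting the input to $4$-regular graphs that contain a perfect matching forces the parameter to the constant value $0$. Second, I would simply invoke Observation \ref{obs:EqColoringNPhardPerfectMatchings}, which asserts that \textsc{Equitable Coloring} is NP-hard on exactly this class of $4$-regular graphs containing a perfect matching. Combining these two facts, the problem is NP-hard on a class of inputs where the parameter is constant, which is by definition paraNP-hardness with respect to $|V|-2\nu(G)$.

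There is essentially no obstacle here: the whole content is in Observation \ref{obs:EqColoringNPhardPerfectMatchings}, and the corollary is a one-line reformulation in the parameterized vocabulary. The only thing worth spelling out in the write-up is the identity $|V|-2\nu(G)=0$ for graphs with a perfect matching, so that the reader sees immediately why the reduction from \textsc{3-Edge-Coloring} of bridgeless cubic graphs already gives paraNP-hardness, and not merely classical NP-hardness, for this parameterization on $4$-regular graphs.
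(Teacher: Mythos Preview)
Your proposal is correct and matches the paper's intent: the corollary is stated without proof immediately after Observation \ref{obs:EqColoringNPhardPerfectMatchings}, so the authors clearly mean it as the one-line consequence you describe. The only content is the identity $|V|-2\nu(G)=0$ on graphs with a perfect matching, combined with the NP-hardness from the preceding observation.
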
 

Now, we are going to show that the problem remains hard even in block graphs with a perfect matching. In \cite{EqParamFPT}, some results are obtained about the parameterized complexity of the \textsc{Equitable Coloring} problem in block graphs. We will use them in order to obtain some further results. The \textsc{Bin Packing} problem is defined as follows: given a set of natural numbers $A=\{a_1,...,a_n\}$, two natural numbers $k$ and $B$, the goal is to check whether $A$ can be partitioned into $k$ parts such that the sum of numbers in each part is exactly $B$. In \cite{UnaryBinPacking}, it is shown that \textsc{Bin Packing} remains $W[1]$-hard with respect to $k$ even when the numbers are represented in unary. Below we prove the following

\begin{observation}
\label{obs:BinPackingParitykfixed} \textsc{Bin Packing} remains $W[1]$-hard with respect to $k$ even when the parity of $k$ is fixed.
\end{observation}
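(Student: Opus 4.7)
The plan is to give a simple padding reduction from unary \textsc{Bin Packing} (without parity restrictions), which is $W[1]$-hard with respect to $k$ by the result of \cite{UnaryBinPacking}, to unary \textsc{Bin Packing} in which $k$ has the prescribed parity. Fix the target parity $p\in\{0,1\}$. Given an arbitrary instance $(A,k,B)$, I output the same instance $(A,k,B)$ if $k\equiv p\pmod 2$, and otherwise output $(A',k+1,B)$, where $A'=A\cup\{a^*\}$ with $a^*=B$ (a fresh item of weight exactly equal to the bin capacity).

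Correctness is immediate: in any feasible packing of $(A',k+1,B)$, the item $a^*$ of weight $B$ saturates its bin and hence must lie alone in some bin; the remaining $k$ bins therefore contain exactly the items of $A$ and form a valid packing of $(A,k,B)$. Conversely, from any packing of $(A,k,B)$ we obtain a packing of $(A',k+1,B)$ by adding one new bin containing $a^*$. Hence the two instances are equivalent, and in the padded case the new number of bins, $k+1$, has the required parity $p$.

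For the reduction to qualify as an FPT reduction, I note that the new parameter is $k$ or $k+1$, so it is bounded by the computable function $h(k)=k+1$ of the original parameter. Moreover, under unary encoding the appended item $a^*$ of weight $B$ has length at most the original input size, so the reduction runs in polynomial time. Combining these observations, \textsc{Bin Packing} with the restriction ``$k$ has parity $p$'' is $W[1]$-hard with respect to $k$, which is exactly the claim of the observation.

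There is no real obstacle here; the only point requiring a moment of thought is that the padded item must indeed be forced to occupy its own bin. This is guaranteed by taking its weight to equal the full capacity $B$ (since items have positive weight, nothing else can share its bin). An alternative padding, should one wish to avoid a single item of weight $B$, would be to add $B$ items of weight $1$ together with pairwise conflicts via the standard tricks, but the single heavy item is the shortest route.
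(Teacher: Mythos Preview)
Your proof is correct and follows essentially the same padding approach as the paper: output the instance unchanged if the parity is already right, and otherwise add items that must occupy exactly one extra bin, bumping $k$ to $k+1$. The only difference is cosmetic---the paper adds two items of weights $1$ and $B-1$ rather than your single item of weight $B$---and your choice is arguably cleaner, since the ``must sit alone in its bin'' argument is immediate.
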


\begin{proof} We reduce \textsc{Bin Packing} to \textsc{Bin Packing} with fixed parity of $k$. Let $I$ be an instance of \textsc{Bin Packing}. If we are happy with the parity of $k$ in $I$, then we output the same instance. Assume that we are unhappy with the parity of $k$. Then consider the instance $I'=(A',k',B')$ defined as follows:
\[A'=A\cup \{1, B-1\}, k'=k+1, B'=B.\]
Observe that $I'$ can be constructed from $I$ in polynomial time. Moreover, $k'$ has a different parity than $k$ in $I'$. Let us show that $I$ is a yes-instance if and only if $I'$ is a yes-instance. If $I$ is a yes-instance, then we can add $\{1, B-1\}$ as a new bin and we will have a $(k+1)$ partition of $A'$ such that the sum in each partition is $B$. Now, assume that $I'$ is a yes-instance. Observe that $1$ and $B-1$ must be in the same bin and no other number can be with them. 
Thus the remaining $k$ sets in the partition form a $k$-partition in $A$.
Thus, $I$ is a yes-instance. The proof is complete.
\end{proof}

\begin{corollary}
\label{cor:Blockgraphoddk} \textsc{Equitable Coloring} remains $W[1]$-hard with respect to $k$ (the number of colors) in block graphs with odd values of $k$.
\end{corollary}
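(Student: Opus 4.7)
The plan is to combine Observation~\ref{obs:BinPackingParitykfixed} with the W[1]-hardness reduction from \textsc{Bin Packing} to \textsc{Equitable Coloring} on block graphs established in \cite{EqParamFPT}. The key point is that in that reduction the target number of colors equals the number of bins of the source \textsc{Bin Packing} instance, up to a small additive constant independent of the instance. Observation~\ref{obs:BinPackingParitykfixed} supplies us with W[1]-hard \textsc{Bin Packing} instances having $k$ of any prescribed parity; feeding such instances (with the parity chosen to cancel that additive shift) into the reduction of \cite{EqParamFPT} produces block graph \textsc{Equitable Coloring} instances whose number of colors is odd, and since the composition of FPT reductions starting from a W[1]-hard problem is itself an FPT reduction from a W[1]-hard problem, W[1]-hardness transfers.

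The main obstacle is pinning down the exact dependence between the source parameter (number of bins) and the target parameter (number of colors) in the reduction of \cite{EqParamFPT}. Once one checks that this dependence is of the form $k_{\text{colors}} = k_{\text{bins}} + c$ for some fixed constant $c$, the proof is then a routine parity-matching argument using Observation~\ref{obs:BinPackingParitykfixed}. This mirrors the same padding philosophy already used in the proof of Observation~\ref{obs:BinPackingParitykfixed}, where a single dummy bin with items $\{1, B-1\}$ flipped the parity of $k$ without changing the yes/no answer.

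Should the reduction of \cite{EqParamFPT} be difficult to inspect directly, a fallback is a purely graph-theoretic padding on the \textsc{Equitable Coloring} side: given an instance $(G,k)$ with $k$ even, construct $G'$ as the disjoint union of $G$ with a block-graph gadget such as $K_{k+1}$ together with an independent set whose size is tuned so that $|V(G')| = (k+1)(s+1)$ (where $s$ is the common color class size in the reduced instance). The forward implication is immediate: a $k$-equitable coloring of $G$ extends to a $(k+1)$-equitable coloring of $G'$ by placing all independent-set vertices in the new color class together with the corresponding vertex of $K_{k+1}$. The delicate step is the reverse implication, since in general one cannot merge two color classes of $G$ without losing independence; this would require exploiting the block-graph structure (simplicial and cut-vertices, pendant and level-$2$ cliques developed in Subsection~\ref{subsec:GraphTheoryConcepts}) to force the independent set to collapse into a single color class in any equitable $(k+1)$-coloring of $G'$.
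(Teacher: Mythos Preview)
Your primary approach is correct and essentially identical to the paper's: the reduction in \cite{EqParamFPT} sends a \textsc{Bin Packing} instance with $k$ bins to equitable $(k+1)$-colorability of a block graph, so the additive constant is $c=1$, and restricting via Observation~\ref{obs:BinPackingParitykfixed} to even $k$ yields odd $k+1$. The fallback graph-padding argument is unnecessary (and, as you note, its reverse implication is not obviously sound), but the main line already suffices.
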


\begin{proof}In \cite{EqParamFPT}, the authors reduce the instance $I=(A,k,B)$ of (unary) \textsc{Bin Packing} to the equitable $(k+1)$-colorability of a block graph $G_I$. By Observation \ref{obs:BinPackingParitykfixed}, we can apply the same reduction only to instances of (unary) \textsc{Bin Packing} when $k$ is even. Thus, we will have that $(k+1)$ is odd for the resulting instances of \textsc{Equitable Coloring} in block graphs. The proof is complete.
\end{proof}

\begin{observation}
\label{obs:Eqblockgraphswithperfectmatching} \textsc{Equitable Coloring} remains $W[1]$-hard with respect to $k$ in block graphs with a perfect matching. 
\end{observation}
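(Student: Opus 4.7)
The plan is to construct an FPT reduction from the $W[1]$-hard problem of equitable $k$-coloring on block graphs with $k$ odd, established in Corollary \ref{cor:Blockgraphoddk}, to the restriction of this problem to block graphs that admit a perfect matching. Let $(G_I,k)$ be an instance of the former, with $|V(G_I)|=N=kB$ and $k$ odd. I construct $G'$ by introducing, for each $v \in V(G_I)$, $k$ fresh vertices $v', v''_1, \ldots, v''_{k-1}$, adding the edge $vv'$, and making $\{v',v''_1,\ldots,v''_{k-1}\}$ a clique of size $k$.

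First I would verify that $G'$ is a block graph: the new blocks beyond those inherited from $G_I$ are the $N$ pendant edges $\{v,v'\}$ and the $N$ pendant cliques $\{v',v''_1,\ldots,v''_{k-1}\} \cong K_k$, each of which is itself a clique. Next, $G'$ admits a perfect matching: include every edge $vv'$, and within each pendant $K_k$ the remaining $k-1$ vertices $v''_1,\ldots,v''_{k-1}$ induce $K_{k-1}$, which has a perfect matching because $k-1$ is even. The odd-$k$ hypothesis of Corollary \ref{cor:Blockgraphoddk} is used precisely here, to get a perfect matching inside $K_{k-1}$.

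The heart of the proof is showing that $G'$ is equitably $k$-colorable if and only if $G_I$ is. Since $|V(G')|=N(k+1)$, every equitable $k$-coloring of $G'$ has color classes of size $(k+1)B$. In any proper $k$-coloring of $G'$, each pendant $K_k$ is forced to use all $k$ colors exactly once, so the $N$ pendant cliques collectively contribute exactly $N$ vertices to every color class. Writing $a_i$ for the number of $V(G_I)$-vertices of color $i$, color class $i$ in $G'$ therefore has size $a_i + N$; equitability of $G'$ forces $a_i = (k+1)B - N = B$ for every $i$, which is equitability of $G_I$. Conversely, any equitable $k$-coloring of $G_I$ extends to $G'$ by assigning $v'$ any color different from $c(v)$ and then placing the remaining $k-1$ colors in some order on $v''_1,\ldots,v''_{k-1}$. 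The main obstacle I anticipated was finding a gadget simultaneously rigid enough to pin down the $a_i$'s via equitability of $G'$ and compatible with a perfect matching; attaching a $K_k$ at the tip of a pendant edge achieves both, with rigidity arising from $\chi(K_k)=k$ and matchability from the oddness of $k$.
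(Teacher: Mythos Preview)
Your proof is correct and follows essentially the same route as the paper: attach to each relevant vertex a pendant edge whose far endpoint sits in a fresh $K_k$, use the oddness of $k$ to perfectly match the leftover $K_{k-1}$, and use the fact that any proper $k$-coloring of $K_k$ hits every color exactly once to transfer equitability between $G_I$ and $G'$. The only difference is cosmetic: the paper attaches the gadget just to the vertices missed by some fixed maximum matching of $G_I$ (so that $M(G_I)$ together with the gadget edges already gives the perfect matching), whereas you attach it to every vertex; either choice yields the same reduction, and your argument for the equitability equivalence goes through unchanged even without the assumption $N=kB$, since the gadgets add exactly $N$ vertices to every color class.
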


\begin{proof} Let us start with any instance of \textsc{Equitable Coloring} in block graphs where $k$ is odd. We will construct a graph $G'$ being an instance for the same  problem, but $G'$ will have a perfect matching such that $G$ has equitable $k$-coloring if and only if $G'$ does.

If $G$ has a perfect matching, then $G':=G$ and we are done. So, suppose $G$ does not have a perfect matching. Let $M(G)$ be a matching of the largest size in $G$ and  $V_M(G)=\{v\in V(G): v \not \in e,$ for any $e\in M(G)\}$, i.e. $V_M(G)$ is a subset of vertices of $G$ such that they are not end vertices of any edge belonging to $M(G)$, they are not covered by $M$. Recall that $k$ is odd. We construct $G'$ from $G$ by adding to every vertex of $G$ not covered by $M(G)$ an edge with a pendant clique $K_k$ (cf. Figure \ref{fig:gadgets}).

Since in every $k$-coloring of the added gadgets every color is used exactly the same number of times, then we immediately get the equivalence of equitable $k$-coloring of $G$ and $G'$. The proof is complete. 
\begin{figure}[htb]
    \centering
     \begin{tikzpicture}
  
  \node at (1.35,1) {$v$};
  \node at (0,1.5) {$G$};    
  \tikzstyle{every node}=[circle, draw, fill=black!50,
                        inner sep=0pt, minimum width=4pt]
     \node[circle,fill=black,draw] at (1,1) (n11) {};                     
    \node[circle,fill=black,draw] at (1,0) (n10) {}; 
    \node[circle,fill=black,draw] at (0,-1) (n0m1) {}; 
     \node[circle,fill=black,draw] at (2,-1) (n2m1) {}; 
    
    \path[every node]
            
            (n11) edge (n10)
            (n10) edge (n0m1)
            (n0m1) edge (n2m1)
            (n2m1) edge (n10);
            
            \draw[dashed] (-0.5,1.5) .. controls (-0.5,0.5) and (2,0.5) .. (2.5,1.5);
        
\end{tikzpicture}
    \caption{The construction of $G'$ from $G$ (cf. the proof of Observation \ref{obs:Eqblockgraphswithperfectmatching}): we add an edge with a clique of size $k$ to every vertex of $G$ not covered by the maxmimum matching. In this example, $k=3$.}
    \label{fig:gadgets}
\end{figure}
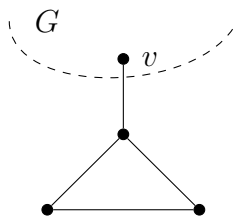
\end{proof}

\begin{corollary}
\label{cor:Eqblockgraphswithperfectmatching} \textsc{Equitable Coloring} is $W[1]$-hard with respect to $k+(|V|-2\nu(G))$ in block graphs. 
\end{corollary}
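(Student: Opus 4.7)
My plan is to derive this corollary directly from Observation \ref{obs:Eqblockgraphswithperfectmatching} by verifying that the reduction constructed there is in fact an FPT reduction with respect to the combined parameter $k+(|V|-2\nu(G))$, not just with respect to $k$ alone.

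The key observation is that the block graph $G'$ produced in the proof of Observation \ref{obs:Eqblockgraphswithperfectmatching} is specifically engineered to contain a perfect matching. Indeed, for every vertex $v$ of the original $G$ that is uncovered by a maximum matching $M(G)$, the construction attaches a new pendant $K_k$ via a pendant edge incident to $v$; this pendant edge matches $v$, and since $k$ is odd, the attached $K_k$ contains an even number of vertices apart from the attachment vertex, so a perfect matching inside the clique can be chosen. Combined with the edges of $M(G)$ inside $G$, this yields a perfect matching of $G'$.

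Consequently $\nu(G')=|V(G')|/2$, so $|V(G')|-2\nu(G')=0$, and hence
\[k + \bigl(|V(G')|-2\nu(G')\bigr) \;=\; k.\]
Thus the map sending an instance $(G,k)$ of the W[1]-hard problem in Corollary \ref{cor:Blockgraphoddk} (\textsc{Equitable Coloring} in block graphs parameterized by $k$, with $k$ odd) to $(G',k)$ satisfies all three requirements of an FPT reduction: it is polynomial-time computable, it preserves yes/no-instances by Observation \ref{obs:Eqblockgraphswithperfectmatching}, and the new parameter is bounded by (in fact equal to) the old one.

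I do not anticipate any obstacle here: the entire content of the corollary is the bookkeeping identity $|V(G')|-2\nu(G')=0$ for graphs with a perfect matching, which promotes the already-established W[1]-hardness with respect to $k$ to W[1]-hardness with respect to the combined parameter $k+(|V|-2\nu(G))$ via Lemma \ref{lem:Redparam} applied to the class of block graphs with a perfect matching.
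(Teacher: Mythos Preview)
Your proposal is correct and matches the paper's intended argument: the paper states this corollary without proof immediately after Observation~\ref{obs:Eqblockgraphswithperfectmatching}, treating it as the trivial consequence that on the produced instances $G'$ one has $|V(G')|-2\nu(G')=0$, so the combined parameter equals $k$. Your write-up just makes explicit the perfect-matching verification in the gadget (the pendant edge covers $v$, and the remaining $k-1$ vertices of the attached $K_k$ are matched internally since $k$ is odd); the closing appeal to Lemma~\ref{lem:Redparam} is unnecessary since you have already exhibited the FPT reduction directly, but it does no harm.
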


\begin{corollary}
If FPT$\neq W[1]$ then \textsc{Equitable Coloring} in block graphs is not FPT with respect to $|V|-2\nu(G)$.
\end{corollary}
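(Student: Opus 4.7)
The plan is to derive this corollary directly as a contrapositive consequence of Corollary \ref{cor:Eqblockgraphswithperfectmatching}, using the parameter-reduction lemma already in the paper. Specifically, I would argue: for every block-graph instance of \textsc{Equitable Coloring}, the inequality
\[|V|-2\nu(G) \;\le\; k + (|V|-2\nu(G))\]
holds trivially, so taking $k_1 := |V|-2\nu(G)$, $k_2 := k+(|V|-2\nu(G))$, and $g$ the identity, the hypothesis of Lemma \ref{lem:Redparam} is satisfied. Hence, if \textsc{Equitable Coloring} in block graphs were FPT with respect to $|V|-2\nu(G)$, Lemma \ref{lem:Redparam} would force it to be FPT with respect to $k+(|V|-2\nu(G))$ as well. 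But Corollary \ref{cor:Eqblockgraphswithperfectmatching} states that the problem is $W[1]$-hard with respect to the latter parameter, so under the assumption FPT$\neq W[1]$ we reach a contradiction.

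A second, slightly more concrete route I would mention (and which is morally the same) is the restriction argument: suppose for contradiction an algorithm $A$ solves \textsc{Equitable Coloring} on every block graph $G$ in time $g(|V|-2\nu(G))\cdot \mathrm{poly}(|V|)$. Then restricting $A$ to block graphs that possess a perfect matching (where $|V|-2\nu(G)=0$), the running time collapses to $g(0)\cdot \mathrm{poly}(|V|)$, i.e.\ a polynomial-time algorithm on this subclass. In particular, the problem on that subclass would be trivially FPT with respect to the number of colors $k$, contradicting Observation \ref{obs:Eqblockgraphswithperfectmatching} under FPT$\neq W[1]$.

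There is no real technical obstacle here: the corollary is a bookkeeping consequence of the stronger statement just proved, and both approaches essentially reduce to the observation that fixing $|V|-2\nu(G)=0$ is exactly the perfect-matching case already shown to be $W[1]$-hard in $k$. I would therefore keep the proof to a few lines, citing Corollary \ref{cor:Eqblockgraphswithperfectmatching} (or Observation \ref{obs:Eqblockgraphswithperfectmatching}) together with Lemma \ref{lem:Redparam}, and conclude.
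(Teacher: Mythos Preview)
Your proposal is correct and, in particular, your second ``restriction'' route is exactly the paper's own proof: assume an FPT algorithm in $|V|-2\nu(G)$, observe that on block graphs with a perfect matching the parameter is $0$ so the algorithm is polynomial-time (hence trivially FPT in $k$), and derive a contradiction with Observation~\ref{obs:Eqblockgraphswithperfectmatching} under FPT$\neq$W[1]. Your first route via Lemma~\ref{lem:Redparam} and Corollary~\ref{cor:Eqblockgraphswithperfectmatching} is an equally valid repackaging of the same idea.
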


\begin{proof}
If \textsc{Equitable Coloring} in block graphs were FPT with respect to $|V|-2\nu(G)$ then it would be polynomial time solvable for block graphs with a perfect matching. Hence it would be FPT with respect to $k$ for block graphs containing a perfect matching. By the previous observation, we know that it is W[1]-hard with respect to $k$ for block graphs with a perfect matching. Contradiction.
\end{proof}

\section{Equitable coloring and other structural parameters}
\label{sec:otherparameters}

%In Subsection \ref{sec:alpha}, we used the parameter distance to cluster as a lower bound to $\alpha_{\min}$ for block graphs. In consequence we showed that \textsc{Equitable Coloring} is FPT for block graphs when parameterized by $\alpha_{\min}$. Now, we establish some new similar results.

%For a graph $G$, let $rad(G)$ denote the radius of $G$ and let $\epsilon_{G}(q)$ be the eccentricity of the vertex $q$ in the graph $G$. Of course, $rad(G)=\min_{q\in V(G)}\epsilon_G(q)$. 

In this section we try to strengthen the parameterized complexity results of \textsc{Equitable Coloring} problem by giving some new relations between structural parameters of block graphs. First, we need some auxiliaries.

\begin{proposition}
    \label{prop:SimplicalVertexAlpha} Let $G$ be a block graph and let $w$ be a simplicial vertex. Then $\alpha(G,w)=\alpha(G)$.
\end{proposition}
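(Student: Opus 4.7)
The plan is to start from an arbitrary maximum independent set $S$ of $G$ (so $|S|=\alpha(G)$) and show that either $w\in S$ already, or $S$ can be modified into an independent set of the same size that contains $w$. This exchange argument is standard for simplicial vertices; the key feature I want to exploit is that, since $w$ is simplicial in a block graph, all neighbors of $w$ lie in a single block, which is a clique. Call this clique $Q$, so $N(w)\subseteq Q\setminus\{w\}$.

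First I would observe that, because $Q$ is a clique, any independent set (in particular $S$) meets $Q$ in at most one vertex. I would then split into two cases based on $S\cap Q$.

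In the case $S\cap Q=\emptyset$, none of $w$'s neighbors lies in $S$, so $S\cup\{w\}$ is still independent; but then $|S\cup\{w\}|>|S|=\alpha(G)$, contradicting maximality. Hence this case cannot occur unless $w\in S$ already, in which case we are done. In the remaining case $S\cap Q=\{v\}$ for some vertex $v$, either $v=w$ (and we are done) or $v\neq w$. In the latter situation, form $S'=(S\setminus\{v\})\cup\{w\}$. To verify $S'$ is independent, note that the only potential conflicts involve $w$, and since $N(w)\subseteq Q\setminus\{w\}$ while $S\cap Q=\{v\}$, removing $v$ leaves no neighbor of $w$ in the set; moreover $S\setminus\{v\}$ remains independent as a subset of $S$. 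Then $|S'|=|S|=\alpha(G)$, and $w\in S'$, which gives $\alpha(G,w)\geq\alpha(G)$. The reverse inequality $\alpha(G,w)\leq\alpha(G)$ is immediate from the definition.

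There is no real obstacle here: the only substantive point is that a simplicial vertex in a block graph has its entire neighborhood inside one clique, which forces $|S\cap Q|\le 1$ and makes the swap safe. The argument would break in a general graph because a simplicial vertex's neighbors need not exhaust the vertices that could conflict with $w$ in $S$; in our setting the block structure guarantees exactly the containment $N(w)\subseteq Q$ that we need.
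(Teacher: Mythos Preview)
Your proof is correct and follows essentially the same approach as the paper: take a maximum independent set and, if it misses $w$, swap the unique vertex it contains in the clique $Q$ around $w$ for $w$ itself. Your case analysis is slightly more explicit than the paper's (you spell out why $S\cap Q\neq\emptyset$ via maximality, which the paper leaves implicit), but the argument is the same.
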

\begin{proof} Let $I$ be an independent set of $G$ of size $\alpha(G)$. If $w\in I$ then we are done. Thus, we can assume that $w\notin I$. Hence, there is a vertex $u\in I$ that lies in the unique clique $Q$ containing $w$. Consider the set $I'$ obtained from $I$ by replacing $u$ with $w$. Observe that $I'$ is an independent set of size $\alpha(G)$ and it contains $w$. The proof is complete. 
\end{proof}

\begin{lemma}
\label{lem:CutVertexSimplVertex} Let $G$ be a block graph, $v$ be a cut-vertex, and let $w$ be any simplicial vertex of $G$. Then $\alpha(G,v)\leq \alpha(G,w)$.
\end{lemma}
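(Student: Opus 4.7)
The plan is to reduce this lemma essentially to the immediately preceding Proposition \ref{prop:SimplicalVertexAlpha}. That proposition establishes that for \emph{any} simplicial vertex $w$ of a block graph, the maximum independent set containing $w$ has the full size $\alpha(G)$, i.e.\ $\alpha(G,w)=\alpha(G)$. Given this, the conclusion of the lemma is essentially automatic.

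Concretely, the argument I would write consists of two lines. First, by the very definition of $\alpha(G,v)$ as the maximum size of an independent set that \emph{contains} $v$, every such independent set is in particular an independent set of $G$, so trivially
\[\alpha(G,v)\leq \alpha(G).\]
Second, applying Proposition \ref{prop:SimplicalVertexAlpha} to the simplicial vertex $w$, we get $\alpha(G,w)=\alpha(G)$. Chaining these,
\[\alpha(G,v)\leq \alpha(G)=\alpha(G,w),\]
which is exactly the claim.

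There is really no obstacle here: the lemma is a direct corollary of Proposition \ref{prop:SimplicalVertexAlpha}, with no case analysis needed and no assumption used about $v$ beyond the fact that it is a vertex of $G$ (the hypothesis that $v$ is a cut-vertex is not even required for the inequality, although it is presumably the intended use case when the lemma is later invoked). The only content-bearing step is the swap argument inside the proposition itself, namely that whenever an optimal independent set $I$ avoids a simplicial vertex $w$, exactly one vertex $u$ of the unique clique containing $w$ lies in $I$, and replacing $u$ by $w$ yields an independent set of the same size that now contains $w$. That swap has already been established in Proposition \ref{prop:SimplicalVertexAlpha}, so the proof of Lemma \ref{lem:CutVertexSimplVertex} itself reduces to invoking it together with the obvious bound $\alpha(G,v)\leq \alpha(G)$.
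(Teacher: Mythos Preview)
Your proof is correct and mirrors the paper's own argument: the paper simply states that the lemma follows directly from Proposition~\ref{prop:SimplicalVertexAlpha}, which is exactly the chain $\alpha(G,v)\leq \alpha(G)=\alpha(G,w)$ you wrote out. Your observation that the cut-vertex hypothesis on $v$ is not actually needed for the inequality is also correct.
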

\begin{proof} The statement follows directly from Proposition \ref{prop:SimplicalVertexAlpha}. 
\end{proof}

\noindent The lemma implies
\begin{corollary}
\label{cor:CutVertexAlphaMin} For any block graph $G$ containing a cut-vertex, there is a cut-vertex $v$ such that $\alpha(G,v)=\alpha_{\min}(G)$.
\end{corollary}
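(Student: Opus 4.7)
The plan is to observe that the minimum in the definition of $\alpha_{\min}(G) = \min_{v \in V(G)} \alpha(G,v)$ must be attained by some vertex of $G$, and this vertex is necessarily either a simplicial vertex or a cut-vertex (since every vertex of a block graph is one or the other). I would then split into two cases according to which type of vertex attains this minimum, and use the previous results to push the attainment onto a cut-vertex in either case.

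In the first case, if some cut-vertex $v^*$ already satisfies $\alpha(G,v^*) = \alpha_{\min}(G)$, we are immediately done, since $G$ is assumed to contain a cut-vertex and we can simply take $v = v^*$. So the substantive case is when the minimum is attained only at a simplicial vertex $w^*$. By Proposition \ref{prop:SimplicalVertexAlpha}, this means $\alpha_{\min}(G) = \alpha(G,w^*) = \alpha(G)$, so in particular $\alpha(G,u) \geq \alpha_{\min}(G) = \alpha(G)$ for every vertex $u \in V(G)$; combined with the trivial upper bound $\alpha(G,u) \leq \alpha(G)$, we get that every vertex (and in particular every cut-vertex) achieves the minimum.

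A slightly cleaner way to package the argument, which I would probably use in the writeup, is to invoke Lemma \ref{lem:CutVertexSimplVertex} directly: since $G$ has a cut-vertex by assumption, fix any cut-vertex $v$ and any simplicial vertex $w$ (simplicial vertices exist because every finite block graph has at least one pendant clique, hence simplicial vertices). The lemma gives $\alpha(G,v) \leq \alpha(G,w)$, so
\[
\alpha_{\min}(G) \;\leq\; \min_{v \text{ cut-vertex}} \alpha(G,v) \;\leq\; \min_{w \text{ simplicial}} \alpha(G,w).
\]
Since every vertex of $G$ is either a cut-vertex or simplicial, the right-hand minimum together with the left subminimum exhaust all vertices, and the outermost minimum is actually achieved by a cut-vertex. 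Hence there is a cut-vertex $v$ with $\alpha(G,v) = \alpha_{\min}(G)$, as required.

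There is no real obstacle here: the statement is essentially a bookkeeping consequence of Proposition \ref{prop:SimplicalVertexAlpha} and Lemma \ref{lem:CutVertexSimplVertex}. The only mild subtlety to be careful about is to note explicitly that $G$ has at least one simplicial vertex (otherwise the comparison in Lemma \ref{lem:CutVertexSimplVertex} would be vacuous), but this is immediate from the existence of a pendant clique in any finite block graph with at least two vertices.
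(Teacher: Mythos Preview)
Your proof is correct and takes essentially the same approach as the paper: a case split depending on whether $\alpha_{\min}(G)=\alpha(G)$ (equivalently, whether the minimum is attained at a simplicial vertex), together with Proposition~\ref{prop:SimplicalVertexAlpha} and Lemma~\ref{lem:CutVertexSimplVertex}. The paper's writeup is terser but the logical content is identical.
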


\begin{proof} If $\alpha_{\min}(G)=\alpha(G)$, there is nothing to prove. On the other hand, if $\alpha_{\min}(G)<\alpha(G)$, then Lemma \ref{lem:CutVertexSimplVertex} implies that the minimum of $\alpha(G,z)$, $z \in V(G)$, is attained on cut-vertices. The proof is complete.
\end{proof}

\begin{lemma}
\label{lem:AddCliqeVertex} Let $G$ be a block graph obtained from a block graph $H$ by adding a clique $K$ to a vertex $u$ of $H$. Then, for any vertex $v\neq u$ of $H$, we have $\alpha(G,v)\leq 1+\alpha(H,v)$.
\end{lemma}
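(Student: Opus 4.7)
The plan is to take a maximum independent set of $G$ containing $v$ and truncate it into an independent set of $H$ containing $v$, losing at most one vertex in the process. The key structural observation is that the attached clique $K$ is a clique of $G$ whose intersection with $V(H)$ is exactly $\{u\}$, and every vertex of $K \setminus \{u\}$ is simplicial in $G$ with its neighbors all lying in $K$.

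Concretely, I would fix an independent set $I \subseteq V(G)$ with $v \in I$ and $|I| = \alpha(G,v)$, and consider $I' := I \setminus K$. Since $K$ is a clique in $G$, we have $|I \cap K| \leq 1$, so $|I'| \geq |I| - 1$. Next I would verify that $I' \subseteq V(H)$: this follows from the equality $V(G) \setminus (K \setminus \{u\}) = V(H)$ together with $u \in K$. I would then observe that $v \in I'$, using the hypothesis $v \neq u$ and the fact that $K \cap V(H) = \{u\}$, which forces $v \notin K$. Finally, since $H$ is an induced subgraph of $G$, the set $I'$, being the intersection of an independent set with a vertex subset, remains independent in $H$.

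Putting it together, $I'$ is an independent set of $H$ containing $v$, hence $|I'| \leq \alpha(H, v)$, and so
\[
\alpha(G,v) \;=\; |I| \;\leq\; |I'| + 1 \;\leq\; \alpha(H,v) + 1,
\]
which is the desired inequality. There is no real obstacle here; the only point that requires a bit of care is confirming $v \notin K$, which is exactly where the hypothesis $v \neq u$ is used, and ensuring that the independence of $I'$ in $H$ follows from $H$ being an induced subgraph of $G$ (so that no edges of $H$ among $I'$ could have been ``missed'').
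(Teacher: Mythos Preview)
Your proof is correct and follows essentially the same approach as the paper's own proof: take a maximum independent set $I$ of $G$ containing $v$, remove its at most one vertex in $K$, and observe that the remainder is an independent set of $H$ containing $v$. If anything, your argument is slightly more careful, since you explicitly justify why $v\in I'$ via the hypothesis $v\neq u$ and write $|I'|\geq |I|-1$ rather than the paper's (mildly sloppy) claim that the truncated set has size exactly $\alpha(G,v)-1$.
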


\begin{proof}
If $I$ is an independent set of $G$ of size $\alpha(G,v)$ containing $v$ then clearly $I$ can contain at most one vertex of $K$. Thus, we consider the set $I$ minus this vertex. It is an independent set of size $\alpha(G,v)-1$ in $H$. Thus, $\alpha(G,v)-1\leq \alpha(H,v)$, or equivalently, $\alpha(G,v)\leq 1+ \alpha(H,v)$. The proof is complete.
\end{proof}

\begin{lemma}
\label{lem:SubgraphAlphaMin} Let $G$ and $H$ be block graphs, such that $H$ is an induced subgraph of $G$. Assume that $v\in V(H)$ and $\alpha(G, v)=\alpha_{\min}(G)$. Then $\alpha_{\min}(H)\leq \alpha_{\min}(G)$.
\end{lemma}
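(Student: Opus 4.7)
The plan is to note that this lemma is essentially a monotonicity statement about $\alpha(\cdot, v)$ under taking induced subgraphs, and that the block graph hypothesis actually plays no role in the argument. The proof will consist of a short chain of three inequalities/equalities.

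First, I would invoke the definition of $\alpha_{\min}$. Since $v \in V(H)$, we immediately have
\[\alpha_{\min}(H) = \min_{u \in V(H)} \alpha(H,u) \leq \alpha(H,v).\]
This is just the definition of minimum; no structural information about $H$ is needed here.

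Next, I would argue that $\alpha(H,v) \leq \alpha(G,v)$. The key point is that $H$ is an \emph{induced} subgraph of $G$, so if $I \subseteq V(H)$ is an independent set in $H$ containing $v$, then every pair of vertices in $I$ is non-adjacent in $H$, and therefore, by the induced-subgraph property (no new edges are introduced in $G$ between vertices of $V(H)$), also non-adjacent in $G$. Thus $I$ is an independent set in $G$ containing $v$, giving $|I| \leq \alpha(G,v)$. Taking the maximum over such $I$ yields $\alpha(H,v) \leq \alpha(G,v)$.

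Finally, the hypothesis $\alpha(G,v) = \alpha_{\min}(G)$ chains these together:
\[\alpha_{\min}(H) \leq \alpha(H,v) \leq \alpha(G,v) = \alpha_{\min}(G),\]
which is exactly the claim. There is no real obstacle here; the only thing to be careful about is that the middle inequality uses that $H$ is an \emph{induced} subgraph rather than merely a subgraph (otherwise an independent set of $H$ might fail to be independent in $G$ only if $G$ lacked edges present in $H$, which cannot happen, but the induced condition makes the argument immediate and symmetric). This is why the block graph hypothesis, though stated, is never invoked in the proof.
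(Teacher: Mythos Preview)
Your proof is correct and essentially identical to the paper's: both chain $\alpha_{\min}(H) \leq \alpha(H,v) \leq \alpha(G,v) = \alpha_{\min}(G)$, with the middle inequality coming from the fact that an independent set of the induced subgraph $H$ remains independent in $G$. Your parenthetical about non-induced subgraphs has the direction reversed (the concern would be that $G$ has \emph{extra} edges among $V(H)$, not that it lacks edges of $H$), but this is an aside that does not affect the argument.
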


\begin{proof} Let $I$ be a largest independent set of $H$ containing $v$. Clearly, $I$ is an independent set in $G$. Hence,
\[\alpha(H,v)=|I|\leq \alpha(G, v),\]
therefore
\[\alpha_{\min}(H)\leq \alpha(H,v)\leq \alpha(G, v)=\alpha_{\min}(G).\] 
The proof is complete.
\end{proof}

\begin{lemma}
\label{lem:PendantCliqueAtleast3Vertices} Let $G$ be a block graph and let $J$ be a pendant clique in $G$ with at least 3 vertices. Consider the block graph $H$ obtained from $G$ by removing one of the simplicial vertices of $J$. Let this vertex be $y$. Then $rad(G)\leq rad(H)$.
\end{lemma}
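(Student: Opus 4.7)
The plan is to use the fact that since $|J|\geq 3$, there is a second simplicial vertex $y'\in J$ distinct from $y$, which remains in $H$. The strategy is to pick a center vertex $v^*$ of $H$ (so $\epsilon_H(v^*)=rad(H)$), view $v^*$ as a vertex of $G$, and show that its eccentricity in $G$ is also bounded by $rad(H)$. This directly yields $rad(G)\leq \epsilon_G(v^*)\leq rad(H)$.

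The two ingredients I would establish first are distance-preservation statements. First, for any two vertices $u,w\in V(H)$, one has $d_G(u,w)=d_H(u,w)$. The reason is that $y$ is a simplicial vertex whose neighborhood $N_G(y)\subseteq J\setminus\{y\}$ forms a clique; hence any walk that passes through $y$ can be shortcut by replacing the two-edge segment $x\,y\,x'$ (with $x,x'\in J\setminus\{y\}$) by the single edge $xx'$. So no shortest $u$--$w$ path in $G$ needs $y$, and the induced subgraph distance equals the ambient one. Second, for every $v\in V(H)$, the identity $d_G(v,y)=d_G(v,y')$ holds: if $v\in J$ then both distances equal $1$; otherwise every $v$--$y$ path in $G$ must traverse the unique cut-vertex $c$ of the pendant clique $J$, so $d_G(v,y)=d_G(v,c)+1=d_G(v,y')$.

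Now let $v^*\in V(H)$ be chosen so that $\epsilon_H(v^*)=rad(H)$. Using the first ingredient,
\[
\max_{u\in V(H)} d_G(v^*,u) \;=\; \max_{u\in V(H)} d_H(v^*,u) \;=\; \epsilon_H(v^*) \;=\; rad(H).
\]
Using the second ingredient together with the fact that $y'\in V(H)$,
\[
d_G(v^*,y)\;=\;d_G(v^*,y')\;=\;d_H(v^*,y')\;\leq\;\epsilon_H(v^*)\;=\;rad(H).
\]
Combining these two bounds,
\[
\epsilon_G(v^*)\;=\;\max\!\Bigl(\max_{u\in V(H)} d_G(v^*,u),\; d_G(v^*,y)\Bigr)\;\leq\;rad(H),
\]
and therefore $rad(G)\leq \epsilon_G(v^*)\leq rad(H)$, which is the desired inequality.

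The only mildly delicate point, and the one I would state carefully, is the hypothesis $|J|\geq 3$: it is precisely what guarantees the existence of a simplicial "substitute" $y'\in J\setminus\{y\}$ in $H$. Without it the pendant clique would collapse to a single edge and the argument $d_G(v^*,y)=d_G(v^*,y')$ would have no target vertex $y'$ in $H$. Everything else reduces to the standard shortest-path arithmetic through the cut-vertex $c$ of $J$, so I do not expect a real technical obstacle beyond making this distinction explicit.
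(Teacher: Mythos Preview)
Your argument is correct and follows the same strategy as the paper: pick a center vertex $u$ of $H$ and show $\epsilon_G(u)\le rad(H)$ by exploiting a second simplicial vertex $y'\in J$ that survives in $H$. One small imprecision: the identity $d_G(v,y)=d_G(v,y')$ fails when $v=y'$ (the values are $1$ and $0$), but the needed bound $d_G(v^*,y)\le \epsilon_H(v^*)$ still holds trivially in that case since $|V(H)|\ge 2$; the paper avoids this corner case by first noting that the center vertex can be chosen outside $J\setminus\{x\}$ (where $x$ is the cut-vertex of $J$).
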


\begin{proof} Let $u$ be a vertex of $H$ such that $rad(H)=\epsilon_H(u)$. We can assume that $u\notin (J-x)$, where $x$ is the cut-vertex contained in $J$. Thus $\epsilon_H(u)= \epsilon_G(u)$, as if $P$ is a shortest $u-y$ path in $H$, then there is a path of the same length that does not end in $y$. Therefore
\[rad(G)\leq \epsilon_G(u)=\epsilon_H(u)=rad(H).\]
The proof is complete.
\end{proof}

\begin{lemma}
\label{lem:TwoPendantCliques} Let $G$ be a connected block graph. If $G$ has two pendant cliques, $J$ and $K$, with a common cut-vertex $v$, and $J$ has a simplicial vertex $y$, then  $rad(G)\leq rad(G-y)$.
\end{lemma}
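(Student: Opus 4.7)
The plan is to exhibit a vertex $u^{*}\in V(G)$ with $\epsilon_{G}(u^{*})\le rad(G-y)$, which suffices since $rad(G)\le\epsilon_{G}(u^{*})$. The natural first candidate is a center $u^{*}$ of $G-y$, so that $\epsilon_{G-y}(u^{*})=rad(G-y)$. A crucial preliminary observation is that, since $y$ is simplicial in the clique $J$, all neighbors of $y$ lie in $J$. Hence $y$ cannot be an internal vertex of any shortest path: any subpath $a$--$y$--$b$ would have $a,b$ adjacent in the clique $J$ and could be shortcut by the edge $ab$. Consequently $d_{G}(a,b)=d_{G-y}(a,b)$ for all $a,b\in V(G-y)$, and therefore $\epsilon_{G}(u^{*})=\max\bigl(rad(G-y),\,d_{G}(u^{*},y)\bigr)$. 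The whole task reduces to bounding $d_{G}(u^{*},y)$ by $rad(G-y)$.

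I would then split according to the location of $u^{*}$. If $u^{*}\in J$, then $u^{*}$ and $y$ lie in a common clique, so $d_{G}(u^{*},y)=1\le rad(G-y)$ (the latter holds because $K$ guarantees $|V(G-y)|\ge 2$, hence $rad(G-y)\ge 1$), and we conclude. If $u^{*}\notin J$, then every $u^{*}$--$y$ path is forced through the cut-vertex $v$, yielding $d_{G}(u^{*},y)=d_{G-y}(u^{*},v)+1$. In the favorable subcase where this is at most $rad(G-y)$, the same argument finishes. The genuinely delicate subcase is when $d_{G-y}(u^{*},v)+1>rad(G-y)$, which combined with the trivial inequality $d_{G-y}(u^{*},v)\le\epsilon_{G-y}(u^{*})$ forces $d_{G-y}(u^{*},v)=rad(G-y)$.

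The main obstacle is handling this residual case, and this is precisely where the second pendant clique $K$ becomes essential. Choose any simplicial vertex $w\in K\setminus\{v\}$ (it exists since $|K|\ge 2$). If $u^{*}\notin K$, every $u^{*}$--$w$ path must pass through $v$, giving $d_{G-y}(u^{*},w)=rad(G-y)+1$, which contradicts $\epsilon_{G-y}(u^{*})=rad(G-y)$. Hence $u^{*}\in K$, and since $u^{*}\notin J$ we have $u^{*}\ne v$, making $u^{*}$ a simplicial vertex of $K$; thus $d_{G-y}(u^{*},v)=1$ and $rad(G-y)=1$. Then $u^{*}$ is a universal vertex of $G-y$, yet its neighborhood sits inside $K$, forcing $V(G-y)=K$ and hence $J=\{v,y\}$ and $V(G)=K\cup\{y\}$. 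In this degenerate configuration $u^{*}$ no longer works as a witness, so I switch to the witness $v$: since $v$ is adjacent to every other vertex of $G$, we obtain $\epsilon_{G}(v)=1=rad(G-y)$, closing the argument. The key twist is therefore the re-selection of the witness in this corner case, together with the systematic use of $K$ to confine where an uncooperative center of $G-y$ can possibly lie.
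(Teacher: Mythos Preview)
Your proof is correct and follows the same strategy as the paper: take a center $u$ of $H=G-y$ and show $\epsilon_G(u)\le\epsilon_H(u)$, using a simplicial vertex of the second pendant clique $K$ as a surrogate for $y$. Your treatment is in fact more careful than the paper's one-line justification, since you explicitly isolate and resolve the degenerate configuration $V(G-y)=K$ in which the initially chosen center fails and one must switch the witness to $v$.
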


\begin{proof} Let $H=G-y$ and let $u$ be a vertex of $H$ such that $rad(H)=\epsilon_H(u)$. Then $\epsilon_G(u)\leq \epsilon_H(u)$, as if $P$ is a $u-y$ path in $G$ of length $\epsilon_G(u)$, then there is a path of the same length $\epsilon_G(u)$ that does not end in $y$. Hence,
\[rad(G)\leq \epsilon_G(u)\leq \epsilon_H(u)=rad(H).\]
The proof is complete. 
\end{proof}

\begin{lemma}
\label{lem:Level2cliqueSimplicialVertex} Let $G$ be a block graph. Assume $Q$ is a level 2 clique in $G$, such that all its vertices are either simplicial or they belong to exactly one pendant clique, excluding one unique vertex of $Q$ that may lie in other non-pendant cliques. Then for any simplicial vertex $y$ in $Q$, $$rad(G)\leq rad(G-y).$$
\end{lemma}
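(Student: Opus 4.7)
The plan is to argue in the same style as Lemmas \ref{lem:PendantCliqueAtleast3Vertices} and \ref{lem:TwoPendantCliques}. Set $H := G - y$ and pick a vertex $u \in V(H)$ with $\epsilon_H(u) = rad(H)$. I will show that $\epsilon_G(u) \leq \epsilon_H(u)$, which at once gives $rad(G) \leq \epsilon_G(u) \leq rad(H)$. Since $y$ is simplicial in $G$, all its neighbors lie in the clique $Q$; any shortest path passing through $y$ as $\ldots q_1 y q_2 \ldots$ with $q_1, q_2 \in Q$ can be shortcut to $\ldots q_1 q_2 \ldots$ using the edge $q_1 q_2$ of $Q$, so removing $y$ does not increase any pairwise distance and $d_G(u, x) = d_H(u, x)$ for every $x \in V(H)$. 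Thus the whole task reduces to establishing $d_G(u, y) \leq \epsilon_H(u)$.

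Next, I would unpack the structural hypothesis on $Q$. Partition $V(Q) = \{z\} \cup \{v_1, \ldots, v_k\} \cup \{w_1, \ldots, w_m\}$, where $z$ is the distinguished vertex that may lie in other non-pendant cliques of $G$, each $v_i$ is a cut-vertex of $G$ contained in exactly one pendant clique $P_i$ besides $Q$, and each $w_j$ is simplicial in $G$. Because $Q$ has level $2$ rather than level $1$, $Q$ is not pendant in $G$, so it must contain at least two cut-vertices; since $y$ being simplicial forces $m \geq 1$, this yields $k \geq 1$. In particular, a pendant clique $P_1$ of size at least $2$ is attached to $Q$ at some $v_1 \ne z$. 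A short block-cut-tree argument moreover shows that any vertex $u$ located outside $Q \cup \bigcup_{i} P_i$ can reach $Q$ only through $z$ (each $v_i$ sits only in $Q$ and the dead-end pendant clique $P_i$), so $d_G(u, y) = d_G(u, z) + 1$ in this regime.

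The proof then concludes with a short case split on the location of $u$. If $u \in Q \setminus \{y\}$, then $d_G(u, y) = 1 \leq \epsilon_H(u)$ trivially. If $u$ is a simplicial vertex of some $P_i$, then $d_G(u, y) = 2$, while $d_H(u, z) = 2$ (going through $v_i$ and then one step in $Q$) gives $\epsilon_H(u) \geq 2$. Otherwise $u$ is outside $Q \cup \bigcup_i P_i$, and picking any simplicial vertex $s$ of $P_1$ yields $s \in V(H)$ with $d_H(u, s) = d_H(u, z) + 2 = d_G(u, z) + 2 > d_G(u, y)$, whence $\epsilon_H(u) > d_G(u, y)$. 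In every case the required inequality holds, completing the argument.

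The main obstacle, I expect, lies not in the case analysis itself but in the preliminary structural claim: turning the somewhat terse ``level 2'' hypothesis into the two concrete facts that (i) at least one genuine pendant clique $P_1$ is attached to $Q$ at a vertex distinct from $z$, and (ii) from anywhere outside $Q \cup \bigcup_i P_i$ the only way into $Q$ passes through $z$. Once these structural observations are in place, the distance comparisons needed in each case are immediate.
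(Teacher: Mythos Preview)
Your proposal is correct and follows essentially the same approach as the paper's proof: pick a center $u$ of $H=G-y$ and show $\epsilon_G(u)\leq \epsilon_H(u)$. The paper compresses the whole argument into the single assertion that ``for any shortest $u$--$y$ path in $G$ there is a path of the same length that misses $y$''; your structural unpacking of the level~2 hypothesis (in particular the existence of a pendant clique $P_1$ attached at some $v_1\neq z$) together with the case split on the location of $u$ is precisely what justifies that assertion.
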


\begin{proof} Let $H=G-y$ and let $u$ be a vertex of $H$ such that $rad(H)=\epsilon_H(u)$. Then $\epsilon_G(u)\leq  \epsilon_H(u)$, as for any shortest $u-y$ path in $G$ there is a path of the same length $d(u,y)$ in $G$ that misses $y$. Hence,
\[rad(G)\leq \epsilon_G(u)\leq \epsilon_H(u)=rad(H).\]
The proof is complete.
\end{proof}

\begin{lemma}
\label{lem:Level2cliqesAtleast3Vertices} 
Let $G$ be a block graph. Assume that there is a level 2 clique $Q$ in $G$ of size at least 3, such that all vertices of $Q$ belong to exactly one pendant clique, excluding one unique vertex of $Q$, let us name it $z$, that may lie in other non-pendant cliques. Then define the graph $H$ as follows. Let $x$ be any cut-vertex in $Q$ different from $z$. Let $J$ be the pendant clique containing $x$. 
Define $H=G-(V(J)-x)$. Then $rad(G)\leq rad(H)$.
\end{lemma}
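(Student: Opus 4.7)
My plan is to follow the template of Lemmas~\ref{lem:PendantCliqueAtleast3Vertices}--\ref{lem:Level2cliqueSimplicialVertex}: choose $u\in V(H)$ with $\epsilon_H(u)=rad(H)$ and prove $\epsilon_G(u)\le\epsilon_H(u)$, which immediately yields $rad(G)\le rad(H)$. Since the removed vertices $V(J)-x$ are simplicial in the pendant clique $J$, whose only cut-vertex is $x$, every shortest path in $G$ from $u$ to a vertex of $V(J)-x$ passes through $x$; hence $d_G(u,v)=d_H(u,x)+1$ for $v\in V(J)-x$, and $d_G(u,v)=d_H(u,v)$ for $v\in V(H)$. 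The lemma therefore reduces to the inequality
\[d_H(u,x)+1\le rad(H).\]

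I would argue by contradiction, assuming $d_H(u,x)=rad(H)$, and lean on two structural facts. First, because $Q$ is a level-$2$ clique of $G$, $z$ must be a cut-vertex of $G$ contained in some non-pendant block $R$: otherwise $Q$ would retain no cut-vertex of $G_1$ after the first round of the leveling procedure, contradicting its being pendant in $G_1$. Any $w\in V(R)\setminus\{z\}$ then satisfies $w\in V(H)$ and, by the structural hypothesis on $Q$, belongs neither to $V(Q)$ nor to any pendant clique attached to a non-$z$ cut-vertex of $Q$. Second, since $|Q|\ge 3$ we may pick $x'\in V(Q)\setminus\{x,z\}$; by hypothesis $x'$ is the cut-vertex of a pendant clique $J'\ne J$, and a vertex $y'\in V(J')-x'$ sits in $V(H)$.

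Observing that $x$ becomes simplicial in $H$, the case analysis is short. If $u\in V(Q)\setminus\{x\}$, then $d_H(u,x)=1$ forces $rad(H)=1$, making $u$ universal in $H$; but $u=z$ is not adjacent to $y'$, and every $u\in V(Q)\setminus\{x,z\}$ is not adjacent to $w$, a contradiction. If $u\notin V(Q)$, the block-cut-tree of $H$ supplies a unique entry vertex $q\in V(Q)$ of $u$ with $d_H(u,q')=d_H(u,q)+1$ for every $q'\in V(Q)\setminus\{q\}$, and $q\ne x$. When $q\ne x'$ one has $d_H(u,x)=d_H(u,x')=rad(H)$, so $d_H(u,y')=rad(H)+1$, contradicting $y'\in V(H)$. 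The delicate case $q=x'$ forces $u\in V(J')-x'$, $rad(H)=2$, and $\epsilon_H(u)=2$, which in turn requires $x'$ to be adjacent to every vertex of $V(H)\setminus V(J')$, contradicted by $w\in V(R)$. Hence $d_H(u,x)\le rad(H)-1$ in all cases, and $\epsilon_G(u)=\max(\epsilon_H(u),\,d_H(u,x)+1)\le rad(H)$ concludes the proof.
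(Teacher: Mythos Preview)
Your proof is correct and follows the same template as the paper's: pick a center $u$ of $H$ and verify $\epsilon_G(u)\le\epsilon_H(u)$, which yields $rad(G)\le rad(H)$. The paper compresses the whole argument into a single line (``for any vertex $w$ and a shortest $u$--$w$ path passing through $x$ there is a path of the same length that misses $x$ and the vertices of $J$''), whereas you unpack the equivalent inequality $d_H(u,x)+1\le\epsilon_H(u)$ via an explicit case analysis.

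Your version is in fact more careful in one respect. The paper's one-liner silently relies on being able to reroute through some other branch of $Q$; when $|Q|=3$ and the center $u$ happens to lie in $V(J')-x'$, the only available reroute is through $z$ into a second non-pendant block $R$. You make this explicit by arguing that $Q$ being level $2$ forces $z$ to be a cut-vertex lying in some non-pendant block $R\neq Q$, and then use a vertex $w\in V(R)\setminus\{z\}$ to derive the contradiction in that sub-case. This is the right observation, and it fills a detail the paper glosses over.
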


\begin{proof} Let $u$ be a vertex of $H$ such that $rad(H)=\epsilon_H(u)$. Then $\epsilon_G(u)\leq \epsilon_H(u)$, as for any vertex $w$ in $G$ and a shortest $u-w$ path passing through $x$ there is a vertex $w'$ in $G$ and a path of the same length $d(u,w)$ that misses $x$ and the vertices of $J$. Hence,
\[rad(G)\leq \epsilon_G(u)\leq \epsilon_H(u)=rad(H).\]
The proof is complete.
\end{proof}

\noindent Now, we can pass to the main results.
\begin{theorem}
\label{thm:radiusalphamin} Let $G=(V,E)$ be a connected block graph. Then $rad(G)\leq \alpha_{\min}(G)$.
\end{theorem}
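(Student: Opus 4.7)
The plan is to prove the theorem by strong induction on $|V(G)|$. The base case will be when $G$ has no cut-vertex, so $G$ is a single clique (or a singleton) and $rad(G)\le 1\le\alpha_{\min}(G)$. For the inductive step I assume $G$ has at least one cut-vertex and invoke Corollary \ref{cor:CutVertexAlphaMin} to choose a cut-vertex $v$ with $\alpha(G,v)=\alpha_{\min}(G)$. The strategy is to produce a proper induced subgraph $H$ of $G$ such that (a)~$v\in V(H)$ and (b)~$rad(G)\le rad(H)$. Once such an $H$ is found, Lemma \ref{lem:SubgraphAlphaMin} will yield $\alpha_{\min}(H)\le\alpha_{\min}(G)$, and the inductive hypothesis will yield $rad(H)\le\alpha_{\min}(H)$; chaining these inequalities gives $rad(G)\le\alpha_{\min}(G)$.

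To construct $H$, I will perform a case analysis on the local structure of $G$ and apply whichever of the four reduction lemmas fits. If some pendant clique of $G$ has at least three vertices, I delete a simplicial vertex of it via Lemma \ref{lem:PendantCliqueAtleast3Vertices}. Otherwise, if two pendant cliques share a common cut-vertex, I delete a simplicial vertex of one of them via Lemma \ref{lem:TwoPendantCliques}. Otherwise, if some level-$2$ clique of $G$ contains a simplicial vertex of $G$, I delete such a vertex via Lemma \ref{lem:Level2cliqueSimplicialVertex} (whose structural hypothesis on $Q$ is forced to hold once the previous two cases are excluded). Otherwise, if some level-$2$ clique has size at least three, I peel off the pendant clique attached to one of its cut-vertices via Lemma \ref{lem:Level2cliqesAtleast3Vertices}. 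In every sub-case the deleted vertices are simplicial in $G$ and hence distinct from the cut-vertex $v$, so $v$ survives in $H$ and induction applies.

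The hard part will be the residual configuration in which none of the four lemmas applies: every pendant clique of $G$ has size $2$, no two pendant cliques share a cut-vertex, every level-$2$ clique has size $2$, and both endpoints of every level-$2$ clique are cut-vertices of $G$. In this configuration $G$ is decorated with length-two tails of the form $z-x-w$, and a naive single-leaf deletion can strictly decrease the radius (as happens in $P_4\to P_3$). Resolving this case demands a separate argument: a direct verification for small canonical structures (such as $P_4$, for which $rad=\alpha_{\min}=2$), and, when $G$ is larger, a tailored reduction that either passes to the derived graph $G_1$ obtained by simultaneously removing every leaf---applying induction on $G_1$ and then carefully bookkeeping how $rad$ and $\alpha_{\min}$ change when the length-one leaves are reattached---or exhibits a specific simplicial vertex of $G$ whose removal provably preserves the radius. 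This final step is where I expect the bulk of the technical effort to lie.
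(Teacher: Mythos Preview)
Your inductive scaffold and the cascade through Lemmas \ref{lem:PendantCliqueAtleast3Vertices}--\ref{lem:Level2cliqesAtleast3Vertices} is precisely what the paper does (with only cosmetic differences in the order of the cases), and your observation that every deleted vertex is simplicial---so the witness cut-vertex $v$ survives and Lemma \ref{lem:SubgraphAlphaMin} applies---is correct. The genuine gap is exactly where you say it is: the residual configuration in which every pendant clique and every level-$2$ clique is a $K_2$, no level-$2$ clique has a simplicial vertex, and no cut-vertex lies in two pendant cliques. You have not resolved this case; you only sketch two possible attacks.

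Neither of your sketches matches what the paper does, and neither is obviously sufficient. Deleting all leaves to form $G_1$ gives $rad(G)\le rad(G_1)+1$, but the compensating inequality $\alpha_{\min}(G_1)+1\le\alpha_{\min}(G)$ is not automatic from Lemma \ref{lem:SubgraphAlphaMin} (which only gives $\alpha_{\min}(G_1)\le\alpha_{\min}(G)$); it would need its own proof. Your other idea---finding a single simplicial vertex whose removal preserves the radius---you yourself note fails on $P_4$. The paper's actual move is different: it removes the \emph{entire} pendant $K_2$ (both vertices) at every leaf \emph{except} the pendant clique through $v$, obtaining $H$ with $rad(G)\le rad(H)+2$. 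It then argues a dichotomy: either $\alpha(G,v)\ge\alpha(H,v)+2$, in which case induction on $H$ closes the bound immediately, or $\alpha(G,v)\le\alpha(H,v)+1$, which forces $G$ to have at most one pendant clique not through $v$ and hence to be a path of cliques; a further simplicial-vertex reduction then forces $G$ to be a path $P_n$, and the inequality $rad(P_n)\le\alpha_{\min}(P_n)$ is verified directly (with a short induction on $n$). This ``either $\alpha_{\min}$ drops by two, or the graph collapses to a path'' dichotomy is the key idea you are missing.
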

\begin{proof} Our proof is by induction on $|V|$. Clearly, the theorem is true when $|V|\leq 2$. Now, let $G$ be a connected block graph with at least 3 vertices. If $G$ is a star of cliques then clearly 
\[rad(G)=\alpha_{\min}(G)=1.\]
Thus, the statement is trivial for this case. Hence, we can assume that $G$ is not a star of cliques. Let $v$ be a cut-vertex with $\alpha(G,v)=\alpha_{\min}(G)$ (cf. Corollary \ref{cor:CutVertexAlphaMin}). Since $v$ is a cut-vertex, it is contained in at least two cliques. 
Let us assume that there are two pendant cliques containing $v$, and let $J$ be one of them, while $K$ is the other one. Let $y$ be a simplicial vertex of $J$. Consider the block graph $H=G-y$. By Lemma \ref{lem:TwoPendantCliques}, we have $rad(G)\leq rad(H)$, and by Lemma \ref{lem:SubgraphAlphaMin}, $\alpha_{\min}(H)\leq \alpha_{\min}(G)$. %and by Lemma \ref{lem:AddCliqeVertex}, $\alpha(H,v)\leq \alpha(G,v)$,
%
%Let $u$ be a vertex of $H$ such that $rad(H)=\epsilon_H(u)$. We can assume that $u\notin (J-v)$ and $u\notin (K-v)$ as the eccentricity of these vertices is at least that of $v$ in $H$. Thus $\epsilon_H(u)= \epsilon_G(u)$ as if $P$ is a shortest $u-y$ path in $H$, then there is a path of the same length that does not end in $y$.
%Hence
%\[rad(G)\leq \epsilon_G(u)=\epsilon_H(u)=rad(H),\]
Therefore,
\[rad(G)\leq rad(H)\leq \alpha_{\min}(H)\leq \alpha_{\min}(G).\]
%\[rad(G)\leq rad(H)\leq \alpha_{\min}(H)\leq \alpha(H,v)\leq \alpha(G,v)=\alpha_{\min}(G).\]
Thus, we can further assume that there is at most one pendant clique around $v$. Similarly, we can assume that any other ($\neq v$) cut-vertex of $G$ is contained in at most one pendant clique.

%Now, let us show that any other ($\neq v$) cut-vertex of $G$ is contained in at most one pendant clique. Assume, that a cut-vertex $w\neq v$ is contained in two pendant cliques. Let $J$ be one of them, while $K$ is the other one. Consider the block graph $H=G-y$, where $y$ is a simplicial vertex of $J$. 
%Observe that $H$ is a block graph of order smaller than $G$ containing $v$. Hence we have $rad(H)\leq \alpha_{\min}(H)$. Let $u$ be a vertex of $H$ such that $rad(H)=\epsilon_H(u)$. As in the previous case, %, we can assume that $u\notin (K-w)$ and hence $\epsilon_H(u)= \epsilon_G(u)$. Thus,
%\[rad(G)\leq \epsilon_G(u)=\epsilon_H(u)=rad(H),\]
%$rad(G)\leq rad(H)$, and $\alpha_{\min}(H)\leq \alpha_{\min}(G)$. 
%Therefore 
%\[rad(G)\leq rad(H)\leq \alpha_{\min}(H)\leq \alpha_{\min}(G).\]
%\[rad(G)\leq rad(H)\leq \alpha_{\min}(H)\leq \alpha(H,v)\leq \alpha(G,v)=\alpha_{\min}(G).\]
%Thus, without loss of generality we can assume that any cut-vertex $w\neq v$ of $G$ is contained in at most one pendant clique. 
%{\bf [Hanna's comment: if $J\neq K_2$ then we cannot refer to Lemma \ref{lem:AddCliqeVertex} in this case to justify $ \alpha(H,v)\leq \alpha(G,v)$; do not we need additional reasoning here?]}
%{\bf Reply from Vahan: We can not refer to Lemma 29 before. It states something else. $ \alpha(H,v)\leq \alpha(G,v)$ is obvious because if you take a largest independent set $I_v$ in $H$ that contains $v$, then it is an independent set in $G$ that contains the vertex $v$. Thus the inequality is true.}

Let $Q$ be a clique in $G$ of level 2. Observe that it contains at most one vertex $z$ that may be contained in another non-pendant clique. All other vertices of $Q$ are either simplicial or they are contained in exactly one pendant clique. 
%Since $G$ is not a star of cliques, we have that this vertex $z$ exists. 
If $y$ is a simplicial vertex in $Q$, then consider the graph $H=G-y$. Observe that $H$ is a block graph of order smaller than $G$. Hence, we have $rad(H)\leq \alpha_{\min}(H)$. By Lemma \ref{lem:Level2cliqueSimplicialVertex}, $rad(G)\leq rad(H)$, and by Lemma \ref{lem:SubgraphAlphaMin}, we have $\alpha_{\min}(H)\leq \alpha_{\min}(G)$.
%
%Let $u$ be a vertex of $H$ such that $rad(H)=\epsilon_H(u)$. Then $\epsilon_H(u)= \epsilon_G(u)$, as for any shortest $u-y$ path in $G$ there is a path of the same length that misses $y$. Hence
%\[rad(G)\leq \epsilon_G(u)=\epsilon_H(u)=rad(H),\]
Therefore, 
\[rad(G)\leq rad(H)\leq \alpha_{\min}(H)\leq \alpha_{\min}(G).\]
%\[rad(G)\leq rad(H)\leq \alpha_{\min}(H)\leq \alpha(H,v)\leq \alpha(G,v)=\alpha_{\min}(G).\]
Thus, we can further assume that $Q$ contains no simplicial vertices. Hence all vertices of $Q$, except at most $z$, are contained in exactly one pendant clique. 
Now, let us assume that $|Q|\geq 3$. 
Since $Q$ is a clique of level 2, there is at least one non-simplicial vertex $x$ in $Q$, except $z$. 
Let $J$ be the pendant clique containing $x$. Define $H=G-(V(J)-x)$. Note that $H$ still contains vertex $v$ and it is a block graph of order smaller than $G$. Hence, we have $rad(H)\leq \alpha_{\min}(H)$. By Lemma \ref{lem:Level2cliqesAtleast3Vertices}, $rad(G)\leq rad(H)$, and by Lemma \ref{lem:SubgraphAlphaMin} $\alpha_{\min}(H)\leq \alpha_{\min}(G)$. %and by Lemma \ref{lem:AddCliqeVertex}  $\alpha(H,v)\leq \alpha(G,v)$,
%Let $u$ be a vertex of $H$ such that $rad(H)=\epsilon_H(u)$. Then $\epsilon_H(u)= \epsilon_G(u)$, as for any shortest $u-w$ path passing through $x$ there is a path of the same length that misses $x$ and the vertices of $J$. Hence
%\[rad(G)\leq \epsilon_G(u)=\epsilon_H(u)=rad(H),\]
Therefore,
\[rad(G)\leq rad(H)\leq \alpha_{\min}(H)\leq \alpha_{\min}(G).\]
%\[rad(G)\leq rad(H)\leq \alpha_{\min}(H)\leq \alpha(H,v)\leq \alpha(G,v)=\alpha_{\min}(G).\]
Thus, we can focus on the remaining case where $Q=K_2$. Let $J$ be the unique clique containing the other ($\neq z$) vertex $x$ of $Q$. If $|J|\geq 3$, then let $y$ be a simplicial vertex in $J$. Consider the graph $H=G-y$. It still contains the vertex $v$ and it is a block graph of order smaller than $G$. Hence we have $rad(H)\leq \alpha_{\min}(H)$. By Lemma \ref{lem:PendantCliqueAtleast3Vertices}, $rad(G)\leq rad(H)$, and by Lemma \ref{lem:SubgraphAlphaMin} $\alpha_{\min}(H)\leq \alpha_{\min}(G)$.
%
%Let $u$ be a vertex of $H$ such that $rad(H)=\epsilon_H(u)$. Then $\epsilon_H(u)= \epsilon_G(u)$, for any shortest $u-y$ path in $G$ we can find a path of the same length that misses $y$. Hence
%\[rad(G)\leq \epsilon_G(u)=\epsilon_H(u)=rad(H),\]
Therefore, 
\[rad(G)\leq rad(H)\leq \alpha_{\min}(H)\leq \alpha_{\min}(G).\]
%\[rad(G)\leq rad(H)\leq \alpha_{\min}(H)\leq \alpha(H,v)\leq \alpha(G,v)=\alpha_{\min}(G).\]
If $J=K_2$ then observe that this conclusion holds for every clique $Q$ chosen as above. 

Now, consider the graph $H$ containing $v$ obtained from $G$ by removing the vertices of all pendant cliques $K_2$ except the one around $v$ (if it exists). We remove the two vertices of $K_2$ for each choice of such $K_2$'s. Observe that $rad(G)\leq rad(H)+2$. 
In order to see this, let us observe that,
%$\epsilon_{K}(q)$ be the eccentricity of the vertex $q$ in the graph $K$. Then, by 
by the definition of $rad(H)$, we have that for some vertex $w\in V(H)$, $\epsilon_{H}(w)=rad(H)$.
Now, by construction,
\[rad(G)\leq \epsilon_{G}(w)\leq \epsilon_{H}(w)+2=rad(H)+2.\]
If $\alpha(G,v)\geq \alpha(H,v)+2$ then
\[rad(G)\leq rad(H)+2\leq \alpha_{\min}(H)+2\leq \alpha(H,v)+2\leq \alpha(G,v)=\alpha_{\min}(G).\]
Thus, we can assume that $\alpha(G,v)\leq \alpha(H,v)+1$. In particular, this means that we have at most one choice for $J$ above. Moreover, there is at most one pendant clique in $G$ that does not contain $v$. Let $T$ be any clique of $G$. Let us show that the clique-degree of $T$ is at most 2. Assume that $T$ has clique degree at least 3. If $v$ does not lie in $T$, then $G$ contains at least three pendant cliques. Note that at least two of them will not contain $v$ contradicting our conclusion above that there should be at most one such a pendant clique. So assume $v$ lies in $T$. Since the clique degree of $T$ is at least three, there will be two pendant cliques that will not contain $v$. Again, this contradicts our conclusion above that this number should be at most one. Thus, any clique of $G$ has clique degree at most 2. Since $G$ is connected, we have that $G$ is a path of cliques. 

%This means that $G$ is obtained from a path $P$ of cliques by attaching to some of its vertices at most one $K_2$. Let us assume that $G$ is actually a path of cliques, that is, $G=P$. If we assume that there is a vertex $y$ of degree-one adjacent to a vertex of $P$, then consider the graph $H=G-y$ containing $v$. Observe that $H$ is a block graph of order smaller than $G$. Hence we have $rad(H)\leq \alpha_{\min}(H)$. Let $u$ be a vertex of $H$ such that $rad(H)=\epsilon_H(u)$. Then $\epsilon_H(u)= \epsilon_G(u)$, as for any shortest $u-y$ path of $G$ there is a path of the same length that misses $y$. Hence
%\[rad(G)\leq \epsilon_G(u)=\epsilon_H(u)=rad(H).\]
%Lemma \ref{lem:SubgraphAlphaMin} implies $\alpha_{\min}(H)\leq \alpha_{\min}(G)$. Therefore 
%\[rad(G)\leq rad(H)\leq \alpha_{\min}(H)\leq \alpha_{\min}(G).\]
%\[rad(G)\leq rad(H)\leq \alpha_{\min}(H)\leq \alpha(H,v)\leq \alpha(G,v)=\alpha_{\min}(G).\]
%Thus, we can assume that $G$ is a path of cliques. 

Now, we show that $G$ is a path. It suffices to show that $G$ has no simplicial vertices in its internal cliques. If we assume that there is such a simplicial vertex $y$, then consider the graph $H=G-y$ containing $v$. Observe that $H$ is a block graph of order smaller than $G$. Hence we have $rad(H)\leq \alpha_{\min}(H)$. Let $u$ be a vertex of $H$ such that $rad(H)=\epsilon_H(u)$. Then $\epsilon_G(u)\leq \epsilon_H(u)$, as for any shortest $u-y$ path of $G$ there is a path of the same length $d(u,y)$ that misses $y$. Hence,
\[rad(G)\leq \epsilon_G(u)\leq \epsilon_H(u)=rad(H).\]
Lemma \ref{lem:SubgraphAlphaMin} implies $\alpha_{\min}(H)\leq \alpha_{\min}(G)$. Therefore, 
\[rad(G)\leq rad(H)\leq \alpha_{\min}(H)\leq \alpha_{\min}(G).\]
%\[rad(G)\leq rad(H)\leq \alpha_{\min}(H)\leq \alpha(H,v)\leq \alpha(G,v)=\alpha_{\min}(G).\]
Thus, let we are left with the case when $G$ is a path on $|V(G)|=n$ vertices. A direct check shows that no path with $n\leq 5$ is a counter-example to our statement. Thus, $n\geq 6$. Since $rad(G)=\lfloor \frac{n}{2} \rfloor$, it suffices to show that $\alpha(G, v)\geq \lfloor \frac{n}{2} \rfloor$. Since $n\geq 6$, we can find a degree-one vertex $y$, such that the path $H=G-y-z$ contains $v$. Here $z$ is the unique neighbor of $y$. Since $H$ is a path of order $n-2$, we have
\[\alpha(H, v)\geq \alpha_{\min}(H)\geq rad(H)= \left \lfloor \frac{n-2}{2} \right \rfloor.\]
Moreover, $\alpha(G, v)\geq \alpha(H, v)+1$, as if $I$ is a largest independent set of vertices in $H$ that contains $v$, then we can always get a similar set in $G$ just by adding $y$ to $I$. Hence,
\[rad(G)=\left \lfloor \frac{n}{2} \right \rfloor=\left \lfloor \frac{n-2}{2} \right \rfloor +1\leq \alpha(H, v)+1\leq \alpha(G, v)=\alpha_{\min}(G).\]
The proof is complete. 
\end{proof}

In the next theorem we bound the radius of a block graph by a function of its $dc(G)$. We precede the theorem with some simple observations concerning block graphs.

\begin{observation}
\label{obs:Shortestpathstructure} For any vertices $u$ and $v$ of $G$, the internal vertices of any shortest $(u-v)$-path are cut-vertices.
\end{observation}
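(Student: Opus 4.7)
The plan is a short proof by contradiction leveraging the fact that in a block graph every non-cut-vertex is simplicial. Suppose some shortest $(u,v)$-path $P$ has an internal vertex $w$ that is not a cut-vertex of $G$. Let $w'$ and $w''$ be the two neighbors of $w$ along $P$ (they exist because $w$ is internal).

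First I would invoke the definition recalled in Subsection \ref{subsec:GraphTheoryConcepts}: in a block graph, a vertex is simplicial iff it is not a cut-vertex, and the neighbors of any simplicial vertex all lie in a common clique (the unique block containing that vertex). Applying this to $w$, both $w'$ and $w''$ belong to the same clique $B$ of $G$, so $w'w''\in E(G)$.

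The conclusion is then immediate: replacing the length-two subpath $w'-w-w''$ of $P$ by the single edge $w'w''$ produces a $(u,v)$-walk of length $|E(P)|-1$, contradicting the assumption that $P$ is a shortest $(u,v)$-path. Hence every internal vertex of $P$ must be a cut-vertex.

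There is no real obstacle here; the only subtlety to check is that $w',w''$ are genuinely distinct vertices of the same block, which follows because $P$ is a path (so $w'\neq w''$) and because the simplicial property forces both neighbors of $w$ into a single clique. No case analysis on the length of $P$ is needed, and the argument does not use anything beyond the basic block-graph structure and the definition of ``shortest path.''
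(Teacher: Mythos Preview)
Your argument is correct and is precisely the intended one: in the paper the statement is recorded as an observation without proof, and your short contradiction via the simpliciality of non-cut-vertices is exactly the standard justification. Nothing needs to be added.
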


\begin{observation}
\label{obs:EccentricityPath} Let $u$ be a vertex in $G$ and let $P$ be a $(u-v)$-path such that $P$ is of length $\epsilon_G(u)$. Then all internal vertices of $P$ are cut-vertices and $v$ is a simplicial vertex.
\end{observation}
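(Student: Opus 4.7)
The plan is to split the statement into its two claims (internal vertices are cut-vertices; the endpoint $v$ is simplicial) and attack each using the distance structure of block graphs.

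First I would note that any $(u,v)$-path of length $\epsilon_G(u)$ is automatically a \emph{shortest} $(u,v)$-path: since $\epsilon_G(u)=\max_{w}d(u,w)\ge d(u,v)$, but $P$ realises this bound, we must have $d(u,v)=\epsilon_G(u)$ and $P$ is a geodesic. The claim that the internal vertices of $P$ are cut-vertices is then an immediate application of Observation~\ref{obs:Shortestpathstructure}.

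The second claim is the substantive one. I would argue by contradiction: assume $v$ is not simplicial, so $v$ is a cut-vertex. Let $B_1$ be the (unique) block of $G$ containing the last edge of $P$; since $v$ is a cut-vertex, there exists another block $B_2\neq B_1$ with $v\in B_2$. Pick any $w\in B_2\setminus\{v\}$. Using the tree-like block structure of block graphs, I would observe that in $G-v$ the set $B_1\setminus\{v\}$ and the set $B_2\setminus\{v\}$ lie in different connected components (this is exactly what it means for two distinct blocks to share only the cut-vertex $v$). The penultimate vertex $v'$ of $P$ lies in $B_1\setminus\{v\}$, and the subpath of $P$ from $u$ to $v'$ avoids $v$ (as $P$ is simple and $v$ is its last vertex), so $u$ sits in the same component of $G-v$ as $v'$, while $w$ sits in a different one. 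Hence every $u$--$w$ path of $G$ passes through $v$, which gives
\[
d(u,w)=d(u,v)+d(v,w)\ge \epsilon_G(u)+1,
\]
contradicting the definition of the eccentricity of $u$. Thus $v$ is simplicial.

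The only delicate point — and the step I would present carefully — is the separation claim that $B_1\setminus\{v\}$ and $B_2\setminus\{v\}$ lie in different components of $G-v$. In a block graph this is essentially the definition (distinct blocks sharing a vertex $v$ intersect \emph{only} in $v$, and the block-cut tree is indeed a tree), but it is the pivotal structural fact on which the whole argument rests. Once that is in hand, the length argument above closes the proof.
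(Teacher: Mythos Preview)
Your argument for the second claim (that $v$ is simplicial) is correct and clean: if $v$ were a cut-vertex, a neighbour $w$ in a block not containing the last edge of $P$ would satisfy $d(u,w)=d(u,v)+1>\epsilon_G(u)$, contradicting the definition of eccentricity. The separation fact you flag is standard for block graphs and needs no further justification here.

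The gap is in your first step. Your claim that ``any $(u,v)$-path of length $\epsilon_G(u)$ is automatically a shortest $(u,v)$-path'' does not follow from the inequality $\epsilon_G(u)\ge d(u,v)$: the existence of \emph{some} $(u,v)$-path of length $\epsilon_G(u)$ only tells you $d(u,v)\le\epsilon_G(u)$, which you already knew. Concretely, take two triangles $\{u,a,c\}$ and $\{c,b,w\}$ glued at $c$. Then $\epsilon_G(u)=2$, and $P=u\text{--}a\text{--}c$ is a $(u,c)$-path of length $2=\epsilon_G(u)$ whose internal vertex $a$ is simplicial and whose endpoint $c$ is a cut-vertex---so under the literal reading both conclusions of the observation fail. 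The intended reading (and the only one consistent with how the paper uses the observation) is that $v$ is a vertex with $d(u,v)=\epsilon_G(u)$ and $P$ is a \emph{shortest} $(u,v)$-path. With that reading your appeal to Observation~\ref{obs:Shortestpathstructure} handles the first claim immediately, and your proof of the second claim goes through unchanged. The paper states the observation without proof, so there is no alternative argument to compare against; once you drop the faulty opening sentence and simply take ``$P$ is a shortest path of length $\epsilon_G(u)$'' as the hypothesis, your write-up is fine.
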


\begin{observation}
\label{obs:Induced2path} Assume that a vertex $y$ is adjacent to vertices $x$ and $z$ such that $xz\notin E$. Then for any $dc$-set $D$, we have $D\cap \{x,y,z\}\neq \emptyset$.
\end{observation}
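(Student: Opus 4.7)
The plan is to prove the contrapositive in a very direct way, relying only on the structural characterization of cluster graphs: a graph is a cluster graph if and only if it contains no induced path on three vertices ($P_3$), equivalently, the ``is-adjacent-to-or-equal'' relation is an equivalence relation on each connected component.

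First I would suppose, for the sake of contradiction, that there is a $dc$-set $D$ with $D\cap\{x,y,z\}=\emptyset$. Then all three vertices $x,y,z$ survive in the induced subgraph $G-D$, and since $D$ contains none of them, the edges $xy$ and $yz$ remain present in $G-D$ while the non-edge $xz$ is still a non-edge in $G-D$. Hence the induced subgraph of $G-D$ on $\{x,y,z\}$ is precisely a $P_3$.

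Next I would invoke the fact that $G-D$ is a cluster graph by the definition of a $dc$-set, i.e., a disjoint union of cliques. In any such graph, if a vertex $y$ is adjacent to both $x$ and $z$, then $x$, $y$, $z$ all lie in the same connected component, and every connected component is a clique; consequently $xz$ must be an edge of $G-D$. This contradicts the observation of the previous paragraph, completing the proof.

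There is no serious obstacle here: the entire argument is a one-line appeal to the forbidden induced subgraph characterization of cluster graphs, and no features specific to block graphs are needed (the statement and proof are valid in any graph $G$).
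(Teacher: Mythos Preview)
Your proof is correct. The paper states this as an observation without proof, implicitly relying on exactly the $P_3$-free characterization of cluster graphs that you spell out; your argument is the natural one-line justification.
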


\begin{observation}
\label{obs:dcmonotonicity} Let $H$ and $G$ be two graphs with $H\subseteq G$. Then $dc(H)\leq dc(G)$.
\end{observation}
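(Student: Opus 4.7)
The plan is to exhibit a witness for $dc(H)$ by restricting an optimal witness for $dc(G)$ to $V(H)$. This is essentially a one-paragraph argument based on the fact that induced subgraphs of cluster graphs are cluster graphs.

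First, I would fix a $dc$-set $D\subseteq V(G)$ with $|D|=dc(G)$, so that $G-D$ is a cluster graph. Setting $D':=D\cap V(H)$, I immediately obtain $|D'|\leq |D|=dc(G)$, so it only remains to show that $D'$ witnesses membership in the cluster class for $H$, i.e.\ that $H-D'$ is a cluster graph.

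Second, I would observe that the vertex set of $H-D'$ is exactly $V(H)\setminus D = V(H)\cap (V(G)\setminus D)$, and since $H$ is an induced subgraph of $G$ (the relevant notion in context, because removing an edge of a triangle inside a cluster graph creates an induced $P_3$), the graph $H-D'$ coincides with the induced subgraph of $G-D$ on $V(H)\cap V(G-D)$. Since a cluster graph is a disjoint union of cliques—a property that is trivially preserved under deletion of vertices and under taking induced subgraphs—$H-D'$ is itself a cluster graph. Therefore $dc(H)\leq |D'|\leq dc(G)$.

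There is no real obstacle; the only point worth flagging is the implicit interpretation of $H\subseteq G$ as an induced subgraph, which is the appropriate notion for the parameter $dc$ since the cluster property is not preserved under removal of edges.
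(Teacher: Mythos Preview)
Your argument is correct, and indeed the paper does not give a proof of this observation at all---it is simply stated and used later when $H$ arises from $G$ by vertex deletions. Your flag about the meaning of $H\subseteq G$ is well taken: the observation is actually false for non-induced subgraphs (take $G=K_3$, $H=P_3$, so $dc(G)=0<1=dc(H)$), and every application in the paper is to induced subgraphs obtained by removing vertices, so the induced-subgraph reading is the intended one.
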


Note that the difference $dc(G)- rad(G)$ can be arbitrarily large. In order to see this, let $G$ be a graph obtained from a star of at least two cliques $K_3$, sharing vertex $v$, by adding one clique $Q=K_{k+2}$ to one of simplicial vertices of $K_3$ in the star. The common vertex of $Q$ and the star of cliques is named by $x$ (Fig. \ref{fig:exdcrad}). Finally, we add exactly one pendant clique, of size at least 3, to each simplicial vertex of $Q$. Note that $rad(G)=2$ - the center is formed by vertex $x$, while $dc(G)=k+2$. Any $dc$-set $D$ of $G$ of size $dc(G)$ is formed by vertices $v$, $x$ and $k$ cut-vertices of $Q$, excluding $x$ (cf. Fig. \ref{fig:exdcrad}).

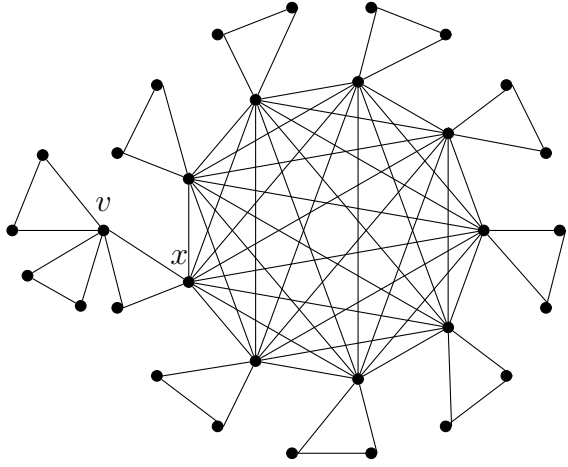
\begin{figure}[htb]
\centering
\begin{tikzpicture}
\node at (-2,-0.35) {$x$};
\node at (-3,0.35) {$v$};
\tikzstyle{every node}=[circle, draw, fill=black!50, inner sep=0pt, minimum width=4pt]
\node[circle,fill=black,draw] at (-3.8,1) (d1) {};
\node[circle,fill=black,draw] at (-4.2,0) (d2) {};
\node[circle,fill=black,draw] at (-4,-0.6) (d3) {};
\node[circle,fill=black,draw] at (-3.3,-1) (d4) {};
\foreach \a in {1,2,...,9}{
\draw (\a*360/9: 2cm)
node[circle,fill=black,draw](i\a) {};
}
\foreach \a in {1,2,...,18}{
\draw (\a*360/18: 3cm)
node[circle,fill=black,draw](o\a) {};
}

\foreach \a in {1,2,...,8}{
 \foreach \b in {\a, ...,9}
 {\draw (i\a) -- (i\b);}
}

\foreach \a in {1,3,...,17}{
\tikzmath{\b=\a+1;}
  \draw (o\a) -- (o\b);
}  
\foreach \a in {1,2,...,9}{
\tikzmath{\b= 2 * (\a -1) +1; \c=\b+1;}
\draw (i\a) -- (o\b);
\draw (i\a) -- (o\c);
}

\draw (o9) -- (d1);
\draw (o9) -- (d2);
\draw (o9) -- (d3);
\draw (o9) -- (d4);
\draw (d1) -- (d2);
\draw (d3) -- (d4);
\end{tikzpicture}
\caption{An exemplary block graph $G$ with $dc(G)-rad(G)=k$ with $k=7$.}\label{fig:exdcrad}
\end{figure}

\begin{theorem}
\label{thm:radiusdc(G)2/3} Let $G=(V,E)$ be a connected block graph. Then $rad(G)\leq \frac{3}{2}\cdot dc(G)+1$.
\end{theorem}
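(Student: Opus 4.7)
I will construct an induced path $P$ in $G$ with at least $2\cdot rad(G)$ vertices and then use the elementary fact that any $dc$-set must delete enough vertices from $P$ to break it into components of size at most $2$; the combination of these two bounds yields the inequality.

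Set $r=rad(G)$; the cases $r\le 1$ are immediate since the right-hand side is at least $1$, so assume $r\ge 2$. Fix a center $c$ with $\epsilon_G(c)=r$. For each block $B\ni c$, let $C_B$ denote the connected component of $G-c$ containing $B\setminus\{c\}$ (two distinct such blocks give distinct components, since they meet only at $c$), and set $\mathrm{depth}(B)=\max_{y\in C_B}d(c,y)$. Pick $B^*$ with $\mathrm{depth}(B^*)=r$. I distinguish two cases. \emph{Case A:} some other block $B'\ni c$ satisfies $\mathrm{depth}(B')\ge r-1$. Concatenate shortest $(c,v)$- and $(c,v')$-paths for $v\in C_{B^*}$ with $d(c,v)=r$ and $v'\in C_{B'}$ with $d(c,v')\ge r-1$; because $v,v'$ lie in different components of $G-c$ and $B^*\cap B'=\{c\}$, the result is an induced path through $c$ of length at least $2r-1$. \emph{Case B:} every other block at $c$ has depth $\le r-2$. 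I claim the set $V_r=\{y:d(c,y)=r\}$ cannot be served by a single first-edge $cu$ with $u\in B^*\setminus\{c\}$: otherwise $d(u,y)=d(c,y)-1\le r-1$ for $y\in V_r$, $d(u,y)\le d(c,y)\le r-1$ for $y\in C_{B^*}\setminus V_r$, $d(u,y)=1+d(c,y)\le r-1$ for $y$ outside $C_{B^*}\cup\{c\}$, and $d(u,c)=1$, forcing $\epsilon_G(u)\le r-1$, which contradicts $c$ being a center. Hence there exist $y_1\ne y_2$ in $V_r$ whose shortest paths from $c$ start with different neighbours $u_1\ne u_2$ of $c$ in $B^*$. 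Concatenating shortest $(y_1,u_1)$- and $(u_2,y_2)$-paths through the edge $u_1u_2\in B^*$ yields a walk of length $(r-1)+1+(r-1)=2r-1$; it is actually induced because the two halves lie in the subtrees of the block-cut tree rooted at $u_1$ and $u_2$ after removing $B^*$, and since two distinct blocks share at most one vertex, no block other than $B^*$ contains vertices from both halves, ruling out chords.

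In either case we obtain an induced path $P$ with $|V(P)|\ge 2r$. Let $D$ be a $dc$-set of size $dc(G)$. Each connected component of the induced subgraph $G[V(P)\setminus D]$ is simultaneously a sub-path of the induced path $P$ and a sub-graph of a clique of the cluster graph $G-D$, hence has at most two vertices. Therefore $D\cap V(P)$ must leave no three consecutive vertices of $P$, which forces $|D\cap V(P)|\ge\lfloor|V(P)|/3\rfloor\ge\lfloor 2r/3\rfloor$. Consequently $dc(G)\ge\lfloor 2r/3\rfloor$, equivalent to $2r\le 3\,dc(G)+2$, giving $rad(G)\le\tfrac{3}{2}\,dc(G)+1$, as required.

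\emph{Main obstacle.} The heart of the proof is Case B: both the center-movement step producing the contradiction when $V_r$ shares a common first-edge neighbour, and the chord-free verification of the concatenated walk, rely on the block-cut-tree structure of $G$ (two distinct blocks intersect in at most one vertex, so the subtrees attached to $B^*$ at $u_1$ and $u_2$ are vertex-disjoint outside $B^*$). By contrast, the path-deletion counting argument in the final paragraph is a short exercise.
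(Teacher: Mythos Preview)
Your argument is correct and takes a genuinely different route from the paper's. The paper proceeds by induction on $|V|$: it repeatedly deletes simplicial vertices or small pendant configurations, invoking the radius-monotonicity Lemmas~\ref{lem:PendantCliqueAtleast3Vertices}--\ref{lem:Level2cliqesAtleast3Vertices} together with Observation~\ref{obs:dcmonotonicity} to carry the inequality down, and after a lengthy case analysis on the level structure of cliques reduces to two explicit base configurations. You instead construct, in one shot, an induced path $P$ on at least $2\,rad(G)$ vertices and then observe that the restriction of any $dc$-set to $V(P)$ must already have size at least $\lfloor |V(P)|/3\rfloor$, since $P$ minus those vertices is an induced subgraph of the cluster $G-D$ and hence splits into components of size at most two.

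The only delicate step is Case~B, and you handle it correctly. The inequality $d(u,y)\le d(c,y)$ for $y\in C_{B^*}$ holds because every neighbour of $c$ in $C_{B^*}$ lies in $B^*$ and is therefore at distance at most $1$ from $u$; this makes the center-shifting contradiction go through. Once $u_1\ne u_2$ are obtained, each $u_i$ is forced to be a cut vertex (since $r\ge 2$ the path must leave $B^*$ after $u_i$), and the tail of the shortest $(c,y_i)$-path lies in the branch of the block-cut tree hanging off $u_i$ away from $B^*$; as these branches are vertex-disjoint for $u_1\ne u_2$, the concatenated walk is a simple induced path. Your proof is shorter and makes the tightness for paths $P_n$ immediately transparent (your induced $P$ is then all of $G$ and the counting is sharp), while the paper's proof has the virtue of reusing the machinery already developed for Theorem~\ref{thm:radiusalphamin}.
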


\begin{proof} 
Our proof is by induction on $|V|$. Clearly, if $|V|\leq 2$, then $dc(G)=0$ and $rad(G)\leq 1$, and the statement holds. Now, let us assume that our statement holds for every block graph $G$ with $|V|<k$, $k\geq 3$. Now, let $G$ be a block graph with at least $k$ vertices, $k\geq 3$. If $G$ is a star of cliques, then $rad(G)=1$ and the statement is true, independly on the value of $dc(G)$. Hence, we can assume that $G$ is not a star of cliques, i.e. $G$ has at least two cut-vertices.

If there is a non-pendant clique that contains a simplicial vertex $y$ then consider the block graph $H=G-y$. As in Lemma \ref{lem:Level2cliqueSimplicialVertex}, $rad(G)\leq rad(H)$. Hence,
\[rad(G)\leq rad(H)\leq 1+\frac{3}{2}\cdot dc(H)\leq 1+\frac{3}{2}\cdot dc(G).\]

Thus, we can assume now that all non-pendant cliques (that is, all cliques of level at least two) contain only cut-vertices. If there is a pendant clique with at least two simplicial vertices, then consider the block graph $H$ obtained from $G$ by removing one of the simplicial vertices, let name it by $y$.
By Lemma \ref{lem:PendantCliqueAtleast3Vertices}, $rad(G)\leq rad(H)$.
Hence,
\[rad(G)\leq rad(H)\leq 1+\frac{3}{2}\cdot dc(H)\leq 1+\frac{3}{2}\cdot dc(G).\]

Next, we can assume that each pendant clique contains exactly one simplicial vertex. In particular, this means that all pendant cliques in $G$ are isomorphic to $K_2$.
If there is a cut-vertex $w$ in $G$ that is contained in at least two pendant cliques, let us say $J_1=\{w,z_1\}$ and $J_2=\{w,z_2\}$, where $z_1$ and $z_2$ are simplicial vertices, then consider the graph $H=G-z_2$. By Lemma \ref{lem:TwoPendantCliques}, $rad(G)\leq rad(H)$.
Hence,
\[rad(G)\leq rad(H)\leq 1+\frac{3}{2}\cdot dc(H)\leq 1+\frac{3}{2}\cdot dc(G).\]

Thus, we can also assume that each cut-vertex of $G$ is contained in at most one pendant clique. Let $Q$ be a clique of level exactly two. Observe that it has at most one vertex $z$ that is contained in another non-pendant clique. Note that there is at least one cut-vertex $x$ in $Q$, except $z$. Since $Q$ contains no simplicial vertices, we have that all vertices of $Q$, maybe except $z$, are contained in exactly one pendant clique. 
If $|Q|\geq 3$, then let $x$ be any cut-vertex in $Q$ different from $z$. Let $J=\{x, x_J\}$ be the pendant clique containing $x$. Define $H=G-x_J$. 
By Lemma \ref{lem:Level2cliqesAtleast3Vertices}, $rad(G)\leq rad(H)$.
Hence,
\[rad(G)\leq rad(H)\leq 1+\frac{3}{2}\cdot dc(H)\leq 1+\frac{3}{2}\cdot dc(G).\]
Thus, we can assume that $Q=K_2$. 

If $G$ contains no level 3 clique, then it is easy to see that $rad(G)=2$ and $dc(G)=1$. Thus, our inequality is true for this case. This means, that we can assume that $G$ contains at least one level 3 clique $R$. If $R$ is a clique of size at least 3, we use the following reasoning. Note, since $R$ is a clique of level 3, by the definition, there is at most one cut-vertex $z_R$ of $R$ that may be contained in cliques of level at least 4. The rest of vertices (that is, all except $z_R$), which are cut-vertices (internal cliques do not contain simplicial vertices), are contained in a level 1 or a level 2 clique. Moreover, since $R$ is of level 3, at least one of its cut-vertices is contained in a level 2 clique $Q$ which in turn is adjacent to a pendant clique, by the definition of $Q$.
If one of these cut-vertices $w\neq z_R$ of $R$ is contained in a pendant clique $K_2=wz$, then consider the graph $H=G-z$. Let $u$ be a vertex of $H$ such that $rad(H)=\epsilon_H(u)$. Then $\epsilon_H(u)= \epsilon_G(u)$, as for any shortest $u-y$ path of $G$ there is a path of the same length that misses $y$. Hence,
\[rad(G)\leq \epsilon_G(u)=\epsilon_H(u)=rad(H),\]
and, therefore,
\[rad(G)\leq rad(H)\leq 1+\frac{3}{2}\cdot dc(H)\leq 1+\frac{3}{2}\cdot dc(G).\]
Next, if one of these cut-vertices $w$ (that is, anyone except $z_R$) is contained in a level 2 clique $Q'=wz$ and $z$ is contained in a pendant clique $J'=zy$, then consider the graph $H=G-z-y$. Let $u$ be a vertex of $H$ such that $rad(H)=\epsilon_H(u)$. We can assume that $u\notin J$. Thus $\epsilon_H(u)= \epsilon_G(u)$. Hence,
\[rad(G)\leq \epsilon_G(u)=\epsilon_H(u)=rad(H),\]
and, therefore,
\[rad(G)\leq rad(H)\leq 1+\frac{3}{2}\cdot dc(H)\leq 1+\frac{3}{2}\cdot dc(G).\]
Thus, we can assume that the clique $R$ is $K_2$, in particular we can assume that any level 3 clique is isomorphic to $K_2$. Moreover, the cut-vertex $x$, that belongs to both $Q$ and $R$, is of degree two.

Now, let $G$ contain at least two level 3 cliques $R_1$ and $R_2$. 
Let cliques $Q_1, J_1$ and cliques $Q_2, J_2$ be the corresponding level 2 and level 1 cliques corresponding to cliques $R_1$ and $R_2$, respectively. We have
\[(V(J_1)\cup V(Q_1))\cap (V(J_2)\cup V(Q_2))=\emptyset.\]
By Observation \ref{obs:Induced2path}, any $dc(G)$ set $D$ intersects $V(J_1)\cup V(Q_1)$ and $V(J_2)\cup V(Q_2)$. Thus, if we define graph $H$ as $G-V(J_1)-V(Q_1)-V(J_2)-V(Q_2)$, then 
\[dc(H)\leq |D\cap V(H)|\leq |D|-2= dc(G)-2.\]
 Let $u$ be a vertex of $H$ such that $rad(H)=\epsilon_H(u)$. Since any vertex of $H$ is in distance at least 3 to a simplicial vertex of $G-V(H)$ then $\epsilon_G(u)\leq \epsilon_H(u)+3$. Hence,
\[rad(G)\leq \epsilon_G(u)\leq \epsilon_H(u)+3=rad(H)+3,\]
and, therefore,
\[rad(G)\leq rad(H)+3\leq 4+\frac{3}{2}\cdot dc(H)\leq 4+\frac{3}{2}\cdot (dc(G)-2)=1+\frac{3}{2}\cdot dc(G).\]

Thus, we are left with the case when there is exactly one level 3 clique $R$ in $G$. It is not hard to see that the graph $G$ is isomorphic to $P_6$ and the two vertices of $R$ form the center of $G$. Moreover, their eccentricity is 3 and $rad(G)=3$. On the other hand, $dc(G)= 2$. Hence, 
\[rad(G)=3\leq 1+\frac{3}{2}\cdot 2=1+\frac{3}{2}\cdot dc(G)\]
and the proof is complete.
\end{proof}

\begin{remark} The bound presented in the previous theorem is tight for infinitely many block graphs. Let $P_n$ be the path on $n$ vertices. Observe that
\[
    rad(P_n)= \left\lfloor \frac{n}{2} \right\rfloor.
\]
Using Observation \ref{obs:Induced2path}, it can be shown that
\[
dc(P_n)= \left\lfloor \frac{n}{3}\right\rfloor.
\]

Thus, for $n \equiv 2 \mod 6$, we will have $rad(P_n)=\frac{3}{2}\cdot dc(P_n)+1$.
\end{remark}

In Theorem \ref{thm:radiusalphamin}, we have shown that in any connected block graph $G$, $rad(G)\leq \alpha_{\min}(G)$. Thus, one can try to strengthen the result about the parameterization of \textsc{Equitable Coloring} with respect to $\alpha_{\min}(G)$, by showing that it is FPT with respect to $rad(G)$. Unfortunately, it turns out that such a result is unlikely to be true. In \cite{EqParamFPT}, it is shown that \textsc{Equitable Coloring} is W[1]-hard with respect to $diam(G)$ - the diameter of $G$ (cf. Theorem \ref{diam4}), for block graphs. Since in any graph $G$, not necessarily block graph,
\[rad(G)\leq diam(G)\leq 2\cdot rad(G),\]
from Lemma \ref{lem:Redparam}, we have that $diam(G)$ and $rad(G)$ are equivalent from the perspective of FPT. Thus, \cite{EqParamFPT} implies that equitable coloring is unlikely to be FPT with respect to $rad(G)$ even when the input is restricted to block graphs.

%\section{Theoretical investigation and main results}
%\label{sec:main}

%In this section, we present our main results.

\section*{Conclusion}
\label{sec:conclusion}

In this paper, we discussed the problem of \textsc{Equitable Coloring} of block graphs with respect to many different parameters. Our research completes the approach given in \cite{iterated, fiala, gomes:structural, EqParamFPT}. We presented some graph theoretic results that relate various parameters in block graphs. We also discussed algorithmic implications of these results.

Many parameters still remain open for the problem of \textsc{Equitable Coloring}. %Gomes et al. 
\cite{gomes:structural} depicted as an open case the problem of \textsc{Equitable Coloring} with such parameters as feedback edge set and feedback vertex set with maximum degree of an arbitrary graph. Hence, the further considerations over \textsc{Equitable Coloring} with respect to different parameters is still desirable, both for general and particular graph classes.

\section*{Acknowledgement} We would like to thank our anonymous referees for very careful reading of the manuscript and many insightful comments and suggestions that helped us to improve the presentation of the paper.

\nocite{*}
\bibliographystyle{abbrvnat}
% use the following instead if you encounter problems 
%\bibliographystyle{alpha}
%\bibliography{sampledmtcs}
%\label{sec:biblio}

\end{document}